\title{Crossing Number for Graphs with Bounded~Pathwidth%
\footnote{This work has been started at the Crossing Number Workshop 2016 in Strobl (Austria). Research of T.B. supported by NSERC. Research of M.D. supported by NSERC Vanier CSG. Research of M.C. partially supported by the German Science Foundation (DFG), project CH 897/2-1. Research of P.M. partially supported by the DFG within the SFB 876 (project A6).}}
\author[1]{Therese Biedl}
\author[2]{Markus Chimani}
\author[1]{Martin Derka}
\author[3]{Petra~Mutzel}
\affil[1]{Dept.\ of Computer Science,
University of Waterloo, Canada\\
  \texttt{\{biedl,mderka\}@uwaterloo.ca}}
\affil[2]{Dept.\ of Computer Science,
Universit{\"a}t Osnabr\"uck, Germany\\
  \texttt{markus.chimani@uni-osnabrueck.de}}
  \affil[3]{Dept.\ of Computer Science,
Technische Universit{\"a}t Dortmund, Germany\\
  \texttt{petra.mutzel@cs.tu-dortmund.de}}
\authorrunning{T. Biedl, M. Chimani, M. Derka, and P. Mutzel}
\subjclass{F.2.2 Nonnumerical Algorithms and Problems}
\keywords{Crossing Number, Graphs with Bounded Pathwidth}
\newcommand{\lost}[1]{\ensuremath{x_{#1}^-}}
\newcommand{\emerge}[1]{\ensuremath{x_{#1}^+}}
\theoremstyle{plain}
\newtheorem{observation}[theorem]{Observation}
\newtheorem{claim}[theorem]{Claim}
\newcommand{\ncr}{\mathit{cr}}
\newcommand{\rcr}{\overline{\mathit{cr}}}
\newcommand{\w}{{\bf{w}}}
\newcommand{\biclique}{B}
\newcommand{\extbiclique}{\mathcal{B}}
\begin{document}

\maketitle

\begin{abstract}
The crossing number is the smallest number of pairwise edge crossings when drawing a graph into the plane.
There are only very few graph classes for which the exact crossing number is known or for which there at 
least exist constant approximation ratios. Furthermore, up to now, general crossing number computations have 
never been successfully tackled using bounded width of graph decompositions, like treewidth or pathwidth.

In this paper, we for the first time show that crossing number is tractable (even in linear time) for 
maximal graphs of bounded pathwidth~3. The technique also shows that the crossing number and the 
rectilinear (a.k.a.\ straight-line) crossing number are identical for this graph class, and that we require only an $O(n)\times O(n)$-grid
to achieve such a drawing.

Our techniques can further be extended to devise a 2-approximation for general graphs with pathwidth 3, and a 
$4\w^3$-approximation for maximal graphs of pathwidth $\w$. 
This is a constant approximation for bounded pathwidth graphs.
 \end{abstract}

\section{Introduction}
The crossing number $\ncr(G)$ is the smallest number of pairwise edge-crossings over all possible drawings of a graph $G$ into the plane.
Despite decades of lively research, see e.g. \cite{vrto,schaefer}, even most seemingly simple questions, such as the crossing number of 
complete or complete bipartite graphs, are still open, cf.~\cite{panrichter}. 
There are only very few graph classes, e.g., Petersen graphs $P(3,n)$ or Cartesian products of small graphs with paths or 
trees, see~\cite{RichterPeter,Klesc,Bokal}, for which the crossing number is known or can be efficiently computed.
Considering approximations, we know that computing $\ncr(G)$ is APX-hard~\cite{CabelloAPX}, i.e., there does not exist a PTAS (unless P = NP). 
The best known 
approximation ratio for general graphs with bounded maximum degree is $\tilde{O}(n^{0.9})$~\cite{Chuzhoy}. 
We only know constant approximation 
ratios for special graph classes. In fact, all known constant approximation ratios are based on one of three concepts: \emph{Topology-based}
approximations require that $G$ can be embedded without crossings on a surface of some fixed or bounded genus~\cite{ghls,torusHS,surfaceApprox}.
\emph{Insertion-based} approximations assume that there is only a small (i.e., bounded size) subset of graph elements whose removal 
leaves a planar graph~\cite{CM11,apex,TighterInsertion,exactmei}. 
In either case, the ratios are constant only if we further assume bounded maximum degree.
Finally, some approximations for the crossing number exist if the graph is \emph{dense}~\cite{cit:fox-pach-suk}.

While treewidth and pathwidth have been very successful tools in many graph algorithm scenarios,
they have only very rarely been applied to crossing number: Since general crossing number seems not to be describable with second order monadic logic,
Courcelle's result~\cite{Courcelle1990} regarding treewidth-based tractability can only be applied if $\ncr$ itself is bounded~\cite{G,KR}.
The related strategy  of ``planar decompositions'' lead to linear crossing number bounds~\cite{WT}.

\subparagraph*{Contribution.}
In this paper, we for the first time show that such graph decompositions, in our case pathwidth, \emph{can} be used for computing crossing number. We show
for maximal graphs $G$ of pathwidth 3 (see Section~\ref{sec:max3PW}):
\begin{itemize}
\item We can compute the \emph{exact} crossing number $\ncr(G)$ in linear time.
\item The topological $\ncr(G)$ equals the \emph{rectilinear} crossing number $\rcr(G)$, i.e., the crossing number under the restriction that
 all edges need to be drawn as straight lines. 
\item We can compute a drawing realizing $\rcr(G)$ on an $O(n)\times O(n)$-grid.\vphantom{\raisebox{-2mm}{q}} 
\end{itemize}
We then generalize these techniques to show:
\begin{itemize}
\item A $2$-approximation for $\ncr(G)$ and $\rcr(G)$ for general graphs of pathwidth 3, see Section~\ref{sec:Alg3}. 
\item A $4\w^3$-approximation for $\ncr(G)$ for maximal graphs of pathwidth $\w$, see Section~\ref{sec:higherPW}.
This can be achieved by placing vertices and bend points on a $4n \times \w n$ grid. 
\end{itemize}
Observe that in contrast to most previous results, these approximation ratios are \emph{not} dependent on the graph's maximum degree.
As a complementary side note, we show (in the full version of the paper, see \cite{BiedlCDM16}) 
that the \emph{weighted} (possibly rectilinear) crossing number is weakly NP-hard already for maximal graphs with pathwidth~$4$.

Focusing on graphs with bounded pathwidth may seem very restrictive, but in some sense these
are the most interesting graphs for crossing minimization because Hlin\v{e}n\'{y} showed that
crossing-number critical graphs have bounded pathwidth~\cite{cit:petr}.

\section{Preliminaries}\label{sec:Preliminaries}
We always consider a simple undirected graph $G$ with $n$ vertices as our input. 
A drawing of $G$ is a mapping $\varphi$ of vertices and edges to points and simple curves in the plane, respectively. The curve $\varphi(e)$ of an edge $e=(u,v)$
does not pass through any point $\varphi(w)$, $w\in V(G)$, but has its ends at $\varphi(u)$ and $\varphi(v)$.
When asking for a crossing minimum drawing of $G$, we can restrict ourselves to \emph{good} drawings,
which means that adjacent edges do not cross, non-adjacent edges cross at most once, and no three edges cross at the same point of the drawing.
For other drawings, straightforward redrawing arguments, see e.g.\ \cite{schaefer}, show that the crossing number can never increase when 
establishing these properties.

A {\em clique} is a complete graph and a {\em biclique} is a complete bipartite graph.
While the exact crossing number is unknown for general cliques and bicliques, there are upper bound constructions, conjectured to attain the optimal value.
In particular the old construction due to Zarankiewicz, attaining $\lfloor\frac{n_1}2\rfloor\lfloor\frac{n_1-1}2\rfloor\lfloor\frac{n_2}2\rfloor\lfloor\frac{n_2-1}2\rfloor$ crossings for $K_{n_1,n_2}$, is known to give the optimum for $n_1\leq 6$~\cite{Kleitman}.

A prominent variant of the traditional (``topological'') crossing number $\ncr(G)$ is the \emph{rectilinear} crossing number $\rcr(G)\geq \ncr(G)$, sometimes also 
known as geometric or straight-line crossing number. Thereby, edges are required to be drawn as straight line segments without any bends. Interestingly,
while we know  $\rcr(G)> \ncr(G)$ in general (e.g., already for complete graphs),
Zarankiewicz's construction is a straight-line drawing, suggesting that maybe $\ncr(G)=\rcr(G)$ for bicliques.

\subparagraph*{Alternating path decompositions and clusters.}

There are several equivalent definitions of pathwidth; we use here the
one based on tree decompositions, see e.g.~\cite{KloksBook}.
A \emph{path decomposition} ${\cal P}$ of a connected graph $G$ consists of a finite set of {\em bags} $\{X_i\mid  1 \leq i \leq \xi \in \mathbb{N}\}$,
where each bag is a subset of the vertices of $G$, such that for every edge $(v,w)$ at least
one bag contains both $v$ and $w$, and for every vertex $v$ of $G$ the set of bags containing $v$
forms an interval (i.e., the underlying graph formed by the bags is a path).  The indexing of the bags
gives a total ordering and we may speak of \emph{first}, \emph{last}, \emph{preceding}, and
\emph{succeeding} bags.
The \emph{width} of a path decomposition is the maximum cardinality of a bag minus one, i.e., $\max_{1 \leq i \leq \xi}|X_i|-1$.
The \emph{pathwidth} $\w := \w(G)$ of $G$ is the smallest width that can be achieved by a path decomposition of $G$.
A \emph{maximal pathwidth-$\w$ graph} is a graph of pathwidth $\w$ for which adding any edge increases its pathwidth.  In particular, this implies that the 
vertices in each bag form a clique. 
We assume that $ n > \w+1$; otherwise $G$ is a clique and the crossing number
is 0 for $\w=3$ and easily approximated within a factor of $O(1)$ for bigger~$\w$ (e.g., via the crossing lemma \cite{Leighton1983}).

Several additional constraints can be imposed on the bags and the path decomposition without affecting
the required width.  We use a variant of a \emph{nice} path decomposition that we call an \emph{alternating}
path decomposition~(see Fig.~\ref{fi:pathcaterpillar}); one can easily show that such a decomposition exists:
\begin{itemize}
\item There are exactly $\xi = 2n-2\w-1$ bags.  
\item $|X_i|=\w+1$ if $i$ is odd and $|X_i|=\w$ if $i$ is even.
\item For any even $1 < i< \xi$, we have $X_{i-1} \supset X_{i} \subset X_{i+1}$.
\end{itemize}
Note that for any odd $i$ there is exactly one vertex $v$ that is in $X_{i}$ 
but not in bag $X_{i+1}$. We say that $v$ is {\em forgotten}
by bag $X_{i+1}$.
Similarly, bag $X_{i}$ contains exactly one vertex $v$
that was not in bag $X_{i-1}$. We say that $v$ is {\em introduced}
by bag $X_{i}$.
We define the {\em age-order} $\{v_1,\dots,v_n\}$ of the vertices of $G$ as 
follows:  
$v_1$ is forgotten by $X_{2}$;
$v_2,\dots,v_{\w+1}$ are the other vertices of bag $X_{1}$ in arbitrary 
order.  The order of the remaining vertices corresponds to the order of the bags by which
they are introduced. 
We say that $v_i$ is {\em older} than $v_j$ if $i<j$, so the three oldest
vertices are $v_1,v_2,v_3$.  Note that we can choose $v_2,v_3$ arbitrarily
among $X_1-\{v_1\}$.  In particular, if two vertices $p,q\in X_1$ are specified,
then we can ensure that they are among the three oldest; this will be exploited
in Section~\ref{sec:biconn}.

\begin{figure}[bt]
    \centering
        \def\svgwidth{6cm} 
\begingroup%
  \makeatletter%
  \providecommand\color[2][]{%
    \errmessage{(Inkscape) Color is used for the text in Inkscape, but the package 'color.sty' is not loaded}%
    \renewcommand\color[2][]{}%
  }%
  \providecommand\transparent[1]{%
    \errmessage{(Inkscape) Transparency is used (non-zero) for the text in Inkscape, but the package 'transparent.sty' is not loaded}%
    \renewcommand\transparent[1]{}%
  }%
  \providecommand\rotatebox[2]{#2}%
  \ifx\svgwidth\undefined%
    \setlength{\unitlength}{245.20698242bp}%
    \ifx\svgscale\undefined%
      \relax%
    \else%
      \setlength{\unitlength}{\unitlength * \real{\svgscale}}%
    \fi%
  \else%
    \setlength{\unitlength}{\svgwidth}%
  \fi%
  \global\let\svgwidth\undefined%
  \global\let\svgscale\undefined%
  \makeatother%
  \begin{picture}(1,0.46409438)%
    \put(0,0){\includegraphics[width=\unitlength]{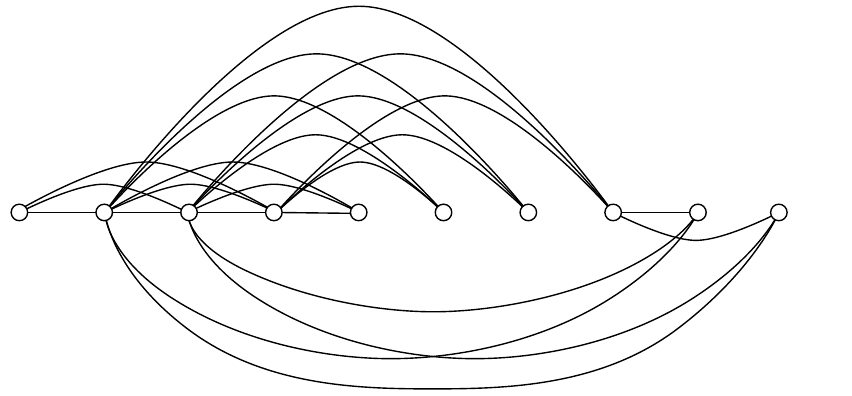}}%
    \put(0.00437181,0.16446146){\color[rgb]{0,0,0}\makebox(0,0)[lb]{\smash{$1$}}}%
    \put(0.09393146,0.1590049){\color[rgb]{0,0,0}\makebox(0,0)[lb]{\smash{$2$}}}%
    \put(0.19065245,0.1590049){\color[rgb]{0,0,0}\makebox(0,0)[lb]{\smash{$3
$}}}%
    \put(0.31065983,0.1590049){\color[rgb]{0,0,0}\makebox(0,0)[lb]{\smash{$4$}}}%
    \put(0.40464776,0.15917615){\color[rgb]{0,0,0}\makebox(0,0)[lb]{\smash{$5$}}}%
    \put(0.50507989,0.15917615){\color[rgb]{0,0,0}\makebox(0,0)[lb]{\smash{$6$}}}%
    \put(0.60854617,0.15927402){\color[rgb]{0,0,0}\makebox(0,0)[lb]{\smash{$7$}}}%
    \put(0.80026339,0.2470351){\color[rgb]{0,0,0}\makebox(0,0)[lb]{\smash{$9$}}}%
    \put(0.89583794,0.2486829){\color[rgb]{0,0,0}\makebox(0,0)[lb]{\smash{$10$}}}%
    \put(0.68937911,0.16446146){\color[rgb]{0,0,0}\makebox(0,0)[lb]{\smash{$8$}}}%
  \end{picture}%
\endgroup%
	\hfill
        \def\svgwidth{6cm} 
\begingroup%
  \makeatletter%
  \providecommand\color[2][]{%
    \errmessage{(Inkscape) Color is used for the text in Inkscape, but the package 'color.sty' is not loaded}%
    \renewcommand\color[2][]{}%
  }%
  \providecommand\transparent[1]{%
    \errmessage{(Inkscape) Transparency is used (non-zero) for the text in Inkscape, but the package 'transparent.sty' is not loaded}%
    \renewcommand\transparent[1]{}%
  }%
  \providecommand\rotatebox[2]{#2}%
  \ifx\svgwidth\undefined%
    \setlength{\unitlength}{692.37124023bp}%
    \ifx\svgscale\undefined%
      \relax%
    \else%
      \setlength{\unitlength}{\unitlength * \real{\svgscale}}%
    \fi%
  \else%
    \setlength{\unitlength}{\svgwidth}%
  \fi%
  \global\let\svgwidth\undefined%
  \global\let\svgscale\undefined%
  \makeatother%
  \begin{picture}(1,0.36931301)%
    \put(0,0){\includegraphics[width=\unitlength]{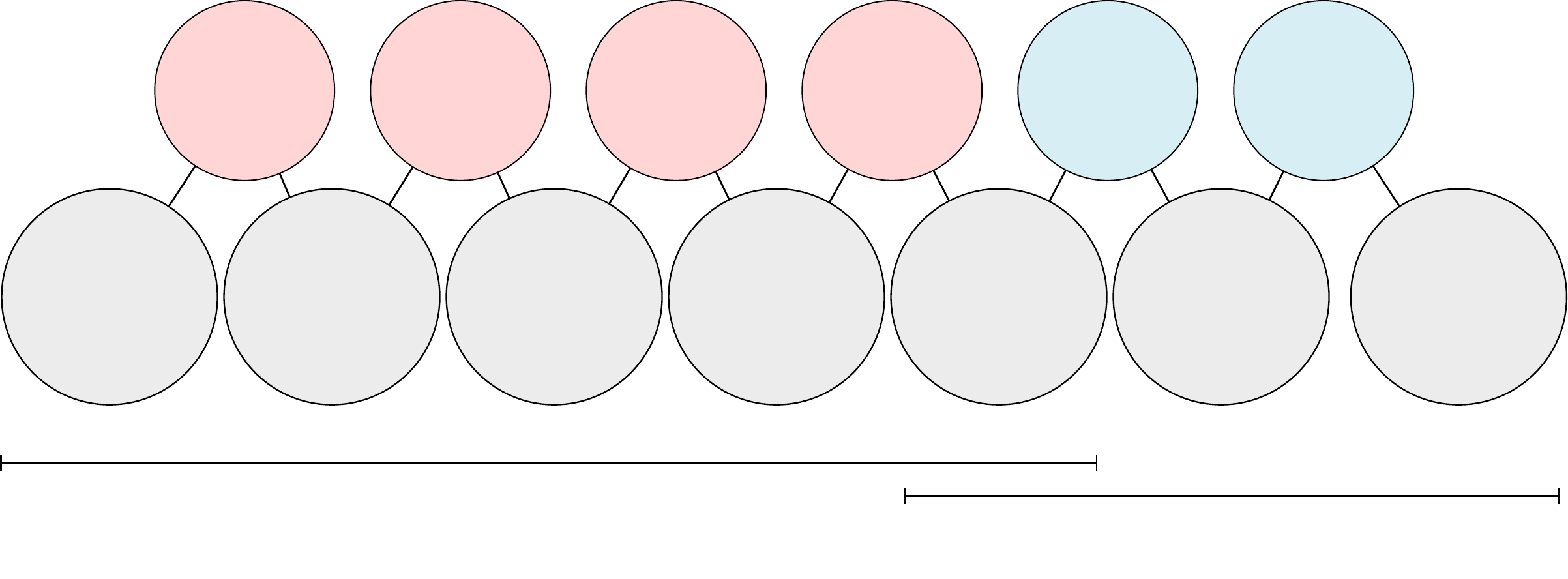}}%
    \put(0.1439997,0.31516759){\color[rgb]{0,0,0}\makebox(0,0)[lb]{\smash{$4$}}}%
    \put(0.17205401,0.29000000){\color[rgb]{0,0,0}\makebox(0,0)[lb]{\smash{$3$}}}%
    \put(0.11571037,0.29000000){\color[rgb]{0,0,0}\makebox(0,0)[lb]{\smash{$2$}}}%
    \put(0.28168383,0.31516759){\color[rgb]{0,0,0}\makebox(0,0)[lb]{\smash{$4$}}}%
    \put(0.30973813,0.29000000){\color[rgb]{0,0,0}\makebox(0,0)[lb]{\smash{$3$}}}%
    \put(0.2533945,0.29000000){\color[rgb]{0,0,0}\makebox(0,0)[lb]{\smash{$2$}}}%
    \put(0.41936798,0.31516759){\color[rgb]{0,0,0}\makebox(0,0)[lb]{\smash{$4$}}}%
    \put(0.44742228,0.29000000){\color[rgb]{0,0,0}\makebox(0,0)[lb]{\smash{$3$}}}%
    \put(0.39107865,0.29000000){\color[rgb]{0,0,0}\makebox(0,0)[lb]{\smash{$2$}}}%
    \put(0.55705209,0.31516759){\color[rgb]{0,0,0}\makebox(0,0)[lb]{\smash{$4$}}}%
    \put(0.5851064,0.29000000){\color[rgb]{0,0,0}\makebox(0,0)[lb]{\smash{$3$}}}%
    \put(0.52876276,0.29000000){\color[rgb]{0,0,0}\makebox(0,0)[lb]{\smash{$2$}}}%
    \put(0.69473629,0.31516759){\color[rgb]{0,0,0}\makebox(0,0)[lb]{\smash{$3$}}}%
    \put(0.72279059,0.29000000){\color[rgb]{0,0,0}\makebox(0,0)[lb]{\smash{$8$}}}%
    \put(0.66644696,0.29000000){\color[rgb]{0,0,0}\makebox(0,0)[lb]{\smash{$2$}}}%
    \put(0.83242034,0.31516759){\color[rgb]{0,0,0}\makebox(0,0)[lb]{\smash{$3$}}}%
    \put(0.86047464,0.29000000){\color[rgb]{0,0,0}\makebox(0,0)[lb]{\smash{$8$}}}%
    \put(0.80413101,0.29000000){\color[rgb]{0,0,0}\makebox(0,0)[lb]{\smash{$2$}}}%
    \put(0.08236269,0.16750552){\color[rgb]{0,0,0}\makebox(0,0)[lb]{\smash{$1$}}}%
    \put(0.05430839,0.18585191){\color[rgb]{0,0,0}\makebox(0,0)[lb]{\smash{$4$}}}%
    \put(0.02601906,0.16750559){\color[rgb]{0,0,0}\makebox(0,0)[lb]{\smash{$2$}}}%
    \put(0.05424069,0.13500000){\color[rgb]{0,0,0}\makebox(0,0)[lb]{\smash{$3$}}}%
    \put(0.36605019,0.16750552){\color[rgb]{0,0,0}\makebox(0,0)[lb]{\smash{$6$}}}%
    \put(0.33799589,0.18585191){\color[rgb]{0,0,0}\makebox(0,0)[lb]{\smash{$4$}}}%
    \put(0.30970656,0.16750559){\color[rgb]{0,0,0}\makebox(0,0)[lb]{\smash{$2$}}}
    \put(0.33792818,0.13500000){\color[rgb]{0,0,0}\makebox(0,0)[lb]{\smash{$3$}}}%
    \put(0.22420645,0.16750552){\color[rgb]{0,0,0}\makebox(0,0)[lb]{\smash{$5$}}}%
    \put(0.19615215,0.18585191){\color[rgb]{0,0,0}\makebox(0,0)[lb]{\smash{$4$}}}%
    \put(0.16786282,0.16750559){\color[rgb]{0,0,0}\makebox(0,0)[lb]{\smash{$2$}}}%
    \put(0.19608445,0.13500000){\color[rgb]{0,0,0}\makebox(0,0)[lb]{\smash{$3$}}}%
    \put(0.93313101,0.16750552){\color[rgb]{0,0,0}\makebox(0,0)[lb]{\smash{$10$}}}%
    \put(0.90507671,0.18585191){\color[rgb]{0,0,0}\makebox(0,0)[lb]{\smash{$3$}}}%
    \put(0.87678738,0.16750559){\color[rgb]{0,0,0}\makebox(0,0)[lb]{\smash{$2$}}}%
    \put(0.905009,0.13500000){\color[rgb]{0,0,0}\makebox(0,0)[lb]{\smash{$8$}}}%
    \put(0.79158148,0.16750552){\color[rgb]{0,0,0}\makebox(0,0)[lb]{\smash{$9$}}}%
    \put(0.76352718,0.18585191){\color[rgb]{0,0,0}\makebox(0,0)[lb]{\smash{$3$}}}%
    \put(0.73523785,0.16750559){\color[rgb]{0,0,0}\makebox(0,0)[lb]{\smash{$2$}}}%
    \put(0.76345948,0.13500000){\color[rgb]{0,0,0}\makebox(0,0)[lb]{\smash{$8$}}}%
    \put(0.65435949,0.16750552){\color[rgb]{0,0,0}\makebox(0,0)[lb]{\smash{$8$}}}%
    \put(0.62630519,0.18585191){\color[rgb]{0,0,0}\makebox(0,0)[lb]{\smash{$4$}}}%
    \put(0.59801586,0.16750559){\color[rgb]{0,0,0}\makebox(0,0)[lb]{\smash{$2$}}}%
    \put(0.62623749,0.13500000){\color[rgb]{0,0,0}\makebox(0,0)[lb]{\smash{$3$}}}%
    \put(0.50789391,0.16750552){\color[rgb]{0,0,0}\makebox(0,0)[lb]{\smash{$7$}}}%
    \put(0.47983961,0.18585191){\color[rgb]{0,0,0}\makebox(0,0)[lb]{\smash{$4$}}}%
    \put(0.45155028,0.16750559){\color[rgb]{0,0,0}\makebox(0,0)[lb]{\smash{$2$}}}%
    \put(0.4797719,0.13500000){\color[rgb]{0,0,0}\makebox(0,0)[lb]{\smash{$3$}}}%
    \put(0.291751,0.01359961){\color[rgb]{0,0,0}\makebox(0,0)[lb]{\smash{$C_1$}}}%
    \put(0.7527385,0.00376098){\color[rgb]{0,0,0}\makebox(0,0)[lb]{\smash{$C_{2=\kappa}$}}}%
  \end{picture}%
\endgroup%
     \caption{(left) A graph, with vertices in age order according to $\mathcal{P}$. (right) Its alternating path decomposition $\mathcal{P}$ of width 3, with two clusters:
     $C_1$ has $T(C_1)=\{2,3,4\}$, and consists of all bags containing this anchor-triplet. Analogously, we have $T(C_2)=\{2,3,8\}$.
In $C_1$,
the lost vertex is $\lost{1}=1$ and the emerging vertex is $\emerge{1}=8$.}
\label{fi:pathcaterpillar}
\end{figure}

In our algorithms and proofs, we will work with special subsets of bags called {\em clusters}.
Let $G$ be a connected graph of pathwidth $3$ with an alternating path decomposition $\mathcal{P} = \{X_i\}_{1 \leq i \leq \xi}$.
Consider a set of three vertices $Y$ that constitute at least one bag (this bag
has an even index). There can be several such bags with exactly those vertices, 
but all bags containing $Y$ are consecutive. For any such $Y$, we define 
a \emph{cluster} $C$ as the maximal consecutive set of
bags that all contain $Y$. We say that $T(C) := Y$ is the {\em anchor-triplet} of $C$.
Any cluster has at least 3 bags. They alternate between size $4$ and $3$, starting and ending with 
size-$4$ bags. Two consecutive clusters overlap in exactly one bag (which consequently has size $4$).
The order of the bags induces a unique order of the clusters $\{C_1,\dots,C_\kappa\} =: \mathcal{C}$.

Note that a cluster $C$ can be described as a set of bags, or by its
anchor-triplet.
Denote the vertices that appear in the union of bags of $C$ by $V(C)$, and let $n(C) := |V(C)|$.
The following observation is trivial (because any vertex 
of the anchor-triplet of $C$ belongs to all bags of $C$) but crucial
for our analysis.

\begin{observation}
\label{obs:biclique}
Let $G$ be a maximal pathwidth-$3$ graph and let $C$ be a cluster.  
Then the graph induced by $V(C)$ 
consists of the triangle induced by $T(C)$ and (edge-disjoint)
a biclique $K_{3,n(C)-3}$ with one partition being $T(C)$.
\end{observation}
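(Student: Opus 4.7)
The point to pin down first is that in any maximal pathwidth-$\w$ graph every bag induces a clique: any missing edge within a bag could be added without altering the decomposition, contradicting maximality. Since $T(C)$ is contained in a bag of $C$, this already gives the triangle on $T(C)$.

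Next I would unravel the bag structure of $C$. Writing $C=\{X_p,X_{p+1},\dots,X_q\}$, the alternating property forces the sizes to alternate $4,3,4,3,\dots,4$, and since every bag of $C$ contains the triplet $T(C)$, the size-$3$ bags equal $T(C)$ exactly while each size-$4$ bag has the form $X_i=T(C)\cup\{u_i\}$ for some extra vertex $u_i$. These $u_i$'s therefore cover $V(C)\setminus T(C)$, and because each $u_i$ shares the clique-bag $X_i$ with all of $T(C)$, we immediately obtain every edge of the complete bipartite graph $K_{3,n(C)-3}$ with parts $T(C)$ and $V(C)\setminus T(C)$.

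Third, and this is the step I expect to require the most care, I would show that no edge of $G$ joins two vertices of $V(C)\setminus T(C)$; equivalently, that the intervals of distinct $u_i$'s across $\mathcal{P}$ are pairwise disjoint, so that they cannot co-appear in any bag. For an interior odd index $p<i<q$, both neighbouring bags $X_{i-1},X_{i+1}$ lie in $C$, have size $3$, and are contained in $X_i$; hence they both equal $T(C)$, and the interval of $u_i$ collapses to the single bag $X_i$. The boundary vertex $u_p$ is forgotten at $X_{p+1}=T(C)$, so its interval is contained in $\{X_1,\dots,X_p\}$; symmetrically $u_q$'s interval lies in $\{X_q,\dots,X_\xi\}$. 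Since $p<q$ (every cluster has at least three bags), these ranges are pairwise disjoint, no bag holds two distinct $u_i$'s, and the path-decomposition axiom rules out any edge between them. Combining the three pieces yields $G[V(C)]$ as the triangle on $T(C)$ together with the biclique described, and the two edge sets are edge-disjoint because the triangle's edges lie inside $T(C)$ while the biclique's edges cross the bipartition.
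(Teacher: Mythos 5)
Your proposal is correct and follows essentially the same route the paper takes, which dismisses the observation as trivial precisely because every vertex of $T(C)$ lies in all bags of $C$ and maximality makes each bag a clique; you merely spell out the details, including the (correct) interval-disjointness argument showing that no edge joins two vertices of $V(C)\setminus T(C)$. Nothing is missing.
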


We define the {\em emerging vertex} of $C_i$, denoted by $\emerge{i}$, as
the vertex introduced by the last bag of $C_i$.
Note that $\emerge{i}$ belongs to the anchor-triplet
of the next cluster $C_{i+1}$ if $i < \kappa$.  
We define the {\em lost vertex} of $C_i$, denoted by $\lost{i}$,
as the vertex that was forgotten by the
second bag of $C_i$. Note that $\lost{i}$ belongs to the anchor-triplet
of the previous cluster $C_{i-1}$ if $i>1$, but not to the anchor-triplet of $C_i$.
Observe that $\lost{1} = v_1$, $\emerge{\kappa} = v_n$, $\emerge{i-1} \neq \lost{i}$ and 
$T(C_i)=T(C_{i-1}) \cup \{\emerge{i-1}\} \setminus \{\lost{i}\}$
for all $2 \leq i \leq \kappa$. For notational simplicity, we define $\emerge{0} := v_2$. 
Any vertex $x$ that belongs to $C_i$ but is
not in $T(C_i)\cup \{\emerge{i},\lost{i}\}$ is called a {\em singleton} of $C_i$.
Vertex $x$ belongs to a ``middle'' bag of $C_i$ and only appears in this bag; it belongs to no cluster other than $C_i$.
See Fig.~\ref{fi:pathcaterpillar} for an example.

\section{Exact Algorithm for Maximal Pathwidth-3 Graphs}\label{sec:AlgMax3}
\label{sec:max3PW}
\label{sec:maxPW3}

Let $G$ be a maximal pathwidth-$3$ graph and fix
an alternating path decomposition of width~$3$. By maximality,
all bags form cliques, and in particular, each anchor-triplet induces a triangle in the graph, called
{\em anchor triangle} consisting of \emph{anchor edges}.

The general idea to draw $G$ is to iterate through the 
clusters $C_1,\dots,C_\kappa$. 
When considering cluster $C_i$, its first bag will already be drawn and the anchor triangle will form the outer face of the current drawing.  About half
of the vertices introduced by $C_i$ will be drawn inside the anchor triangle
while the other half will be drawn outside, mimicking Zarankiewicz' construction locally.  
The number of crossings that these vertices
add will be exactly the minimum number of crossings needed to draw the
biclique $K_{3,n(C_i)-3}$ of cluster $C_i$, hence leading to an optimal
drawing.



We start with drawing bag $X_1 = \{v_1,v_2,v_3,v_4\}$  as a planar drawing of $K_4$ with the 
vertices $T(C_1) = X_2 = \{v_2,v_3,v_4\}$ on the outer face.
 Now we iterate over all clusters $C_i$, $1 \leq i \leq \kappa$, drawing their bags with the following invariants: 
 \begin{itemize}
\item The drawing is good and straight-line.
\item Before drawing $C_i$, the outer face contains the three vertices $T(C_i)$.
\item For any $j \leq i$, the anchor edges of $C_j$ are drawn without crossings.
\end{itemize}
\begin{figure}[t]
	\centering
    \begin{subfigure}[t]{0.30\textwidth}
        \def\svgwidth{\textwidth} 
\begingroup%
  \makeatletter%
  \providecommand\color[2][]{%
    \errmessage{(Inkscape) Color is used for the text in Inkscape, but the package 'color.sty' is not loaded}%
    \renewcommand\color[2][]{}%
  }%
  \providecommand\transparent[1]{%
    \errmessage{(Inkscape) Transparency is used (non-zero) for the text in Inkscape, but the package 'transparent.sty' is not loaded}%
    \renewcommand\transparent[1]{}%
  }%
  \providecommand\rotatebox[2]{#2}%
  \ifx\svgwidth\undefined%
    \setlength{\unitlength}{241.87399902bp}%
    \ifx\svgscale\undefined%
      \relax%
    \else%
      \setlength{\unitlength}{\unitlength * \real{\svgscale}}%
    \fi%
  \else%
    \setlength{\unitlength}{\svgwidth}%
  \fi%
  \global\let\svgwidth\undefined%
  \global\let\svgscale\undefined%
  \makeatother%
  \begin{picture}(1,0.91720068)%
    \put(0,0){\includegraphics[width=\unitlength]{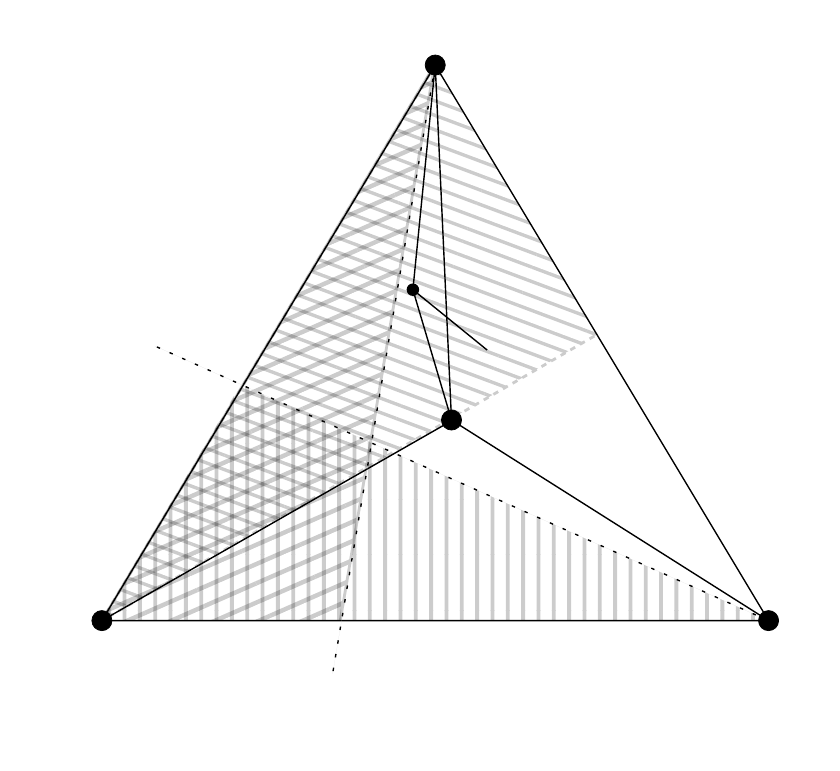}}%
    \put(0.53226929,0.58029219){\color[rgb]{0,0,0}\makebox(0,0)[lb]{\smash{$\mathcal{W}_1$}}}%
    \put(0.26493609,0.2084689){\color[rgb]{0,0,0}\makebox(0,0)[lb]{\smash{$\mathcal{W}_2$}}}%
    \put(0.50395239,0.25177849){\color[rgb]{0,0,0}\makebox(0,0)[lb]{\smash{$\mathcal{W}_3$}}}%
    \put(0.02675359,0.10162697){\color[rgb]{0,0,0}\makebox(0,0)[lb]{\smash{$x_{i-1}^+$}}}%
    \put(0.57835251,0.42098896){\color[rgb]{0,0,0}\makebox(0,0)[lb]{\smash{$x_i^-$}}}%
    \put(0.53632923,0.86084714){\color[rgb]{0,0,0}\makebox(0,0)[lb]{\smash{$p$}}}%
    \put(0.93177471,0.19918029){\color[rgb]{0,0,0}\makebox(0,0)[lb]{\smash{$q$}}}%
  \end{picture}%
\endgroup%
        \caption{Wedges $\mathcal{W}_1$, $\mathcal{W}_2$, and $\mathcal{W}_3$.}
    \end{subfigure}\hfill
    \begin{subfigure}[t]{0.30\textwidth}
        \def\svgwidth{\textwidth} 
\begingroup%
  \makeatletter%
  \providecommand\color[2][]{%
    \errmessage{(Inkscape) Color is used for the text in Inkscape, but the package 'color.sty' is not loaded}%
    \renewcommand\color[2][]{}%
  }%
  \providecommand\transparent[1]{%
    \errmessage{(Inkscape) Transparency is used (non-zero) for the text in Inkscape, but the package 'transparent.sty' is not loaded}%
    \renewcommand\transparent[1]{}%
  }%
  \providecommand\rotatebox[2]{#2}%
  \ifx\svgwidth\undefined%
    \setlength{\unitlength}{241.87399902bp}%
    \ifx\svgscale\undefined%
      \relax%
    \else%
      \setlength{\unitlength}{\unitlength * \real{\svgscale}}%
    \fi%
  \else%
    \setlength{\unitlength}{\svgwidth}%
  \fi%
  \global\let\svgwidth\undefined%
  \global\let\svgscale\undefined%
  \makeatother%
  \begin{picture}(1,0.91720068)%
    \put(0,0){\includegraphics[width=\unitlength]{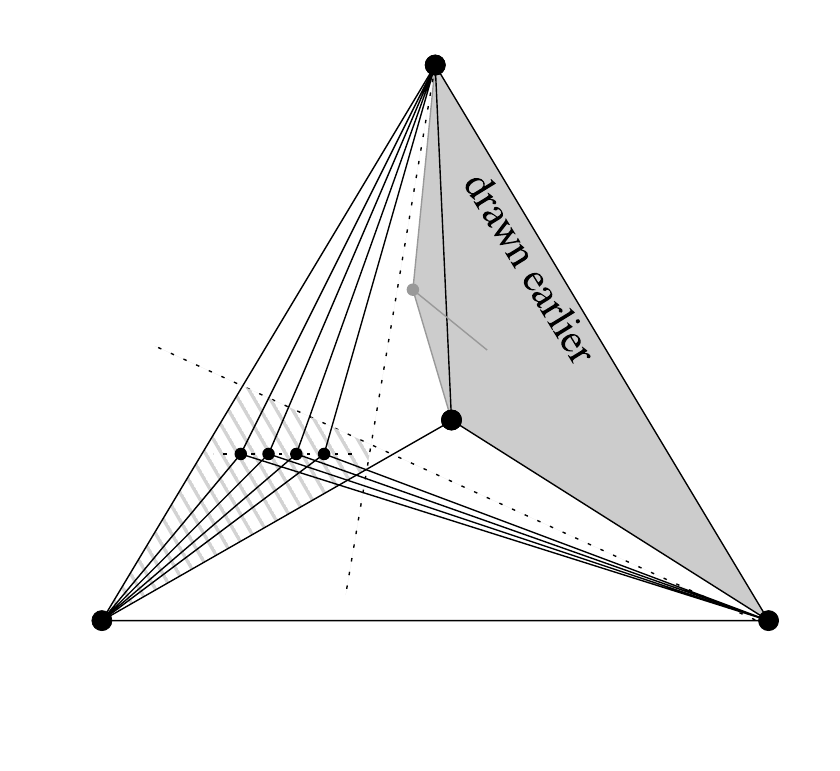}}%
    \put(0.02675359,0.10162684){\color[rgb]{0,0,0}\makebox(0,0)[lb]{\smash{$x_{i-1}^+$}}}%
    \put(0.57835251,0.42098883){\color[rgb]{0,0,0}\makebox(0,0)[lb]{\smash{$x_i^-$}}}%
    \put(0.53632923,0.86084714){\color[rgb]{0,0,0}\makebox(0,0)[lb]{\smash{$p$}}}%
    \put(0.93177484,0.19918016){\color[rgb]{0,0,0}\makebox(0,0)[lb]{\smash{$q$}}}%
  \end{picture}%
\endgroup%
        \caption{Adding $\ell_1$ vertices inside.}
    \end{subfigure}\hfill
    \begin{subfigure}[t]{0.30\textwidth}
        \def\svgwidth{\textwidth} 
\begingroup%
  \makeatletter%
  \providecommand\color[2][]{%
    \errmessage{(Inkscape) Color is used for the text in Inkscape, but the package 'color.sty' is not loaded}%
    \renewcommand\color[2][]{}%
  }%
  \providecommand\transparent[1]{%
    \errmessage{(Inkscape) Transparency is used (non-zero) for the text in Inkscape, but the package 'transparent.sty' is not loaded}%
    \renewcommand\transparent[1]{}%
  }%
  \providecommand\rotatebox[2]{#2}%
  \ifx\svgwidth\undefined%
    \setlength{\unitlength}{241.87399902bp}%
    \ifx\svgscale\undefined%
      \relax%
    \else%
      \setlength{\unitlength}{\unitlength * \real{\svgscale}}%
    \fi%
  \else%
    \setlength{\unitlength}{\svgwidth}%
  \fi%
  \global\let\svgwidth\undefined%
  \global\let\svgscale\undefined%
  \makeatother%
  \begin{picture}(1,0.91720068)%
    \put(0,0){\includegraphics[width=\unitlength]{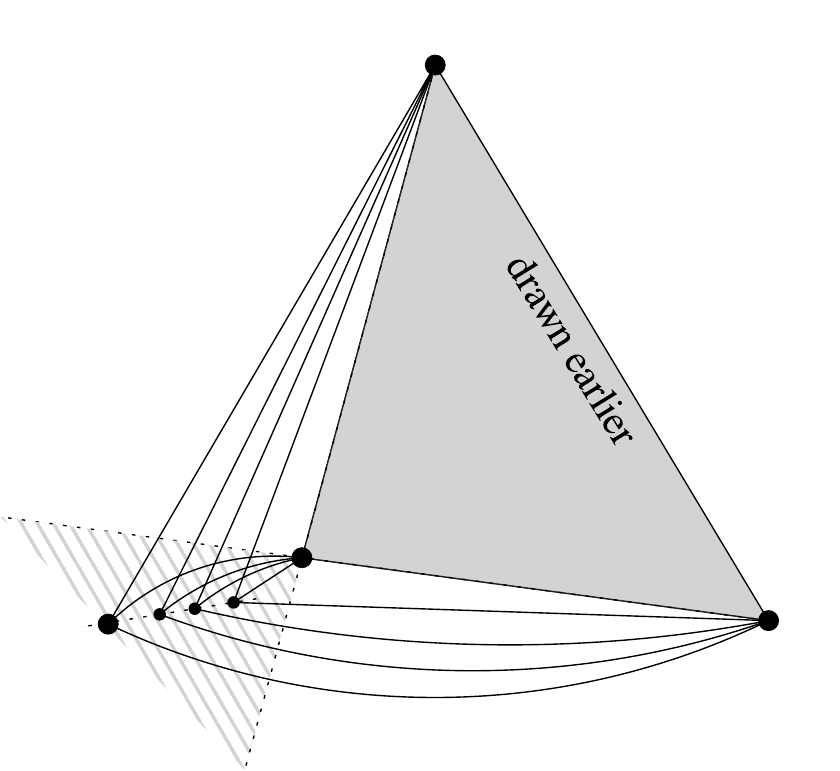}}%
    \put(0.05362671,0.08564875){\color[rgb]{0,0,0}\makebox(0,0)[lb]{\smash{$x_{i}^+$}}}%
    \put(0.40694729,0.28032376){\color[rgb]{0,0,0}\makebox(0,0)[lb]{\smash{$x_{i+1}^-$}}}%
    \put(0.53632923,0.86084714){\color[rgb]{0,0,0}\makebox(0,0)[lb]{\smash{$p'$}}}%
    \put(0.93177471,0.19918029){\color[rgb]{0,0,0}\makebox(0,0)[lb]{\smash{$q'$}}}%
  \end{picture}%
\endgroup%
        \caption{Adding $\ell_2$ vertices outside.}
	\label{fi:drawing}
    \end{subfigure}
	\caption{Drawing maximal pathwidth-$3$ graphs. For ease of legibility
	we draw some edges in (c) slightly curved.  Dotted lines mark boundaries of the
	regions defined in the text.}
\end{figure}
   Let $\ell$ be the number of singleton
vertices in $C_i$ (possibly $\ell=0$). We need to place the $\ell$
singletons and the emerging vertex $x_i^+$.
We will add $\ell_1:=\lfloor (\ell+1)/2 \rfloor\leq \ell$ vertices into an inner face of the current drawing
and $\ell_2=\lceil (\ell+1)/2 \rceil \geq 1$ vertices on the outside.
Note that $\ell_1 + \ell_2 = \ell + 1$.

{\bf Placement on the inside.} By the invariant the outer face consists of the edges connecting $T(C_i) = \{\emerge{i-1},p,q\}$ for some $p,q$.  W.l.o.g.~assume that $\emerge{i-1}$, $p$, and $q$ occur in clockwise order walking along the outer face. 
By maximality, and because $\emerge{i-1}$ has just been introduced, 
$\emerge{i-1}$ has degree 3 in the current
graph, and its neighbors are $p,q,\lost{i}$.  

Let $\mathcal{R}$ be the open region
obtained by the intersection of three open ``wedges'' $\mathcal{W}_1, \mathcal{W}_2, \mathcal{W}_3$ defined as follows:  
Wedge $\mathcal{W}_1$ emanates from $\emerge{i-1}$ between edges
$(\emerge{i-1},p)$ and  $(\emerge{i-1},\lost{i})$ in the interior of the triangle induced by $T(C_i)$.
Wedge $\mathcal{W}_2$ ($\mathcal{W}_3$) emanates at $p$ ($q$) inside of $T(C_i)$
and runs along edge $(p,\emerge{i-1})$ ($(q,\emerge{i-1})$, respectively) with a 
sufficiently small angle such that it crosses only edges incident to 
$\emerge{i-1}$.
Any point inside $\mathcal{R}$ can
be connected to all of $p,q,\emerge{i-1}$ with straight lines and a single crossing (with 
edge $(\emerge{i-1},\lost{i})$).

Consider a straight line $s$ through $\mathcal{R}$ but not through any of $p,q,\emerge{i-1}$.  
Place $\ell_1$ vertices (for $\ell_1$
singletons of $C_i$) along
$s$ within $\mathcal{R}$, and connect each of them to all of $p,q,\emerge{i-1}$. 
All generated crossings are with edge 
$(\emerge{i-1},\lost{i})$ or among the added edges.  The drawing is straight-line and good (no three edges cross in a point), and the 
number of added crossings is
$\ell_1+{\ell_1 \choose 2} = \frac{1}{2}\ell_1(\ell_1+1)$.

{\bf Placement on the outside.} The outer face of the drawing is still formed by the edges connecting $T(C_i)$, since all
vertices from the paragraph above were added inside $\mathcal{R}$ and thus in the interior of $T(C_i)$. 
We know that the vertex $\lost{i+1}$ in $T(C_i)$ will be lost in the next cluster $C_{i+1}$ (if there is any); it will play a prominent role now.
Since we may or may not have $\lost{i+1}=\emerge{i-1}$,
we label the vertices of $T(C_i)$ afresh as $\{\lost{i+1},p',q'\}$.

Define an open wedge $\mathcal{W}$ in the exterior of $T(C_i)$ emanating from $\lost{i+1}$ 
between the extensions of the edges $(p',\lost{i+1})$ and 
$(q',\lost{i+1})$ beyond $\lost{i+1}$.  Any point inside $\mathcal{W}$ can be connected
via straight lines
to all of $p',q',\lost{i+1}$ without any crossings.  
Consider a straight line $s'$ through $\mathcal{W}$, not through any of $\lost{i+1},p',q'$, 
and crossing $(p',q')$.
Now place $\ell_2$ vertices along $s'$ within $\mathcal{W}$, and connect
all of them to all of $\lost{i+1},p',q'$ via straight lines.  
All generated crossings are among the added edges.  The drawing is still straight-line and good, and the 
number of added crossings is
${\ell_2 \choose 2}$. 
The outer face of the resulting drawing is again a triangle with two corners being 
$p'$ and $q'$ and the third corner being a
vertex that was added on $s'$.  We assign this latter vertex the role of the
emerging vertex $\emerge{i}$; the other inserted vertices are the necessary singletons.
With this, the invariant holds since $T(C_{i+1})=T(C_i)
\cup \{\emerge{i}\}
\setminus \{\lost{i+1}\}$.

\medskip

This finishes the description of the drawing algorithm.
We claim that the final drawing has the minimum possible number
of crossings: We first give an upper bound on the number of crossings that
we achieve, and then show that any drawing requires this number.

\begin{lemma}
\label{lem:upper}
The above algorithm produces at most $\sum_{i=1}^\kappa \lfloor \frac{1}{2}(n(C_i)-3)\rfloor  
	\lfloor \frac{1}{2}(n(C_i)-4)\rfloor$ crossings.

\end{lemma}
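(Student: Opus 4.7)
The plan is to count the crossings contributed by each cluster $C_i$ separately, argue that no crossings are created between edges drawn in different iterations, and then compress the per-cluster count into the form in the statement.

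First I would fix a cluster $C_i$ with $\ell$ singletons, so that $n(C_i) = 5+\ell$ (the three vertices of $T(C_i)$, plus $\lost{i}$, $\emerge{i}$, and the $\ell$ singletons). The algorithm description immediately gives two direct crossing counts: the inside placement of $\ell_1 = \lfloor(\ell+1)/2\rfloor$ vertices contributes $\ell_1 + \binom{\ell_1}{2} = \tfrac12 \ell_1(\ell_1+1)$ crossings, and the outside placement of $\ell_2 = \lceil(\ell+1)/2\rceil$ vertices contributes $\binom{\ell_2}{2} = \tfrac12 \ell_2(\ell_2-1)$ crossings.

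The first thing to verify, which is the only non-bookkeeping part, is that these are really all of the new crossings created in iteration $i$, i.e., that none of the new edges crosses an edge drawn in an earlier iteration besides what is already charged above. For the inside placement this uses the fact that the region $\mathcal{R}$ is disjoint from every edge except $(\emerge{i-1},\lost{i})$: wedges $\mathcal{W}_2,\mathcal{W}_3$ are chosen to hug $(p,\emerge{i-1}),(q,\emerge{i-1})$ tightly so they meet only edges incident to $\emerge{i-1}$, and of these only $(\emerge{i-1},\lost{i})$ intersects $\mathcal{W}_1$, since $\emerge{i-1}$ currently has just the three neighbors $p,q,\lost{i}$. For the outside placement I would invoke the invariant that before iteration $i$ the outer face is exactly the triangle $T(C_i)$; therefore no previously drawn edge sits in the exterior of this triangle, so the edges added in $\mathcal{W}$ can only cross each other.

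With these two observations the iteration-$i$ crossing count is exactly
\[
  \tfrac12 \ell_1(\ell_1+1) + \tfrac12 \ell_2(\ell_2-1),
\]
where $\ell_1+\ell_2=\ell+1 = n(C_i)-4$. It remains to show, by splitting on the parity of $n(C_i)$, that this equals $\lfloor\tfrac12(n(C_i)-3)\rfloor\lfloor\tfrac12(n(C_i)-4)\rfloor$. If $n(C_i)-4 = 2k$ then $\ell_1 = \ell_2 = k$ and the sum equals $k^2 = k\cdot k = \lfloor\tfrac12(n(C_i)-3)\rfloor \lfloor\tfrac12(n(C_i)-4)\rfloor$; if $n(C_i)-4 = 2k+1$ then $\ell_1 = k,\ell_2 = k+1$ and the sum equals $k(k+1)$, which again matches. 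Summing the per-cluster bound over $i=1,\dots,\kappa$ gives the claimed total.

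The main obstacle is not the arithmetic but making the disjointness argument tight enough: one has to be careful that the edge $(\emerge{i-1},\lost{i})$, which was drawn in a previous iteration, is correctly accounted as the unique ``old'' edge crossed by the $\ell_1$ new vertices, and that the interior wedge $\mathcal{W}_1$ sits entirely in one of the three sub-faces at $\emerge{i-1}$ (so no other earlier edge can be crossed). The invariants explicitly maintained by the algorithm --- outer face $=T(C_i)$, anchor edges drawn planar, straight-line and good drawing --- deliver exactly this, so once they are quoted the lemma reduces to the elementary parity computation above.
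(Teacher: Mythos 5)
Your proposal is correct and follows essentially the same route as the paper: per-iteration counts of $\tfrac12\ell_1(\ell_1+1)$ inside and $\binom{\ell_2}{2}$ outside (justified by the wedge/outer-face invariants), the identity $\ell = n(C_i)-5$, and the parity check showing the sum equals $\lfloor\tfrac12(n(C_i)-3)\rfloor\lfloor\tfrac12(n(C_i)-4)\rfloor$. The paper merely delegates the disjointness verification to the algorithm's description and states the floor identity without the case split; you have spelled both out, which is fine.
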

\begin{proof}
The algorithm started with a planar drawing of $K_4$. We argued above that the $i$-th
iteration (drawing $C_i$, which contains $\ell$ singletons) added
$$\frac{1}{2}\ell_1(\ell_1+1)+\frac{1}{2}\ell_2(\ell_2-1) = \lfloor \frac{1}{2}(\ell+1)\rfloor  \lfloor \frac{1}{2}(\ell+2)\rfloor$$
crossings, where $\ell_1=\lfloor (\ell+1)/2 \rfloor$ and 
$\ell_2=\lceil (\ell+1)/2\rceil$.  
Finally, observe that $\ell = n(C_i) - 5$ since all vertices of $C_i$ except $T(C_i) \cup \{\emerge{i},\lost{i}\}$
are singletons.
\end{proof}

\begin{lemma}
\label{lem:lower}
Any good drawing of $G$ requires at least
$\sum_{i=1}^\kappa \lfloor \frac{1}{2}(n(C_i)-3)\rfloor  
	\lfloor \frac{1}{2}(n(C_i)-4)\rfloor$ crossings.
\end{lemma}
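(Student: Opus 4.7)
My plan is to apply Observation~\ref{obs:biclique} independently to each cluster, invoke the classical exact formula $\ncr(K_{3,m})=\lfloor m/2\rfloor\lfloor(m-1)/2\rfloor$ of Kleitman~\cite{Kleitman} to obtain a per-cluster crossing lower bound, and then show that these bounds sum with no double-counting. Concretely, let $D$ be any good drawing of $G$. For each cluster $C_i$, Observation~\ref{obs:biclique} provides a subgraph $B_i\cong K_{3,n(C_i)-3}$ of $G$ with 3-side $T(C_i)$. The sub-drawing $D|_{B_i}$ is good, so Kleitman's formula yields at least $\lfloor(n(C_i)-3)/2\rfloor\lfloor(n(C_i)-4)/2\rfloor$ crossings among edges of $B_i$. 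Writing $X_i$ for those crossings of $D$ whose two edges both lie in $E(B_i)$, the lemma follows once I show that the $X_i$ are pairwise disjoint.

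I plan to reduce pairwise disjointness to the claim that $E(B_i)\cap E(B_j)$ is a \emph{star} (all edges share a common vertex) whenever $i\ne j$: since adjacent edges never cross in a good drawing, no crossing can have both edges inside a star, so no crossing lies in $X_i\cap X_j$. To prove the star property I will use the contiguous-interval property of path decompositions. Any vertex appearing in both $V(C_i)$ and $V(C_j)$ lies in every intermediate bag and hence in each anchor triplet $T(C_k)$ with $i<k<j$. Therefore $|V(C_i)\cap V(C_j)|\le 3$ whenever $j\ge i+2$, while for $j=i+1$ the overlap consists of the four vertices of the shared bag $T(C_i)\cup\{\emerge{i}\}=T(C_{i+1})\cup\{\lost{i+1}\}$.

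The non-consecutive case $j\ge i+2$ is then immediate: $E(B_i)\cap E(B_j)$ is a bipartite subgraph (since $B_i\cong K_{3,\cdot}$) on at most 3 vertices, so it contains no triangle and has at most 2 edges; being confined to only 3 vertices, two such edges must share an endpoint. For the consecutive case $j=i+1$, I would carry out a short case analysis over the six edges among the four shared vertices: the biclique edges of $B_i$ among them are exactly the three incident to $\emerge{i}$, while the biclique edges of $B_{i+1}$ are exactly the three incident to $\lost{i+1}$, and the two triples intersect only in the single edge $(\emerge{i},\lost{i+1})$; the remaining five edges are either triangle edges of $T(C_i)$ or $T(C_{i+1})$, or biclique edges of only one of the two clusters.

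I expect the main obstacle to be this last case analysis, because the size-$4$ overlap bag carries up to six distinct edges of $G$ and each one has to be attributed correctly to the ``triangle'' or ``biclique'' role of each cluster. The clean outcome $E(B_i)\cap E(B_{i+1})=\{(\emerge{i},\lost{i+1})\}$ rests on the defining facts $\emerge{i}\notin T(C_i)$ and $\lost{i+1}\notin T(C_{i+1})$, together with $\emerge{i}\ne\lost{i+1}$ (noted earlier). Once the star property is established in both cases, summing $|X_i|$ over all clusters yields exactly $\sum_{i=1}^{\kappa}\lfloor(n(C_i)-3)/2\rfloor\lfloor(n(C_i)-4)/2\rfloor$ and completes the proof.
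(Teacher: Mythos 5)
Your proposal is correct and follows essentially the same route as the paper: a per-cluster lower bound of $\ncr(K_{3,n(C_i)-3})$ via Observation~\ref{obs:biclique} and the Zarankiewicz/Kleitman formula, combined with the observation that the edges common to two cluster bicliques form a star (pairwise adjacent), so no crossing of a good drawing is counted twice. Your two-case analysis (consecutive versus non-consecutive clusters) is a slightly more explicit version of the paper's single observation that all shared vertices lie in the first bag of the later cluster, where the biclique edges are exactly the three incident to the lost vertex.
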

\begin{proof}
From Observation~\ref{obs:biclique} we know that each cluster $C_i$ contains a biclique $\biclique(C_i) := K_{3,n(C_i)-3}$.
By Zarankiewicz' formula, $K_{3,m}$ needs 
$\lfloor m/2 \rfloor \, \lfloor (m-1)/2 \rfloor$ crossings in any
drawing. Thus, within each cluster we only introduce the
optimal number of crossings.

However, we must argue that it is impossible for 
one crossing to belong to two or more clusters
in an optimal drawing.  This holds
because nearly all of $V(C_i)$ does not belong to
other clusters.
More precisely, assume some other cluster $C_j$ shares vertices
with $C_i$; we may assume $j < i$.  Then all common vertices must appear
in the first bag $X = T(C_i) \cup \{\lost{i}\}$ of $C_i$.
However, only three edges of those induced by $X$ are
in $\biclique(C_i)$, and all three of them are incident to $\lost{i}$.
Since adjacent edges do not cross in a good drawing, no
crossing can be shared between $\biclique(C_i)$ and $\biclique(C_j)$.
\end{proof}

\begin{theorem}
\label{thm:maxPW3}
\label{thm:pw3max}
\label{thm:max3PW}
There is a linear time algorithm to compute the exact crossing number $\ncr(G)$ of any maximal pathwidth-$3$ graph $G$.
Furthermore, $\ncr(G) = \rcr(G)$, and the algorithm gives rise to 
a straight-line drawing where the anchor edges are not crossed.
\end{theorem}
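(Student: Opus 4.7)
The plan is to assemble Lemmas~\ref{lem:upper} and~\ref{lem:lower}. Write $N := \sum_{i=1}^\kappa \lfloor \tfrac{1}{2}(n(C_i)-3)\rfloor \lfloor \tfrac{1}{2}(n(C_i)-4)\rfloor$. Lemma~\ref{lem:upper} shows that our algorithm produces a \emph{good, straight-line} drawing with at most $N$ crossings; Lemma~\ref{lem:lower} shows that every good drawing of $G$ needs at least $N$ crossings. Since restricting to good drawings does not increase the topological or rectilinear crossing number, we obtain the chain
\[
\rcr(G) \;\leq\; N \;\leq\; \ncr(G) \;\leq\; \rcr(G),
\]
which forces $\ncr(G) = \rcr(G) = N$, and our constructed straight-line drawing achieves this value.

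For the anchor-edge claim, I would verify the third algorithmic invariant inductively. When iteration~$i$ runs, the only edges that can receive new crossings are (a) the edge $(\emerge{i-1},\lost{i})$, which is not an anchor edge of any cluster because $\emerge{i-1}\notin T(C_{i-1})$ and $\lost{i}\notin T(C_i)$, and (b) the freshly inserted edges incident to the new singletons or to $\emerge{i}$, which are introduced only during iteration~$i$ and hence are not anchor edges of any earlier~$C_j$. Consequently, after iteration~$i$ all anchor edges of $C_1,\dots,C_i$ remain crossing-free, completing the induction; anchor edges of future clusters are only ever on an outer face in later iterations, so the same argument applies.

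For the linear running time, a path decomposition of width~$3$ is obtained in $O(n)$ time via Bodlaender's algorithm, and rewriting it as an alternating decomposition together with its cluster structure takes linear additional time. Each cluster $C_i$ is processed in time $O(n(C_i))$, and because the singleton vertex sets of distinct clusters are pairwise disjoint, $\sum_i n(C_i) = O(n)$. The only part requiring genuine care --- and hence the main (albeit mild) obstacle --- is the bookkeeping for the third invariant above: one must check that the wedges $\mathcal{W}_1,\mathcal{W}_2,\mathcal{W}_3$ and $\mathcal{W}$ really do isolate every new crossing to a non-anchor edge, and that the outside-placement step introduces no crossings on the anchor edges of $T(C_i)$. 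Once that is nailed down, the theorem follows directly from the two preceding lemmas together with standard path-decomposition machinery.
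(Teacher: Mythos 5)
Your proposal is correct and follows essentially the same route as the paper: optimality via the chain $N \leq \ncr(G) \leq \rcr(G) \leq N$ from Lemmas~\ref{lem:upper} and~\ref{lem:lower}, the anchor-edge claim from the algorithm's invariants, and linear time from Bodlaender's algorithm plus evaluating the sum. The only caveat is that the linear-time claim of this theorem concerns computing the \emph{value} $\ncr(G)$ (not outputting the drawing, which needs the separate coordinate argument of Theorem~\ref{thm:resolution}), but your per-cluster accounting is consistent with that reading.
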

\begin{proof}
Optimality follows from Lemmas~\ref{lem:upper} and \ref{lem:lower}. The second part of the claim
follows from the first and third invariant in the above algorithmic description.  
It remains to argue linear running time.
Computing a path decomposition of width $3$ (if it exists) can be done in linear
time~\cite{Bodlaender1996,BodlaenderKloks1996}. This path decomposition can be turned into an alternating
path decomposition in linear time as well.
On it we compute $\ncr(G)$ as the sum in Lemma~\ref{lem:upper} in linear time. 
\end{proof}

Assume we are interested in the drawing achieving this solution.
The drawing algorithm uses $O(n)$ operations, but this does not immediately imply linear
time, since coordinates may become very small. We also cannot list all crossings, as there can be $\Theta(n^2)$ many. 
%
If, however, we are careful about how to place
anchor-triplets, then singletons
can be inserted while keeping all vertices at grid-points of an $O(n)\times O(n)$-grid, and thus we require only linear time to compute and output the drawing.
Details are given in the full version of the paper~\cite[Appendix~B]{BiedlCDM16}.  We summarize:

\begin{theorem}
\label{thm:resolution}
Every maximal pathwidth-$3$ graph on $n$ vertices has a crossing-minimum drawing that is good, straight-line, and lies on a $28n\times 29n$-grid.
It can be found in $O(n)$~time.
\end{theorem}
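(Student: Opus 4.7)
The plan is to turn the abstract drawing from Theorem~\ref{thm:maxPW3} into a fully grid-aligned construction by fixing explicit integer coordinates for every anchor triplet and singleton. Since that algorithm already yields an optimal straight-line drawing, the only new content is geometric: I must show that the inside and outside placements at each iteration can be realized at grid points within an $O(n) \times O(n)$ bounding box, and that each coordinate can be computed in constant time, so that the $O(n)$ abstract operations translate into $O(n)$ arithmetic operations on numbers of bounded bit-length.

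First, I would impose a canonical template that the outer triangle $T(C_i)$ maintains, up to horizontal translation, at every iteration of the algorithm. A natural choice is an isosceles triangle sitting on a fixed horizontal baseline, tall enough that the inside wedge $\mathcal{R}$ near $\emerge{i-1}$ contains a vertical segment carrying at least $\max_i \ell_i + 1 = O(n)$ integer lattice points, and wide enough that the outside wedge $\mathcal{W}$ near $\lost{i+1}$ contains a segment of similar integer length. Second, I would realize the inside placement by picking a short vertical grid segment inside $\mathcal{R}$ and placing the $\ell_1$ singletons at its integer points; because each such singleton needs only three straight-line edges to the corners of $T(C_i)$ through the interior, general position at integer coordinates is automatic. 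Third, I would realize the outside placement along a short grid segment inside $\mathcal{W}$, and I would designate as $\emerge{i}$ the outermost such singleton, positioned so that the next triangle $T(C_{i+1})$ (spanned by the two retained corners and $\emerge{i}$) is a translated copy of the template shifted horizontally by a constant amount depending on $\ell_2$.

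The main obstacle will be enforcing the template invariant recursively: after each step the new triangle must again admit the same kind of inside and outside placements, which forces a rigid relationship between $\lost{i+1}$, the outside placement line, and the chosen $\emerge{i}$. I would resolve this by fixing the template up front, deriving the outside segment's position from it, and then verifying once that the resulting wedges are wide enough to house the required integer points---the key geometric estimate being that the template's interior angles stay bounded below so that neither $\mathcal{R}$ nor $\mathcal{W}$ shrinks across iterations. Once the template is fixed, every vertex's coordinates can be written in closed form from the cluster index and a local counter, yielding $O(1)$-time-per-vertex placement and total time $O(n)$. The vertical extent of the drawing is at most a constant times $\max_i n(C_i)$, while the horizontal extent is at most a constant times $\sum_i(\ell_i+1) = n - O(1)$; the precise bounds $28n$ and $29n$ then emerge from a careful accounting of the slope, height, and step size chosen for the template, which becomes routine bookkeeping once the template shape is pinned down.
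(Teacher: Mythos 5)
Your template invariant is geometrically impossible, and it is the load-bearing step of the whole proposal. In the algorithm of Section~\ref{sec:max3PW}, the outer triangle $T(C_{i+1})$ is spanned by the two \emph{retained} corners $p',q'$ of $T(C_i)$ together with the new vertex $\emerge{i}$, which is placed in the wedge $\mathcal{W}$ beyond $\lost{i+1}$; hence $T(C_{i+1})$ shares the edge $(p',q')$ with $T(C_i)$ and strictly \emph{contains} it. A nondegenerate triangle cannot share two vertices with a nontrivial translate of itself (if $T$ and $T+v$ share two vertices with $v\neq 0$, the three corners are forced to be collinear), so the successive outer triangles cannot all be horizontal translates of one fixed template. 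The correct picture is a nested, outward-growing sequence of triangles, not a caterpillar marching rightward; consequently your claimed horizontal extent of roughly $\sum_i(\ell_i+1)=n-O(1)$ does not describe this construction, and the ``constant shift depending on $\ell_2$'' has nothing to attach to. The real difficulty --- which your plan never reaches --- is to let the outer triangle grow by only $O(s_i)$ units per step in each of a \emph{fixed} set of directions while guaranteeing that the wedges $\mathcal{R}$ and $\mathcal{W}$ still contain enough lattice points from which a singleton sees all of $T(C_i)\cup T(C_{i-1})$ crossing only the edge $(\lost{i},\emerge{i-1})$.

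The paper resolves this differently: it reorganizes the placement around the edges $(\lost{i},\emerge{i-1})$, puts $v_1$ at the origin and all other anchor vertices on three fixed rays (vertically up, and diagonally down-left and down-right) with large initial offsets of $10n$, and advances each new anchor vertex by $s_i+O(1)$ units along the ray of $\lost{i}$. The initial offsets force every inter-ray edge into an explicit slope range (e.g.\ $(-\tfrac15,\tfrac15)$ between the two lower rays, $(\tfrac{10}{7},2.9)$ from the lower-left ray to the vertical ray), and a counting argument then shows that among the $s_i+O(1)$ lattice points one unit to the side of $(\lost{i},\emerge{i-1})$, at most a constant number are cut off by the edges to $p$ and $q$, leaving the $s_i$ points needed. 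That is also where the specific constants $28n\times 29n$ come from. Your instinct to quantify wedge angles and count lattice points is the right one, but it must be carried out on a nested-triangle layout of this kind, not on a translated template.
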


\section{Approximation Algorithm for Pathwidth-3 Graphs}
\label{sec:Alg3}

We now give an algorithm that draws graphs of pathwidth 3 (not necessarily
maximal) such that the number of crossings is within a factor of 2 of the
optimum.  Roughly
speaking, if the graph is $3$-connected (technically, we will define a slightly 
weaker assumption \emph{$3$-traceable}), then the algorithm for maximal
pathwidth-$3$ graphs is applied, and the number of crossings
is within a factor of 2.  If the graph is not $3$-traceable, then it can be
split and the arising subdrawings can
be ``glued'' together without increasing the approximation ratio.

\subsection{3-traceable graphs}

We first analyze graphs that satisfy a condition
that is weaker than $3$-connectivity. 
Define a {\em non-anchor vertex} to be a vertex that occurs in exactly one bag.
Those are exactly $v_1$, $v_n$, and all the singletons defined earlier.

\begin{definition}[$3$-traceable graph]
A graph $G$ with an alternating path decomposition ${\cal P}$ of width $3$ 
is {\em $3$-traceable} if every non-anchor vertex has degree at least $3$, and
for all $1\leq i \leq \kappa$, edge $(\emerge{i-1},\lost{i})$ exists.
\end{definition}

Assume we are given a $3$-traceable 
graph $G$ with an alternating
path decomposition ${\cal P}$ of width $3$. We can first maximize
$G$ (obtaining $G'$) by adding all edges that have both ends
in one bag, but are not in $G'$ yet. We then apply the
algorithm described in Section~\ref{sec:max3PW} to $G'$, and finally
delete the temporarily added edges again. We will show:

\begin{lemma}
\label{lem:condStar}
\label{lem:ConditionStar}
Let $G$ be a $3$-traceable graph.
Then the algorithm of Theorem~\ref{thm:max3PW}
gives a drawing of $G$ with at most $2\ncr(G)$ crossings.
\end{lemma}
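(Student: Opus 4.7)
My plan is to apply the algorithm of Theorem~\ref{thm:max3PW} to $G'$ (the maximization of $G$) and then delete the edges in $E(G')\setminus E(G)$; this yields a drawing of $G$ with at most $U := \sum_i \lfloor(n(C_i)-3)/2\rfloor\lfloor(n(C_i)-4)/2\rfloor$ crossings, since $U=\ncr(G')$ by Theorem~\ref{thm:max3PW}. It therefore suffices to prove $U \leq 2\,\ncr(G)$.

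To lower-bound $\ncr(G)$, I would first exploit $3$-traceability to locate a biclique in $G$ inside each cluster. For cluster $C_i$, any singleton $x$ is a non-anchor vertex; since its unique bag equals $T(C_i)\cup\{x\}$ and $\deg(x)\geq 3$ by hypothesis, $x$ must be adjacent to all three vertices of $T(C_i)$. Hence the $s_i$ singletons together with $T(C_i)$ form $K_{3,s_i}\subseteq G$. The boundary vertices $\lost{1}=v_1$ and $\emerge{\kappa}=v_n$ are also non-anchor and fully connected to their triplets, enlarging the bicliques to $K_{3,s_i+1}$ at $i\in\{1,\kappa\}$. Adapting the disjointness argument of Lemma~\ref{lem:lower}---singletons and $v_1,v_n$ each belong to only one cluster, so biclique edges of distinct clusters share no non-anchor endpoints---this yields $\ncr(G)\geq \sum_i \ncr(K_{3,s_i+\delta_i})$ with $\delta_i\in\{0,1\}$.

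The main obstacle will be the global inequality $U \leq 2\sum_i \ncr(K_{3,s_i+\delta_i})$. An elementary per-cluster comparison succeeds for clusters with enough singletons, but fails for small $s_i$: the functions $a(s):=\lfloor(s+1)/2\rfloor\lfloor(s+2)/2\rfloor$ and $b(s):=\lfloor s/2\rfloor\lfloor(s-1)/2\rfloor$ satisfy $a(s)-2b(s)\leq 2$ but not $a(s)\leq 2b(s)$ for $s\in\{1,\dots,5\}$. To close this gap I would invoke the edges $(\emerge{i-1},\lost{i})$ guaranteed by $3$-traceability: they allow anchors of one cluster to be reached from anchors of the adjacent cluster via a single-edge path, so the singletons of several consecutive clusters, together with the fixed anchor triplet $T(C_1)$, form a large $K_{3,m}$-subdivision in $G$ and produce additional lower-bound crossings not already charged to the per-cluster bicliques. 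Accumulated across runs of small-singleton clusters, these contributions should suffice to bound $U$ by $2\,\ncr(G)$. Alternatively, a sharper upper-bound analysis can track which crossings in the algorithm's drawing actually involve only $G$-edges (inside crossings always do, since the paired singleton edge and the $3$-traceable edge $(\emerge{i-1},\lost{i})$ both lie in $G$; certain outside crossings pairing the two non-$G$ edges at $\emerge{i}$ vanish upon deletion), thereby tightening the upper bound on $U$ before comparing.
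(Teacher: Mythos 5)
Your overall architecture matches the paper's (upper bound $U$ from Lemma~\ref{lem:upper}, per-cluster biclique lower bounds, a factor of $2$ from overlap), but the proof does not close: the central inequality $U\le 2\ncr(G)$ is exactly the point where your argument stops. You correctly observe that the biclique you can certify inside a cluster from its own singletons is only $K_{3,s_i}$ (with $s_i=n(C_i)-5$), and that $\lfloor(s+1)/2\rfloor\lfloor(s+2)/2\rfloor\le 2\lfloor s/2\rfloor\lfloor(s-1)/2\rfloor$ fails for small $s$ --- in fact it fails by an additive $2$ per cluster, which can be the \emph{entire} upper bound when every cluster has one or two singletons (e.g.\ $s_i=1$ for all $i$ gives $U=\kappa$ while your per-cluster lower bound is $0$). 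The paper's resolution is a concrete construction you are missing: for each cluster $C_i$ with a singleton, it \emph{borrows} one non-anchor vertex $w_-$ from the nearest preceding and one $w_+$ from the nearest succeeding singleton-containing cluster, attached to $T(C_i)$ by the three vertex-disjoint paths of Type~II edges $(\emerge{k-1},\lost{k})$ guaranteed by $3$-traceability (Observation~\ref{obs:threePaths}). This yields a \emph{subdivision} of $K_{3,n(C_i)-3}$ (Definition~\ref{def:cluster-biclique}) whose Zarankiewicz lower bound $\lfloor(n(C_i)-3)/2\rfloor\lfloor(n(C_i)-4)/2\rfloor$ equals the upper-bound summand exactly; the factor $2$ then comes entirely from a separate overlap lemma (Lemma~\ref{lem:two-edges}) showing that any pair of independent edges lies in at most two of these subdivided bicliques, which requires a careful case analysis over edge types (Type~I/II/III) since the borrowed vertices and the Type~II paths are genuinely shared between neighbouring cluster bicliques.

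Your two proposed repairs do not obviously substitute for this. A single large $K_{3,m}$-subdivision spanning a run of clusters cannot simply be \emph{added} to the per-cluster lower bounds: its crossings may coincide with crossings already charged to the local bicliques, and controlling that double counting is precisely the difficulty the paper's overlap lemma is designed to handle. The alternative of tightening the upper bound also falls short: the inside crossings all survive deletion (the singleton edges and the edge $(\emerge{i-1},\lost{i})$ are all in $G$ by $3$-traceability), and among the outside crossings only those involving edges from $\emerge{i}$ to Type~III pairs can vanish, which does not recover the additive $2$ per cluster in general. So the statement is true and your diagnosis of where the naive argument breaks is accurate, but the key idea --- enlarging each cluster biclique by two subdivision vertices so the per-cluster terms match exactly, plus the at-most-two-overlaps lemma --- is absent, and the proof as proposed would fail.
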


We first give a sketch of the proof.  The main challenge is that a cluster $C$
now does not necessarily contain a biclique $K_{3,n(C)-3}$.  
However, we can argue that $G$ contains a subdivision of
$K_{3,n(C)-3}$ that uses mostly vertices of $C$, but ``borrows'' 
a non-anchor vertex each (to play the role of $x_i^-$ and $x_i^+$) from the nearest preceding and succeeding cluster that has
such vertices.  This subdivided
$K_{3,n(C)-3}$ requires $\ncr(K_{3,n(C)-3})$ crossings.  The main work is
then in arguing that these subdivided bicliques cannot overlap much, or
more precisely,  that any crossing can belong to at most 2 of them.  Lemma~\ref{lem:condStar} 
then follows by applying the upper bound given in Lemma~\ref{lem:upper}.

As before, let $C_1,\dots,C_{\kappa}$ be the clusters of $G$ with anchor-triplets 
$T(C_1), \ldots, T(C_\kappa)$, and recall that we have an age-order $\{v_1,\dots,v_n\}$.

There are three types of edges in $G$.  Type I
are edges that are incident to non-anchor vertices. 
Type II are edges that have the form
$(\emerge{i-1},\lost{i})$ for some $2 \leq i \leq \kappa$.  Finally, Type III are 
the remaining edges (they connect vertices of some anchor-triplet
$T(C_i)$, $1\leq i\leq \kappa$).

\begin{observation}
\label{obs:threePaths}
Consider a $3$-traceable graph. For any $1\leq i < j\leq \kappa$, there are three 
vertex-disjoint paths $\Pi_{i,j}$ from $T(C_i)$ to $T(C_j)$ that are either single vertices or
consist exactly of the Type II edges
$(\emerge{k-1}, \lost{k})$ for $i < k \leq j$.
Every non-anchor vertex attaches to the three different paths $\Pi:=\Pi_{1,\kappa}$.
\end{observation}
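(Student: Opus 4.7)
The plan is to build the three paths by tracking three ``slots'' as I walk through the anchor-triplets $T(C_i), T(C_{i+1}), \dots, T(C_j)$ in order. Starting from some labeling of $T(C_i)$ as $\{a_0, b_0, c_0\}$, at each step $k$ with $i < k \leq j$ I would use the identity $T(C_k) = T(C_{k-1}) \cup \{\emerge{k-1}\} \setminus \{\lost{k}\}$ to define $(a_k, b_k, c_k)$ by substituting, among the three slots, the unique one that equals $\lost{k}$ with $\emerge{k-1}$. Since $G$ is $3$-traceable, the Type~II edge $(\emerge{k-1}, \lost{k})$ is present, so that slot extends its path by exactly this edge while the other two slots extend trivially. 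After processing $k=i+1,\dots,j$ the three paths terminate at $(a_j, b_j, c_j) = T(C_j)$, and their collected edges are exactly the Type~II edges $(\emerge{k-1}, \lost{k})$ for $i < k \leq j$.

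For vertex-disjointness I would argue in two steps: first, at each step $k$ the three tracked vertices are exactly the three distinct elements of $T(C_k)$, so the slots are pairwise distinct snapshot-by-snapshot; second, a vertex cannot reappear on a different slot later, since once it leaves an anchor-triplet as $\lost{k}$, the interval property of the path decomposition forbids it from lying in any later bag, hence any later anchor-triplet.

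For the ``attachment'' statement, I would observe that every non-anchor vertex $v$ lies in a unique bag $X$ of the decomposition, and verify case-by-case (singleton of some $C_k$; $v=v_1$ in $X_1$; or $v=v_n$ in the last bag) that $X$ always takes the form $T(C_k) \cup \{v\}$ for some $k$. Since $3$-traceability forces $\deg(v) \geq 3$ and all neighbors of $v$ must lie in $X \setminus \{v\}$, a three-element set, $v$ must be adjacent to all three vertices of $T(C_k)$; these three vertices sit on three distinct paths of $\Pi = \Pi_{1,\kappa}$, so $v$ attaches to all three.

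The main obstacle, I expect, is presenting vertex-disjointness cleanly: the interval property does the actual work, but it must be invoked explicitly so that ``tracking three slots'' is not quietly conflated with ``reusing a single vertex across paths.'' Everything else is a matter of mechanically translating the definitions of alternating path decomposition and $3$-traceable graph into path-construction steps.
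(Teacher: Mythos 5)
Your proposal is correct and follows essentially the same route as the paper: the paper also builds $\Pi_{i,i+1}$ from the identity $T(C_{i+1})=T(C_i)\cup\{\emerge{i}\}\setminus\{\lost{i+1}\}$ plus the Type~II edge guaranteed by $3$-traceability, concatenates these to get $\Pi_{i,j}$, and derives the attachment claim from the degree-$3$ condition forcing each non-anchor vertex to be adjacent to all of its cluster's anchor-triplet. Your explicit appeal to the interval property for vertex-disjointness is a detail the paper leaves implicit, but it is the same argument.
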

\begin{proof}
For any $1 \leq i < \kappa$, we have $T(C_{i+1})=T(C_{i}) \cup \{\emerge{i}\} \setminus \{\lost{i+1}\}$.  
By $3$-traceability of $G$, edge $(\lost{i+1},\emerge{i})$ exists and $\Pi_{i,i+1}$ consists
of two paths of length 0 (the common vertices of the triplets) and the third path being this edge. 
We obtain arbitrary $\Pi_{i,j}$ by extending $\Pi_{i,i+1}$ via $\Pi_{i+1,j}$.
%
%
Since $G$ is $3$-traceable, the non-anchor vertices have degree 3 and are
adjacent to the vertices of the anchor-triplet of their unique cluster; those lie on distinct paths of $\Pi$.
\end{proof}

\begin{figure}[t]
        \def\svgwidth{\textwidth} 
\begingroup%
  \makeatletter%
  \providecommand\color[2][]{%
    \errmessage{(Inkscape) Color is used for the text in Inkscape, but the package 'color.sty' is not loaded}%
    \renewcommand\color[2][]{}%
  }%
  \providecommand\transparent[1]{%
    \errmessage{(Inkscape) Transparency is used (non-zero) for the text in Inkscape, but the package 'transparent.sty' is not loaded}%
    \renewcommand\transparent[1]{}%
  }%
  \providecommand\rotatebox[2]{#2}%
  \ifx\svgwidth\undefined%
    \setlength{\unitlength}{482.80351563bp}%
    \ifx\svgscale\undefined%
      \relax%
    \else%
      \setlength{\unitlength}{\unitlength * \real{\svgscale}}%
    \fi%
  \else%
    \setlength{\unitlength}{\svgwidth}%
  \fi%
  \global\let\svgwidth\undefined%
  \global\let\svgscale\undefined%
  \makeatother%
  \begin{picture}(1,0.30982375)%
    \put(0,0){\includegraphics[width=\unitlength]{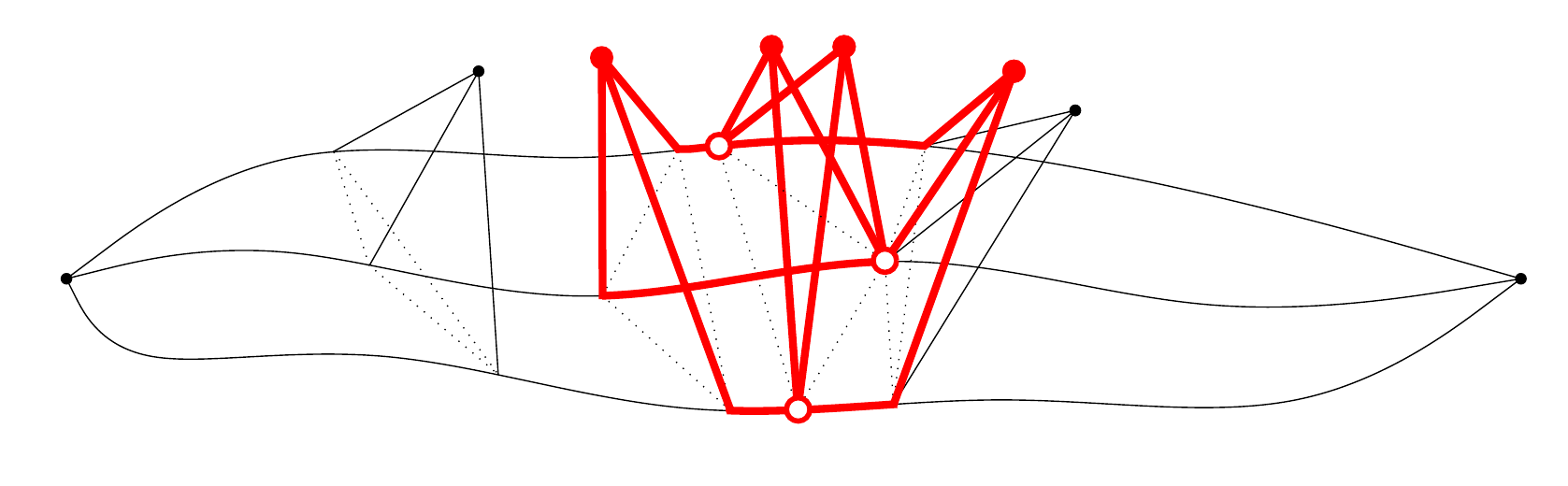}}%
    \put(0.38592897,0.29371582){\color[rgb]{0,0,0}\makebox(0,0)[lb]{\smash{Singleton vertices}}}%
    \put(0.00772795,0.14441591){\color[rgb]{0,0,0}\makebox(0,0)[lb]{\smash{$v_1$}}}%
    \put(0.97468686,0.14103498){\color[rgb]{0,0,0}\makebox(0,0)[lb]{\smash{$v_n$}}}%
  \end{picture}%
\endgroup%
\caption{The structure of a $3$-traceable graph. Dotted triangles mark anchor-triples with at least one adjacent singleton.
In bold, we show one cluster biclique: the anchor vertices depicted as circles form one partition side. The left- and rightmost bold singleton is ``borrowed'' from the preceding and succeeding singleton-containing cluster, respectively.}
\label{fig:biclique}
\end{figure}

This shows that $G$ has $K_{3,n'}$ as a minor, where $n'$
is the number of non-anchor vertices.  Unfortunately this is not sufficient
for crossing number arguments as contracting edges may increase the crossing number.
Instead, we will use the above structure
to extract a subdivision of $K_{3,n(C)-3}$ for each cluster $C$ in such a
way that these bicliques do not overlap ``much.''

\begin{definition}
\label{def:cluster-biclique}
Let $C_i$, $1\leq i\leq \kappa$, be a cluster with at least one singleton.
The {\em cluster biclique of $C_i$}, denoted $\extbiclique(C_i)$,
is a subdivision of $K_{3,n(C_i)-3}$ obtained as follows, cf.~Fig.~\ref{fig:biclique}:
\begin{enumerate}[(a)]
\item The 3-side is formed by the three vertices of $T(C_i)$.
\item Every singleton $w$ that belongs to $C_i$ (there are $n(C_i) - 5$ of them) is one of
	the vertices on the side that will
	have $n(C_i)-3$ vertices.   We know that $\deg(w)= 3$ by $3$-traceability,
	and it is adjacent to all of $T(C_i)$ as
	required for the biclique.
\item Let $i_- < i$ ($i_+ > i)$ be maximal (minimal) such that cluster $C_{i_-}$ 
        ($C_{i_+}$, respectively) has a non-anchor vertex;
	among its non-anchor vertices, let $w_-$ ($w_+$) be the youngest (oldest, respectively).
	If $i = 1$, we simply set $w_- := v_1$;
	if $i = \kappa$, we set $w_+ := v_n$.
        By Observation~\ref{obs:threePaths}, we can establish 
 	three disjoint paths from $w_-$ and $w_+$ to $T(C_i)$. Hence, add $w_-$ and $w_+$ to the ``big'' 
        side of $\extbiclique(C_i)$. Observe that in either case, $w_-$ and $w_+$ are distinct from the
        the singletons of $C_i$ and their paths to $T(C_i)$.
\end{enumerate}
\end{definition}

\begin{lemma}
\label{lem:two-edges}
Let $e_1,e_2$ be two edges of $G$ without common endpoint.
There are at most two cluster bicliques that contain both $e_1$ and $e_2$.
\end{lemma}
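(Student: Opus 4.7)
My plan is to classify each of $e_1,e_2$ according to its edge-type (I, II, or III). Type III edges join two anchor vertices and, by construction, never appear in any $\extbiclique$, so if either edge is Type III the claim is vacuous. I will then bound the number of cluster bicliques containing a single edge of Type II or Type I, and close with a short case analysis that uses the no-common-endpoint hypothesis.

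For a Type II edge $e_k=(\emerge{k-1},\lost{k})$, I plan to show that $e_k\in\extbiclique(C_j)$ exactly when $C_j$ has a singleton and $i_-(j)<k\leq i_+(j)$. Indeed, by Observation~\ref{obs:threePaths} the paths from $w_-(C_j)$ to $T(C_j)$ and from $w_+(C_j)$ to $T(C_j)$ used to build $\extbiclique(C_j)$ together traverse exactly the Type II edges whose index lies in $(i_-(j),i_+(j)]$. Writing $m_1<\dots<m_r$ for the indices of clusters containing at least one singleton, a short monotonicity argument on $i_\pm(\cdot)$ then shows that a fixed $k$ lies in $(i_-(j),i_+(j)]$ for at most the two consecutive indices $m_s,m_{s+1}$ bracketing $k$. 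Hence any Type II edge belongs to at most two cluster bicliques, and any case in which at least one of $e_1,e_2$ is Type II is immediate.

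The remaining case has both $e_1,e_2$ of Type I. Write $e_\ell=(w_\ell,t_\ell)$ with $w_\ell$ the unique non-anchor endpoint (lying in its unique cluster $C_{i_\ell}$) and $t_\ell\in T(C_{i_\ell})$. Since the $t_\ell$ are anchor vertices and cannot coincide with the non-anchor $w_{3-\ell}$, the no-common-endpoint hypothesis reduces to $w_1\neq w_2$. From Definition~\ref{def:cluster-biclique} I observe that $(w,t)\in\extbiclique(C_j)$ iff $w\in V(\extbiclique(C_j))$, which happens exactly when $w$ plays one of three roles for $C_j$: a singleton of $C_j$ (forcing $j=i_\ell$), $w_-(C_j)$ (forcing $j$ to be the unique nearest later singleton cluster and $w$ to be the youngest non-anchor of $C_{i_\ell}$), or $w_+(C_j)$ (symmetric). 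Assuming WLOG $i_1\leq i_2$, the case $i_1=i_2$ yields only $j=i_1$. For $i_1<i_2$ I will enumerate the $3\times 3$ role assignments to $(w_1,w_2)$ and, after discarding the impossible ones, reduce to three candidates: $j=i_1$ with $w_2=w_+(C_{i_1})$ (hence $i_+(i_1)=i_2$); $j=i_2$ with $w_1=w_-(C_{i_2})$ (hence $i_-(i_2)=i_1$); or some $j\notin\{i_1,i_2\}$ with $w_1=w_-(C_j)$ and $w_2=w_+(C_j)$. In the third subcase $j$ is uniquely determined as the singleton cluster strictly between $i_1,i_2$ with $i_-(j)=i_1$ and $i_+(j)=i_2$; its existence demands a non-anchor cluster strictly between $i_1$ and $i_2$, whereas the first two subcases forbid one. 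Hence the third option is mutually exclusive with each of the first two, bounding the total number of valid $j$ by $2$.

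The main obstacle will be keeping the Type I role enumeration airtight: one must discard pairs such as ``both $w_-(C_j)$'' (which force $w_1=w_2$), ``$w_1=w_+(C_j),w_2=w_-(C_j)$'' (which force $i_2<j<i_1$, contradicting $i_1\leq i_2$), and the analogous small contradictions, as well as verify that the ``strictly between'' $j$ is indeed uniquely determined by $i_-(j)=i_1,\ i_+(j)=i_2$. These are routine checks, but all must be carried out to obtain the tight bound of two.
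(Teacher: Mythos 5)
Your proposal is correct and follows essentially the same route as the paper's proof: dispose of Type III edges, bound a Type II edge by the two nearest singleton-bearing clusters on either side, and for two Type I edges analyse the three roles (own-cluster singleton, $w_-$, $w_+$) a non-anchor vertex can play in a cluster biclique. The paper phrases the final Type I step as the observation that two three-element ``windows'' in $\mathcal{C}'$ with distinct middle clusters overlap in at most two positions, which is exactly what your explicit role enumeration establishes.
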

\begin{proof}
We are done if at least one of $e_1$ and $e_2$ is of Type III,
because then it belongs to no cluster biclique at all.
Assume that one of $e_1$ and $e_2$ is of Type II,
say $e_1=(\emerge{i-1},\lost{i})$ for some $2 \leq i \leq \kappa$.  Edge $e_1$ may be used 
only for the cluster bicliques $\extbiclique(C_{j^-})$ and $\extbiclique(C_{j^+})$ where ${j^-} < i$ 
(${j^+} \geq i$) is the maximal (minimal) index such that cluster $C_{j^-}$ ($C_{j^+}$, respectively) has
singletons.  The fact that $e_1$ belongs to at most
two cluster bicliques proves the claim.

Finally, assume that both $e_1$ and $e_2$ are of Type I, i.e.,
incident to distinct non-anchor vertices, say $y_1 \in C_i$ and $y_2 \in C_{i'}$.
Let $\mathcal{C}' \subseteq \mathcal{C}$ be the ordered subsequence of 
clusters that have at least one non-anchor vertex.
A non-anchor vertex $x$ can belong 
to at most three cluster bicliques, refer to Definition~\ref{def:cluster-biclique}: the one of its
``own'' cluster $C \in \mathcal{C}'$, and those of the directly preceding and succeeding cluster
in $\mathcal{C}'$. Assume that $y_1$ and $y_2$ are in three cluster bicliques.
If $i = i'$, $y_1$ and $y_2$ are singletons of different age in $C_i$, and the two clusters directly preceding and
succeeding $C_i$ would have chosen distinct singletons of $C_i$, a contradiction. 
If $i \neq i'$, any overlap of three-element subsequences of $\mathcal{C}'$ with distinct middle clusters
has size at most 2, a contradiction. 
\end{proof}

\begin{proof}[Proof of Lemma~\ref{lem:condStar}]
  We know from Lemma~\ref{lem:upper} that the algorithm of Theorem~\ref{thm:max3PW}
gives a drawing with at most 
$\sum_{C \in \mathcal{C}} \lfloor \frac{1}{2}(n(C)-3)\rfloor 
	\lfloor \frac{1}{2}(n(C)-4)\rfloor$
crossings.  
We need to consider only clusters $C$ that have at least one singleton;
for any other cluster we have $n(C) = 5$ and therefore its summand is 0.
For any cluster $C$ that has a singleton, we have $\extbiclique(C)$, a subdivision of $K_{3,n(C)-3}$,
which requires at least $\lfloor \frac{1}{2}(n(C)-3)\rfloor 
\lfloor \frac{1}{2}(n(C)-4)\rfloor$ crossings in any good drawing $\mathcal{D}$ of $G$.   Any
crossing in $\mathcal{D}$ is created by two edges without common endpoints, and by Lemma~\ref{lem:two-edges}, 
any such pair belongs to at most two cluster bicliques.
Hence any drawing of $G$ has at least
$\frac{1}{2} \sum_{C \in \mathcal{C}} \lfloor \frac{1}{2}(n(C)-3)\rfloor 
	\lfloor \frac{1}{2}(n(C)-4)\rfloor$
crossings, yielding the 2-approximation.  
\end{proof}

\subsection{General pathwidth-$3$ graphs}
\label{sec:biconn}

A pair of vertices $\{u,v\}$ of a $2$-connected graph $G$ is called
a {\em separation pair} if $G-\{u,v\}$ is not connected.  
Assume that the pathwidth-$3$ graph $G$ is $2$-connected but not $3$-traceable.
We will show that we can split the graph at separation pairs within anchor-triplets, draw the
cut-components recursively, and merge them without introducing additional crossings.
We start with a more general auxiliary statement whose proof is in \cite[Appendix~C]{BiedlCDM16}.

\begin{lemma}
\label{lem:bicomp2}
\label{lem:mightypower}
Let $G$ be a $2$-connected graph with a separation pair $\{u,v\}$.
Consider a partition of $G$ into two edge-disjoint connected subgraphs $H_1,H_2$ 
with $H_1\cap H_2=\{u,v\}$.
Define $H_i^+ = H_i \cup \{(u,v)\}$  for $i=1,2$.
Then $\ncr(H_1^+)+\ncr(H_2^+)\leq \ncr(G)$.
\end{lemma}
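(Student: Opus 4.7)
The plan is to start with a good drawing $\mathcal{D}$ of $G$ realising $\ncr(G)$ and extract from it drawings of $H_1^+$ and $H_2^+$ whose crossing counts sum to at most $\ncr(G)$. For edge-sets $A,B\subseteq E(G)$, write $\chi(A,B)$ for the number of unordered pairs $\{a,b\}$ with $a\in A$ and $b\in B$ that cross in $\mathcal{D}$, so that $\ncr(G)=\chi(H_1,H_1)+\chi(H_2,H_2)+\chi(H_1,H_2)$. Let $\mathcal{D}|_{H_i}$ denote the sub-drawing obtained from $\mathcal{D}$ by deleting the curves of all edges in $H_{3-i}$; it has $\chi(H_i,H_i)$ crossings. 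Since $H_{3-i}$ is connected, fix a $u$-$v$ path $P_{3-i}\subseteq H_{3-i}$, and draw $H_i^+$ by taking $\mathcal{D}|_{H_i}$ and inserting the new edge $(u,v)$ as a simple curve that closely tracks the concatenated curve of $P_{3-i}$ in $\mathcal{D}$ (smoothing any self-crossings of $P_{3-i}$ and choosing the start direction near $u$ and $v$ so as to avoid the locally incident $H_i$-edges). Since the curves of $H_{3-i}$ have been removed from the drawing, the new edge crosses only those $H_i$-edges that $P_{3-i}$ crosses in $\mathcal{D}$, yielding
\[\ncr(H_i^+)\le\chi(H_i,H_i)+\chi(P_{3-i},H_i).\]

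Summing the two bounds over $i\in\{1,2\}$ and comparing with $\ncr(G)$, the required inequality reduces to $\chi(P_2,H_1)+\chi(P_1,H_2)\le\chi(H_1,H_2)$. A direct expansion along the partitions $H_i=P_i\cup(H_i\setminus P_i)$ rewrites this as $\chi(P_1,P_2)\le\chi(H_1\setminus P_1,H_2\setminus P_2)$. A sufficient condition is therefore that the paths $P_1,P_2$ be chosen so that no edge of $P_1$ crosses any edge of $P_2$ in $\mathcal{D}$: under this non-crossing condition the left-hand side vanishes while the right-hand side is non-negative, and the inequality is immediate.

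The main obstacle is thus proving the existence of such a non-crossing pair. I plan to establish it by an exchange argument: choose $(P_1,P_2)$ minimising $\chi(P_1,P_2)$ among all pairs of $u$-$v$ paths in $H_1\times H_2$; assuming $\chi(P_1,P_2)>0$, pick a crossing between $e_1\in P_1$ and $e_2\in P_2$ and exploit the $2$-connectedness of $G$ together with $H_1\cap H_2=\{u,v\}$ to reroute one of the paths inside its own side around the offending edge, strictly decreasing $\chi(P_1,P_2)$ and contradicting minimality. An arguably cleaner alternative is to prove the stronger statement that some optimal drawing of $G$ already satisfies $\chi(H_1,H_2)=0$: starting from any optimal drawing, one iteratively re-embeds the subgraph $H_{3-i}\setminus\{u,v\}$ into a single face of the current $H_i$-subdrawing without ever increasing the total crossing count, using that $H_{3-i}$ attaches to the rest of $G$ only through $u$ and $v$; once $\chi(H_1,H_2)=0$, any choice of $P_1,P_2$ is automatically non-crossing. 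Either way, this separation/uncrossing step is the technical heart of the proof, and I expect it to require the most care; the crossing-count bookkeeping afterwards is routine.
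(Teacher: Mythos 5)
Your setup and bookkeeping coincide exactly with the paper's: take an optimal drawing $\mathcal{D}$ of $G$, keep the subdrawing of $H_i$, insert the virtual edge $(u,v)$ along the trace of a $u$--$v$ route through the other side, and observe that the only crossings of $\mathcal{D}$ that can be double-counted are those between $P_1$ and $P_2$, while those between $H_1\setminus P_1$ and $H_2\setminus P_2$ are counted zero times; so everything reduces to $\chi(P_1,P_2)\le\chi(H_1\setminus P_1,H_2\setminus P_2)$. The problem is that this reduction is the easy half, and the half you defer---the existence of a suitable pair of routes, or the uncrossing of the two sides---is the actual content of the lemma, and neither of your two sketches closes it. The exchange argument breaks down because $2$-connectedness of $G$ gives you nothing \emph{inside} a single side: the offending edge $e_2$ may lie on every $u$--$v$ path of $H_2$ (e.g.\ if $H_2$ is itself a path, or $e_2$ is a bridge of $H_2$ separating $u$ from $v$), so there is no rerouting ``inside its own side''; and even when an alternative path exists, dodging one crossing can create others, so ``strictly decreasing $\chi(P_1,P_2)$'' is unsupported. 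Note also that you aim at the stronger conclusion $\chi(P_1,P_2)=0$, which is more than the inequality requires. Your ``cleaner alternative''---re-embedding $H_{3-i}\setminus\{u,v\}$ into a single face of the other side ``without ever increasing the total crossing count''---is not an argument but a restatement of the lemma; why such a re-embedding never costs crossings is precisely what must be shown.

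The paper's proof (Appendix~C) differs in one respect that matters for exactly this step: it routes the virtual edge through the \emph{planarization} $L_{3-i}$ of the subdrawing, so a $u$--$v$ route may switch between edges of $H_{3-i}$ at crossing points. This larger supply of routes is what drives the argument: crossings on subedges that admit a subedge-disjoint alternative route are charged against that alternative, and for the remaining ``forced'' subedges $F_1,F_2$ (those lying on every route) a crossing between $e_1\in F_1$ and $e_2\in F_2$ is shown to be impossible---the ends of $e_1$ would lie in different faces of $L_2$, forcing $e_2$ onto a separating cycle of $L_2$, which yields a $u$--$v$ route in $L_2$ avoiding $e_2$ and contradicts $e_2\in F_2$. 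Restricting to genuine graph paths in $H_i$, as you do, removes the face/cycle structure that makes this contradiction available. To complete your proof you would need either to adopt the planarized-route viewpoint or to actually prove your uncrossing claim; as written, the technical heart of the lemma is missing.
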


We will draw cut-components inside triangles bounded by 
their three oldest vertices.

\begin{lemma}
\label{lem:2connectedNEW}
Let $G$ be a $2$-connected graph with an alternating path decomposition ${\cal P}$ 
of width~$3$.
Then there exists an algorithm to create a straight-line
drawing of $G$ with at most $2\ncr(G)$ crossings.  
All anchor-edges are drawn without crossings, and the three oldest
vertices $\{v_1,v_2,v_3\}$ form the corners of the triangular convex hull of the drawing.
\end{lemma}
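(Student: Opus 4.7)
I proceed by induction on $n$. The base case $n \leq 4$ is immediate: any $2$-connected subgraph of $K_4$ admits a planar drawing inside any specified triangle with three of its own vertices as corners. For the inductive step, I distinguish two cases, exploiting the flexibility in the preliminaries (Section~\ref{sec:Preliminaries}) that for any two vertices $p,q \in X_1$ of the alternating path decomposition, one can ensure $\{p,q\} \subseteq \{v_1,v_2,v_3\}$.

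If $G$ is $3$-traceable, Lemma~\ref{lem:condStar} already yields a drawing of $G$ with at most $2\ncr(G)$ crossings and no crossings on anchor edges, via the algorithm of Section~\ref{sec:max3PW} applied to the maximization of $G$. To additionally force the convex hull to be the triangle $\{v_1,v_2,v_3\}$, I would modify that algorithm's initial step: choose the path decomposition such that $v_1,v_2,v_3$ are the three specified corners, then draw the initial $K_4$ on $\{v_1,v_2,v_3,v_4\}$ with $\{v_1,v_2,v_3\}$ as the outer triangle and $v_4$ inside. Each subsequent iteration is executed strictly inside the big triangle: the $\ell_1$ ``interior'' vertices are placed in the bounded face of $T(C_i)$ that contains no other drawn vertex, and the $\ell_2$ ``exterior'' vertices are placed along a thin line in a wedge emanating from $\lost{i+1}$ into the subregion of $\{v_1,v_2,v_3\}$ on the opposite side of $T(C_i)$. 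Since this is a topological relocation of the standard algorithm's placements (only the role of the unbounded face is played by a bounded face adjacent to $v_1$), the combinatorial structure is unchanged and the crossing bounds of Lemmas~\ref{lem:upper} and~\ref{lem:condStar} carry through verbatim.

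If $G$ is not $3$-traceable, then either some non-anchor vertex has degree less than $3$, or some edge $(\emerge{i-1},\lost{i})$ is missing. In the first case, the two neighbors of such a singleton in $T(C_i)$ form a separation pair inside an anchor-triplet. In the second case, the pair $A := T(C_{i-1})\cap T(C_i)$ is a separation pair: the path-decomposition property forbids any edge between vertices strictly left and strictly right of the shared bag, and the only remaining candidate edge between the two sides via $\emerge{i-1}$ and $\lost{i}$ is precisely the one assumed missing. Invoking Lemma~\ref{lem:mightypower} with this separation pair $\{u,v\}$, I split $G$ into $H_1^+$ and $H_2^+$, each $2$-connected of pathwidth at most $3$, with $\ncr(H_1^+)+\ncr(H_2^+) \leq \ncr(G)$. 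Choosing each $H_i^+$'s path decomposition so that $\{u,v\}$ is among its three oldest vertices, induction produces straight-line drawings $D_i$ of $H_i^+$ with at most $2\ncr(H_i^+)$ crossings, triangular convex hull $\{u,v,w_i\}$, and with edge $(u,v)$ uncrossed (as an anchor edge of $H_i^+$).

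For gluing, I would arrange $G$'s path decomposition so that $\{v_1,v_2,v_3\}=\{u,v,w_1\}$ (possible because all three lie in a common bag), place $D_1$ as the outer drawing inside the triangle $\{v_1,v_2,v_3\}$, and embed an affinely scaled copy of $D_2$ inside a thin triangular pocket of $D_1$ that abuts the shared edge $(u,v)$. Such a pocket exists because $(u,v)$ lies on the convex hull of $D_1$ and is uncrossed: choosing the pocket thin enough along $(u,v)$ on its interior side makes it meet only edges of $D_1$ incident to $u$ or $v$, which bound wedges that the corresponding edges of $D_2$ at $u$ and $v$ can slot into. The combined drawing is straight-line and good, has convex hull $\{v_1,v_2,v_3\}$, preserves all uncrossed anchor edges (since anchor edges of $G$ descend to anchor edges of $H_1^+$ or $H_2^+$, including the shared $(u,v)$), and has at most $2(\ncr(H_1^+)+\ncr(H_2^+)) \leq 2\ncr(G)$ crossings. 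The main obstacle is this geometric gluing step: verifying that a sufficiently small affine image of $D_2$ fits into the pocket without any of its edges crossing non-incident edges of $D_1$ reduces to a careful wedge-angle analysis at $u$ and $v$, enabled by the straight-line property of both subdrawings.
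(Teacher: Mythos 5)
Your overall skeleton (induct; handle the $3$-traceable case via Lemma~\ref{lem:condStar}; otherwise split at a separation pair inside an anchor-triplet, recurse, and glue along the uncrossed edge $(u,v)$ using Lemma~\ref{lem:mightypower}) matches the paper's, but two steps do not go through as written. First, in the $3$-traceable case you obtain the prescribed outer triangle by ``turning the algorithm inside out'' and call this a topological relocation whose bounds carry over verbatim. A topological relocation preserves the crossing count but not straight-line-ness, and the lemma demands a straight-line drawing; moreover, a direct straight-line version of your modification fails already in the first iteration: the region between $T(C_1)=\{v_2,v_3,v_4\}$ and the fixed outer triangle $\{v_1,v_2,v_3\}$ is cut into faces by the edges incident to $v_1$, so the $\ell_2$ ``exterior'' vertices can no longer see all of $T(C_1)$, and extra crossings (e.g.\ with $(v_1,v_4)$) appear that Lemma~\ref{lem:upper} does not account for. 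The fix the paper uses, and which you need, is simpler: run the unmodified algorithm of Section~\ref{sec:max3PW} on the \emph{reversed} path decomposition, so that it finishes (still growing outward) with $\{v_1,v_2,v_3\}$ as the outer triangle.

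Second, your gluing requires $\{u,v\}$ to be among the three oldest vertices of \emph{both} pieces and the hull of the outer piece $D_1$ to be $\{u,v,w_1\}=\{v_1,v_2,v_3\}$. That forces $u,v\in\{v_1,v_2,v_3\}$, which is false in general: $u,v$ lie in the interior bag shared by $C_{i-1}$ and $C_i$, while $v_1,v_2,v_3$ lie in $X_1$. The correct bookkeeping is asymmetric: only the \emph{inserted} piece $G_r^+$ (whose first bag contains $u,v$) needs $u,v$ among its three oldest vertices, so that $(u,v)$ becomes a side of its hull triangle; in the outer piece $G_\ell^+$ the hull stays $\{v_1,v_2,v_3\}$ and $(u,v)$ is merely an uncrossed anchor edge somewhere inside, which already suffices to open a thin empty lens along it into which a flattened affine copy of the other drawing is placed. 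Relatedly, treating every degree-$2$ non-anchor vertex as a separation-pair split places it \emph{inside} the drawing; this breaks the hull invariant when $v_1$ itself has degree $2$, a case the paper handles separately by drawing the reduced graph with $v_1$'s two neighbours on the outer triangle and reinserting $v_1$ on the outside.
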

\begin{proof}
We prove the result by induction on the structure and size of the graph.

{\bf Base case:} $G$ is $3$-traceable or a $K_4$.  
If $G=K_4$, the claim is obvious. Otherwise, 
we apply Lemma~\ref{lem:condStar}.  However, the algorithm of
Theorem~\ref{thm:pw3max} used therein grows the drawing ``outwards'', 
while we would now like the oldest vertices to form the outer triangle.  
Thus we apply the algorithm for the reverse path decomposition; 
this makes (by suitably placing the last vertex)
$T(C_1)=\{v_1,v_2,v_3\}$ the outer face and draws it as a triangle.

{\bf Induction Step:} $G$ is neither $3$-traceable nor a $K_4$.
For every non-anchor vertex $w \neq v_1$ of degree $2$, let $p_w,q_w$ be its adjacent
anchor vertices. We can temporarily remove $w$ from $G$, ensure that the reduced graph contains edge $(p_w,q_w)$, draw the reduced graph, and---since $(p_w,q_w)$ will
be drawn crossing free by the induction hypothesis---reinsert each $w$ with $(p_w,w),(w,q_w)$
crossing-free close to the drawing of $(p_w,q_w)$. Similarly, we can remove $v_1$ if it has degree $2$:  We can choose an age-order of the reduced graph $G'$ 
such that the neighbors of $v_1$ are among the three oldest vertices of
$G'$ and hence draw $G'$ such that the neighbors of $v_1$ are on the
outer-triangle;
then $v_1$ can be reinserted
on the outside to form the desired outer triangle. 
If the graph became $3$-traceable by these operations, 
we are done (base case). Otherwise, we can now assume that all non-anchor vertices
have degree $3$. 

Since $G$ is not $3$-traceable, $(\emerge{i-1},\lost{i})\not\in G$
for some $2\leq i\leq \kappa$.  
There exists a unique bag $X_{j}$, the common bag of $C_{i-1}$ and $C_i$,
that contains both $\emerge{i-1}$ and $\lost{i}$.  Let $p,q$
be the two other vertices in this bag, and observe that 
$T(C_{i-1})=\{p,q,\lost{i}\}$ while $T(C_i)=\{p,q,\emerge{i-1}\}$.
Let $G_\ell$ be the graph induced by all vertices that
appear in bags ${\cal P}_\ell := [X_{1},X_{j-2}]$, and let $G_r$ be
the graph induced by all vertices that appear in bags ${\cal P}_r := [X_{j+2},X_{\xi}]$.
Any edge of $G$ appears in $G_\ell$ or $G_r$, since 
$\{\lost{i},\emerge{i-1}\}$ is the only
vertex-pair that existed in bags of ${\cal P}$, but neither of ${\cal P}_\ell$ nor ${\cal P}_r$.
Clearly, $\{p,q\}$ is a separation pair with $G_\ell\cap G_r = \{p,q\}$.

Define 
$G_\ell^+=G_\ell\cup \{(p,q)\}$ and $G_r^+=G_r\cup \{(p,q)\}$.
By the addition of edge $(p,q)$ (if it did not already exist), both graphs are $2$-connected. 
Apply induction to $G_r^+$ (with path decomposition ${\cal P}_r$)
and $G_\ell^+$ (with the path decomposition ${\cal P}_\ell$).
Since $p,q$ belong to the first bag of ${\cal P}_r$, we can
ensure that they are among the three oldest vertices of $G_r^+$.
We obtain two drawings $\mathcal{D}_1^+, \mathcal{D}_2^+$ in both of 
which $(p,q)$ is not crossed. We can insert (affinely transformed)
$\mathcal{D}_2^+$, which has $(p,q)$ on its bounding triangle,
along $(p,q)$ in $\mathcal{D}_1^+$ without
additional crossings. Finally, we remove edge $(p,q)$ from the resulting
drawing if $(p,q) \not\in E(G)$.

By induction hypothesis, $\ncr(\mathcal{D}_\ell^+) \leq 2\ncr(G_\ell^+)$ and
$\ncr(\mathcal{D}_r^+) \leq 2\ncr(G_r^+)$. By Lemma~\ref{lem:mightypower}, 
$\ncr(G_\ell^+) + \ncr(G_r^+) \leq \ncr(G)$ and since the gluing
gave no new crossings, the claim follows.
\end{proof}

We are now ready to establish the theorem for general pathwidth-$3$ graphs.

\begin{theorem}
\label{thm:approx}
Let $G$ be any pathwidth-$3$ graph. We have $\rcr(G)\leq 2\ncr(G)$, and
a linear time algorithm to create a good straight-line 
drawing of $G$ with at most $2\ncr(G)$ crossings.
\end{theorem}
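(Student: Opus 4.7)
Plan: My approach is to decompose $G$ into its biconnected components (blocks), apply Lemma~\ref{lem:2connectedNEW} to each block, and glue the resulting drawings at cut vertices without introducing new crossings. Since crossing number is well known to be additive over blocks, $\ncr(G) = \sum_B \ncr(B)$, it suffices to produce for each block $B$ a straight-line drawing with at most $2\ncr(B)$ crossings together with a method to combine them.

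Assuming $G$ is connected (otherwise I draw each component separately and place them disjointly in the plane), I would root the block-cut tree of $G$ at an arbitrary block. For each non-root block $B$, let $v_B$ be the unique cut vertex shared between $B$ and its parent block. Every block inherits pathwidth at most $3$ from $G$, so I would construct an alternating path decomposition of $B$ of width at most $3$ with $v_B \in X_1$. The flexibility noted in Section~\ref{sec:Preliminaries} then lets me make $v_B$ one of the three oldest vertices. Applying Lemma~\ref{lem:2connectedNEW} (or drawing $B$ trivially if $B = K_2$) yields a straight-line drawing $\mathcal{D}_B$ with at most $2\ncr(B)$ crossings in which $v_B$ is a corner of the triangular convex hull of $\mathcal{D}_B$.

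To glue, I process the block-cut tree in depth-first order. When attaching a child block's drawing $\mathcal{D}_B$ at its cut vertex $v_B$ in the current partial drawing, I pick a small angular sector at $v_B$ lying between two consecutive edges incident to $v_B$ and within a locally edge-free region. An affine transformation fixing $v_B$, rotating appropriately, and scaling $\mathcal{D}_B$ down fits the entire drawing into this sector; this is possible precisely because $v_B$ lies on the convex hull of $\mathcal{D}_B$. No new crossings are thus introduced, so the total count is $\sum_B 2\ncr(B) = 2\ncr(G)$, and because the resulting drawing is straight-line this also bounds $\rcr(G)$. Linear running time follows from linear-time block decomposition, linear-time construction of each block's path decomposition and drawing via Lemma~\ref{lem:2connectedNEW}, and constant-time merging per cut vertex.

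The main obstacle is establishing that every non-root block $B$ admits a path decomposition of width at most $3$ with $v_B$ in the first bag $X_1$. This combinatorial step is what enables $v_B$ to be placed among the three oldest vertices of $B$, and hence as a corner of the convex hull of $\mathcal{D}_B$. For $B=K_2$ this is trivial; in the 2-connected case one must argue that some path decomposition of $B$ can always be reoriented to begin at a bag containing $v_B$ without increasing its width, which is the crux of making the whole gluing strategy go through.
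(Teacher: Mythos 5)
Your overall plan coincides with the paper's: additivity of $\ncr$ over blocks, Lemma~\ref{lem:2connectedNEW} applied per block, and gluing at cut vertices inside a small empty angular sector after an affine transformation. The gap is exactly the step you yourself flag as ``the main obstacle'': you need every non-root block $B$ to admit a width-$3$ alternating path decomposition with the attachment vertex $v_B$ in the first bag, and you do not prove this. In the form you propose it --- reorienting some width-$3$ path decomposition of $B$ \emph{considered in isolation} so that it begins at a bag containing $v_B$ --- it is not clear the claim is even achievable: prescribing a vertex for the first bag can in general force a larger width, and nothing in your argument rules this out. So the crux of the theorem is left open.

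The paper closes this gap differently (Appendix~\ref{sec:1conn}): it proves by induction on the number of $2$-connected components the stronger statement that for any vertex $p$ lying in the first bag of the \emph{given} decomposition ${\cal P}$ of $G$, one can produce a straight-line drawing with at most $2\ncr(G)$ crossings and with $p$ on the convex hull. For a cut vertex $v$ with cut-components $G_1,\dots,G_k$ (where $p\in G_1$), each $G_i$ with $i\geq 2$ keeps the path decomposition \emph{induced by the global} ${\cal P}$, and $v$ is simply added to every earlier bag $X_h$ of ${\cal P}_i$ that meets $G_i$. The width stays $3$ because any such $X_h$ lies between $X_1\ni p$ and the first bag containing $v$, hence must contain a vertex of a $p$--$v$ path inside $G_1$; that vertex is not in $G_i$, so $X_h$ restricted to $G_i$ has at most three vertices and has a free slot for $v$. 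This use of the global decomposition, together with connectivity through the sibling component to guarantee the free slot, is precisely the ingredient your proposal is missing; without it (or an actual proof of your reorientation claim), the gluing strategy does not go through.
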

\begin{proof}(Sketch)
If $G$ is $2$-connected, then the result holds by Lemma~\ref{lem:2connectedNEW}.
It is well known that $\ncr(G)$ is additive over the $2$-connected components
of $G$. When gluing at cut-vertices, the cut-vertex must be
on the outer face of the drawing to be inserted into the other. 
We can achieve this while maintaining a straight-line drawing 
by choosing appropriate path decompositions; see~\cite[Appendix~D]{BiedlCDM16}.
The running time follows 
as in Theorem~\ref{thm:maxPW3}.
\end{proof}

\section{Approximation Algorithm for Graphs of Higher Pathwidth}
\label{sec:higherPW}

We now study the crossing number of graphs that have 
pathwidth $\w\geq 4$, and are maximal within this class.  We give an algorithm to draw
such graphs, and show that the number of crossings in the resulting 
drawing is within a factor of $4\w^3$ of the crossing number.
As opposed to Section~\ref{sec:AlgMax3}, the drawings we create here
are not straight-line drawings.


As before we assume that
we have an alternating path decomposition $\mathcal{P} = \{X_i\}_{1 \leq i \leq \xi}$ of width $\w$.
We again use the {\em age-order} $\{v_1,\dots,v_n\}$ of the vertices of $G$.
Define $G_i$ to be the graph induced by vertices $v_1,\dots,v_i$,
and use $\deg_{G_i}(v)$ to denote the number of neighbors that $v$ has
within graph $G_i$.
For any $1\leq i\leq n$, let the {\em predecessors} of vertex $v_i$ be
those neighbors that are older.  We will only use this concept for $i\geq \w+1$,
which implies that $v_i$ has exactly $\w$ predecessors by maximality of $G$.
We enumerate them as $\{p^i_1,\dots,p^i_{\w}\}$
in age-order, with $p^i_1$ the oldest.    

\smallskip
\noindent\textbf{Drawing algorithm.}
We create a drawing of $G$ by starting with $G_{\w+1}$ (the graph induced by
$v_1,\dots,v_{\w+1}$) and then iteratively adding vertex $v_i$.  We maintain 
the following invariants for the drawing of $G_i$ (see also Figure~\ref{fig:higherpw}):
\begin{itemize}
\item Vertex $v_j$ is drawn at $(j,0)$ for all $1\leq j\leq i$.
\item The drawing is contained in the half-space
	$\{(x,y):x\leq i\}$.
\item All vertices $w$ in the bag introducing $v_i$ are
\emph{bottom-visible}, i.e., the vertical ray downward from $w$ does not intersect any edge.
\end{itemize}
We start by placing $v_1,\dots,v_{\w+1}$ at their specified coordinates, and draw
the edges between them as half-circles above the $x$-axis.  This satisfies
the above invariants and gives rise to ${\w+1 \choose 4}$ crossings
since crossings are in 1-to-1-correspondence with subsets of 4 vertices.

\begin{figure}[t]
        \def\svgwidth{\textwidth} 
\begingroup%
  \makeatletter%
  \providecommand\color[2][]{%
    \errmessage{(Inkscape) Color is used for the text in Inkscape, but the package 'color.sty' is not loaded}%
    \renewcommand\color[2][]{}%
  }%
  \providecommand\transparent[1]{%
    \errmessage{(Inkscape) Transparency is used (non-zero) for the text in Inkscape, but the package 'transparent.sty' is not loaded}%
    \renewcommand\transparent[1]{}%
  }%
  \providecommand\rotatebox[2]{#2}%
  \ifx\svgwidth\undefined%
    \setlength{\unitlength}{575.51274414bp}%
    \ifx\svgscale\undefined%
      \relax%
    \else%
      \setlength{\unitlength}{\unitlength * \real{\svgscale}}%
    \fi%
  \else%
    \setlength{\unitlength}{\svgwidth}%
  \fi%
  \global\let\svgwidth\undefined%
  \global\let\svgscale\undefined%
  \makeatother%
  \begin{picture}(1,0.21598132)%
    \put(0,0){\includegraphics[width=\unitlength]{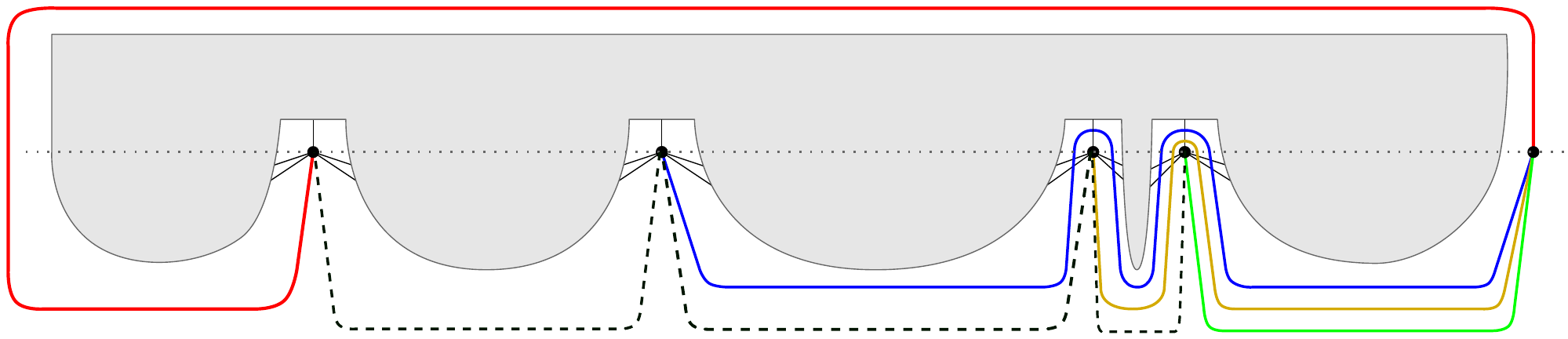}}%
    \put(0.19982181,0.14683079){\color[rgb]{0,0,0}\makebox(0,0)[lb]{\smash{$p_1^i$}}}%
    \put(0.42223218,0.14683079){\color[rgb]{0,0,0}\makebox(0,0)[lb]{\smash{$p_2^i$}}}%
    \put(0.70024514,0.14683079){\color[rgb]{0,0,0}\makebox(0,0)[lb]{\smash{$p_3^i$}}}%
    \put(0.75584773,0.14683079){\color[rgb]{0,0,0}\makebox(0,0)[lb]{\smash{$p_w^i$}}}%
    \put(0.98520842,0.12598162){\color[rgb]{0,0,0}\makebox(0,0)[lb]{\smash{$v_i$}}}%
    \put(0.47833471,0.17182474){\color[rgb]{0,0,0}\makebox(0,0)[lb]{\smash{drawing of $G_{i-1}$}}}%
  \end{picture}%
\endgroup%
\caption{The construction for higher pathwidth: edge routings when adding vertex $v_i$.}
\label{fig:higherpw}
\end{figure}

Assume $G_{i-1}$ is drawn and
consider $v_i$, for $i \geq \w+2$.  Place $v_i$ as specified, i.e., 
to the right of all previous vertices and edges.  
Let $p^i_1,\dots,p^i_{\w}$ be the predecessors of $v_i$, all of which are bottom-visible by the invariant.
We draw the edges to them using two different methods (and then redraw previous edges as a third step for each $i$). See also Figure~\ref{fig:higherpw}.
\begin{itemize}
\item The edge to $p^i_1$ (the oldest predecessor) is routed counterclockwise around the drawing of $G_{i-1}$
until it is below but slightly to the left of $p^i_1$, from where it connects to $p^i_1$.  We need no crossings, and 
	all predecessors remain bottom-visible.

\item All other $\w-1$ edges incident to $v_i$ are routed together as a bundle from $v_i$ leftward below the drawing of $G_{i-1}$. 
    This allows $v_i$ to be bottom-visible. Whenever the bundle is slightly to the right of some
    $p^i_k$, $\w\geq k\geq 2$, one of the bundle's lines (the lowest one) connects to $p^i_k$. The remaining bundle lines
    go counterclockwise around $p^i_k$, in its direct vicinity, until they are to the left of $p^i_k$ and below $G_{i-1}$. The bundle
    hence crosses every edge incident to $p^i_k$ in $G_{i-1}$, but no other edges, and $p^i_k$ remains bottom-visible. This drawing scheme continues
    until the last bundle line connects to $p^i_2$. 

\item Finally, we redraw the edges $(p^i_{k-1},p^i_{k})$ for $3\leq k\leq \w$;
they exist by maximality.
Both ends of any such edge are bottom-visible, so
we can redraw it without
crossing below the entire drawing, including the newly drawn
edges from $v_i$.  We remove the previous drawings of these edges and retain bottom-visibility of the vertices in the current bag.
\end{itemize}

In the full paper~\cite[Appendix~E]{BiedlCDM16} we analyze 
the number of crossings and obtain:

\begin{theorem}
\label{thm:main-higherPW}
Let $G$ be a maximal graph of pathwidth $\w\geq 4$.  The described
algorithm runs in linear time and finds a
drawing of $G$ with at most $2(\w{-}1)(\w{-}2)(2\w{-}4) \ncr(G) \leq 4\w^3 \ncr(G)$ crossings. 
In particular, for any constant pathwidth $\w$, we have an
$O(1)$-approximation of the crossing number.
The drawing is poly-line on a $4n \times \w n$ grid.
\end{theorem}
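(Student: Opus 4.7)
The plan parallels Section~\ref{sec:max3PW}: I bound from above the crossings produced by the construction and bound from below the crossing number of $G$ via cluster bicliques analogous to Observation~\ref{obs:biclique}, then combine the two. For the upper bound, first observe that the initial drawing of $K_{\w+1}$ by $\binom{\w+1}{2}$ half-circles above the $x$-axis produces exactly $\binom{\w+1}{4}$ crossings, since any four collinear endpoints connected pairwise create exactly one crossing. At iteration $i\geq \w+2$, routing $(v_i,p^i_1)$ counterclockwise around the prior drawing creates no crossing, and the redrawings of the anchor edges $(p^i_{k-1},p^i_k)$ for $3\leq k\leq \w$ create none because both endpoints are bottom-visible and each redrawn edge lies beneath the entire prior drawing. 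All new crossings thus come from the bundle of $\w-1$ edges from $v_i$: at each predecessor $p^i_k$ with $2\leq k\leq \w$, exactly $k-2$ bundle lines wrap around $p^i_k$ in its direct vicinity, each crossing every edge of $G_{i-1}$ incident to $p^i_k$ once, for a contribution of at most $(k-2)\deg_{G_{i-1}}(p^i_k)$ crossings at step $i$.

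For the lower bound, I would generalise Observation~\ref{obs:biclique} to pathwidth $\w$: define a \emph{cluster} $C$ as a maximal consecutive run of bags sharing a common $\w$-vertex anchor-set $T(C)$. By maximality, the subgraph induced by $V(C)$ contains, edge-disjointly from the clique on $T(C)$, the biclique $K_{\w,n(C)-\w}$ with $T(C)$ as its $\w$-side. Using a known concrete lower bound of the form $\ncr(K_{\w,m})\geq c\cdot \w(\w-1)\,m(m-1)$ with an explicit constant $c$, each cluster forces at least this many crossings in any good drawing. Adapting the argument of Lemma~\ref{lem:lower}, two distinct cluster bicliques share edges only through a single anchor vertex (the analogue of $\lost{i}$ for width $3$), so by the no-adjacent-crossing rule of good drawings no crossing is attributable to two clusters. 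Summing across clusters gives the required lower bound on $\ncr(G)$.

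Comparing the two bounds cluster by cluster yields the ratio $2(\w-1)(\w-2)(2\w-4)\leq 4\w^3$: the factor $(\w-1)(\w-2)$ arises from $\sum_{k=2}^{\w}(k-2)$ in the upper bound, and the remaining $2(2\w-4)$ absorbs the gap between the aggregated bundle contributions per cluster and the biclique lower bound. The hardest part will be matching the per-iteration sum $\sum_{k}(k-2)\deg_{G_{i-1}}(p^i_k)$, aggregated over the lifetime of a fixed anchor-vertex, with the per-cluster biclique lower bound; the key observation is that bundle-wrappings around a vertex $p$ occur only while $p$ is in the active bag, so its cumulative contribution is capped by the sizes of the clusters containing $p$. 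Linear running time then follows: a width-$\w$ path decomposition can be computed in linear time by Bodlaender's algorithm~\cite{Bodlaender1996,BodlaenderKloks1996}, and each of the $n$ iterations does $O(\w)$ work. For the grid, placing $v_i$ at $(i,0)$ fixes horizontal extent $n$; each new bundle consumes $O(\w)$ rows below the current drawing to remain disjoint from previous bundles, and the counterclockwise route of $(v_i,p^i_1)$ needs only $O(1)$ rows above, so after a straightforward affine rescaling the drawing lies on a $4n\times \w n$ grid.
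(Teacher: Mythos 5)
There is a genuine gap, and it sits exactly where you flag ``the hardest part'': connecting the per-iteration upper bound to a per-cluster lower bound. The paper's proof does \emph{not} generalize the disjointness argument of Lemma~\ref{lem:lower}. It keeps anchor-\emph{triplets} (three oldest vertices of a bag) even for pathwidth $\w$, so each cluster biclique is a $K_{3,n(C)-3}$, not your $K_{\w,n(C)-\w}$; this choice is what makes the upper bound chargeable to clusters, via the facts that $\deg_{G_i}(p^i_3)\geq\deg_{G_i}(p^i_k)$ for $k\geq 4$ and that $\deg_{G_i}(p_3)\leq n(C)-1$ for the triplet-cluster $C$ introducing $v_i$ (Observations~\ref{obs:new-crossings} and~\ref{obs:clusterIntroducedNumber}). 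The price is that triplet-clusters can overlap in many bags, so a single crossing can belong to several cluster bicliques; the paper's key technical step, Lemma~\ref{lem:clusterCrossing}, shows via a case analysis (on whether some endpoint of the crossing is forgotten before the youngest one is introduced) that the overlap is at most $\mu=2\w-5$, and exhibits an example showing this is tight. The factor $2\w-4=\mu+1$ in the stated ratio is precisely this overlap bound; your claim that it ``absorbs the gap'' is not an argument, and your assertion that ``no crossing is attributable to two clusters'' is false for the clusters the upper bound needs.

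If instead you insist on $\w$-vertex anchor-sets so that a Lemma~\ref{lem:lower}-style disjointness argument might survive, the other side breaks: a vertex of an anchor-set persists across many such clusters, so $\deg_{G_i}(p^i_3)$ is not bounded by $n(C)-1$ for the single cluster introducing $v_i$, and the sum $\sum_i \frac{(\w-1)(\w-2)}{2}(\deg_{G_i}(p^i_3)-2)$ cannot be decomposed into terms of order $(n(C)-3)(n(C)-4)$ per cluster. Either way the two bounds do not meet, and the deferred matching step is the whole theorem. (Smaller issues: you also need a concrete lower bound for the initial $K_{\w+1}$ --- the paper uses $\ncr(K_{\w+1})\geq\frac15\binom{\w+1}{4}$ and folds it into the same $\frac{1}{\mu+1}$ accounting --- and an explicit lower bound for $K_{3,m}$ suffices here, so importing an unproved bound for $\ncr(K_{\w,m})$ is unnecessary. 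Your upper-bound accounting and the grid/runtime discussion are essentially in line with the paper.)
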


\section{Conclusions and Open Questions}

We have shown that the path decomposition of a graph can be used to efficiently compute or bound the 
crossing number of a graph. This is the first successful use of such graph decomposition for crossing numbers (besides
the use of a tree decomposition in the special case that $\ncr(G)$ is bounded by a constant~\cite{G,KR}). Several interesting 
questions remain:
\begin{itemize}
\item Can we attain stronger approximation results for general pathwidth-$3$ graphs? The proven ratio of $2$ may simply be due to a too 
weak lower bound, and we, in fact, do currently not know an instance where the algorithm does not obtain the optimum.
\item Can we approximate $\ncr(G)$ for arbitrary (not maximal) pathwidth-$\w$-graphs?
\item In \cite{BiedlCDM16} we only showed weak NP-completeness for the weighted crossing number version on pathwidth-restricted graphs. Can this be strengthened to unweighted graphs?
\end{itemize}
Finally, there is of course the question whether we can use the stronger tool of tree decompositions, instead of path decompositions, 
to achieve crossing number results.





\bibliography{isaac-finalproceedings}

\begin{thebibliography}{10}

\bibitem{BiedlCDM16}
T.~Biedl, M.~Chimani, M.~Derka, and P.~Mutzel.
\newblock Crossing number for graphs with bounded pathwidth.
\newblock {\em CoRR}, abs/1612.03854, 2016.

\bibitem{Bodlaender1996}
H.L. Bodlaender.
\newblock A linear-time algorithm for finding tree-decompositions of small
  treewidth.
\newblock {\em {SIAM} J. Comput.}, 25(6):1305--1317, 1996.

\bibitem{BodlaenderKloks1996}
H.L. Bodlaender and T.~Kloks.
\newblock Efficient and constructive algorithms for the pathwidth and treewidth
  of graphs.
\newblock {\em J. Algorithms}, 21(2):358--402, 1996.

\bibitem{Bokal}
D.~Bokal.
\newblock On the crossing numbers of cartesian products with paths.
\newblock {\em J. Comb. Theory Ser. B}, 97(3):381--384, May 2007.

\bibitem{CabelloAPX}
S.~Cabello.
\newblock Hardness of approximation for crossing number.
\newblock {\em Discrete {\&} Computational Geometry}, 49(2):348--358, 2013.

\bibitem{CM11}
S.~Cabello and B.~Mohar.
\newblock Crossing number and weighted crossing number of near-planar graphs.
\newblock {\em Algorithmica}, 60(3):484--504, 2011.

\bibitem{TighterInsertion}
M.~Chimani and P.~Hlin{\v{e}}n{\'y}.
\newblock A tighter insertion-based approximation of the crossing number.
\newblock {\em Journal of Combinatorial Optimization}, pages 1--43, 2016.

\bibitem{exactmei}
M.~Chimani and P.~Hlin\v{e}n{\'{y}}.
\newblock Inserting multiple edges into a planar graph.
\newblock In {\em SoCG 2016}, pages 30:1--30:15. LIPIcs, 2016.

\bibitem{apex}
M.~Chimani, P.~Hlin\v{e}n\'y, and P.~Mutzel.
\newblock Vertex insertion approximates the crossing number for apex graphs.
\newblock {\em European Journal of Combinatorics}, 33:326--335, 2012.

\bibitem{Chuzhoy}
J.~Chuzhoy.
\newblock An algorithm for the graph crossing number problem.
\newblock In {\em STOC~'11}, pages 303--312. {ACM}, 2011.

\bibitem{Courcelle1990}
B.~Courcelle.
\newblock The monadic second-order logic of graphs. {I. R}ecognizable sets of
  finite graphs.
\newblock {\em Information and Computation}, 85(1):12--75, 1990.

\bibitem{deKlerk}
E.~de~Klerk, J.~Maharry, D.V. Pasechnik, R.B. Richter, and G.~Salazar.
\newblock Improved bounds for the crossing numbers of {$K_{m,n}$ and $K_n$}.
\newblock {\em SIAM J. Discr. Math.}, 20(1):189--202, 2006.

\bibitem{cit:fox-pach-suk}
J.~Fox, J.~Pach, and A.~Suk.
\newblock Approximating the rectilinear crossing number.
\newblock In {\em GD 2016}, LNCS 9801, pages 413--426. Springer, 2016.

\bibitem{ghls}
I.~Gitler, P.~Hlin\v{e}n\'y, J.~Leanos, and G.~Salazar.
\newblock The crossing number of a projective graph is quadratic in the
  face-width.
\newblock {\em Electronic Journal of Combinatorics}, 15(1):\#R46, 2008.

\bibitem{G}
M.~Grohe.
\newblock Computing crossing numbers in quadratic time.
\newblock {\em J. Comput. Syst. Sci.}, 68(2):285--302, 2004.

\bibitem{cit:petr}
P.~Hlin{\v{e}}n{\'{y}}.
\newblock Crossing-number critical graphs have bounded path-width.
\newblock {\em J. Comb. Theory, Ser. {B}}, 88(2):347--367, 2003.

\bibitem{surfaceApprox}
P.~Hlin\v{e}n\'y and M.~Chimani.
\newblock Approximating the crossing number of graphs embeddable in any
  orientable surface.
\newblock In {\em SODA~'10}, pages 918--927, 2010.

\bibitem{torusHS}
P.~Hlin\v{e}n\'y and G.~Salazar.
\newblock Approximating the crossing number of toroidal graphs.
\newblock In {\em ISAAC~'07}, LNCS 4835, pages 148--159. Springer, 2007.

\bibitem{KR}
K-I. Kawarabayashi and B.~Reed.
\newblock Computing crossing number in linear time.
\newblock In {\em STOC '07}, pages 382--390, 2007.

\bibitem{Kleitman}
D.J. Kleitman.
\newblock The crossing number of ${K}_{5,n}$.
\newblock {\em J. of Comb. Theory}, 9(4):315--323, 1970.

\bibitem{Klesc}
M.~Kle\v{s}\v{c} and J.~Petrillov\'{a}.
\newblock The crossing numbers of products of path with graphs of order six.
\newblock {\em Discussiones Mathematicae Graph Theory}, 33(3):571--582, 2013.

\bibitem{KloksBook}
T.~Kloks.
\newblock {\em Treewidth, Computations and Approximations}.
\newblock LNCS 842. Springer, 1994.

\bibitem{Leighton1983}
F.T. Leighton.
\newblock {\em Complexity Issues in VLSI: Optimal Layouts for the
  Shuffle-exchange Graph and Other Networks}.
\newblock MIT Press, Cambridge, MA, USA, 1983.

\bibitem{panrichter}
S.~Pan and R.B. Richter.
\newblock The crossing number of ${K}_{11}$ is 100.
\newblock {\em Journal of Graph Theory}, 56(2):128--134, 2007.

\bibitem{RichterPeter}
R.B. Richter and G.~Salazar.
\newblock The crossing number of ${P}({N},3)$.
\newblock {\em Graphs and Combinatorics}, 18(2):381--394, 2002.

\bibitem{schaefer}
M.~Schaefer.
\newblock The graph crossing number and its variants: A survey.
\newblock {\em Electronic Journal of Combinatorics}, \#DS21, May 15, 2014.

\bibitem{vrto}
I.~Vrt{'}o.
\newblock Crossing numbers of graphs: A bibliography.
\newblock \url{ftp://ftp.ifi.savba.sk/pub/imrich/crobib.pdf}, 2014.

\bibitem{WT}
D.R. Wood and J.A. Telle.
\newblock Planar decompositions and the crossing number of graphs with an
  excluded minor.
\newblock {\em New York J. Math.}, 13:117--146, 2007.

\end{thebibliography}


\appendix

\section{NP-hardness of weighted crossing number}
\label{app:nphard}

\begin{figure}
\hspace*{\fill}
        \def\svgwidth{.3\textwidth} 
\begingroup%
  \makeatletter%
  \providecommand\color[2][]{%
    \errmessage{(Inkscape) Color is used for the text in Inkscape, but the package 'color.sty' is not loaded}%
    \renewcommand\color[2][]{}%
  }%
  \providecommand\transparent[1]{%
    \errmessage{(Inkscape) Transparency is used (non-zero) for the text in Inkscape, but the package 'transparent.sty' is not loaded}%
    \renewcommand\transparent[1]{}%
  }%
  \providecommand\rotatebox[2]{#2}%
  \ifx\svgwidth\undefined%
    \setlength{\unitlength}{167.62443848bp}%
    \ifx\svgscale\undefined%
      \relax%
    \else%
      \setlength{\unitlength}{\unitlength * \real{\svgscale}}%
    \fi%
  \else%
    \setlength{\unitlength}{\svgwidth}%
  \fi%
  \global\let\svgwidth\undefined%
  \global\let\svgscale\undefined%
  \makeatother%
  \begin{picture}(1,0.98207933)%
    \put(0,0){\includegraphics[width=\unitlength]{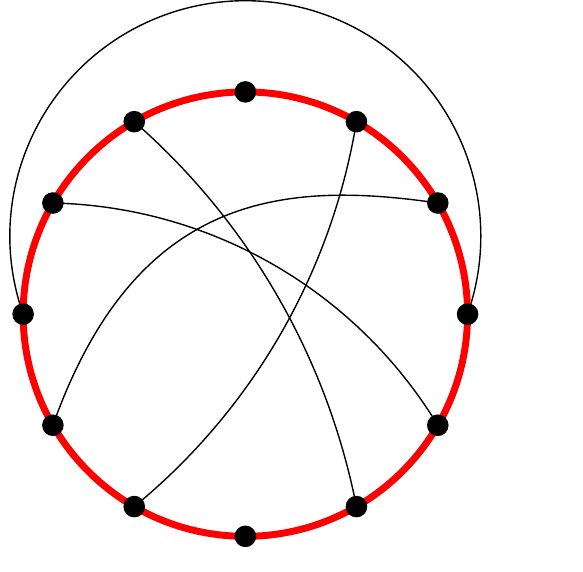}}%
    \put(0.42153758,0.85231602){\color[rgb]{0,0,0}\makebox(0,0)[lb]{\smash{$x_0$}}}%
    \put(0.20327586,0.81338374){\color[rgb]{0,0,0}\makebox(0,0)[lb]{\smash{$x_1$}}}%
    \put(0.03367056,0.67645225){\color[rgb]{0,0,0}\makebox(0,0)[lb]{\smash{$x_2$}}}%
    \put(-0.06996621,0.46556466){\color[rgb]{0,0,0}\makebox(0,0)[lb]{\smash{$x_3$}}}%
    \put(0.19075382,0.04280416){\color[rgb]{0,0,0}\makebox(0,0)[lb]{\smash{$x_n$}}}%
    \put(0.43207298,0.00606115){\color[rgb]{0,0,0}\makebox(0,0)[lb]{\smash{$y_0$}}}%
    \put(0.64178593,0.07877142){\color[rgb]{0,0,0}\makebox(0,0)[lb]{\smash{$y_1$}}}%
    \put(0.78727184,0.24744603){\color[rgb]{0,0,0}\makebox(0,0)[lb]{\smash{$y_2$}}}%
    \put(0.8357493,0.43818186){\color[rgb]{0,0,0}\makebox(0,0)[lb]{\smash{$y_3$}}}%
    \put(0.61017353,0.81340158){\color[rgb]{0,0,0}\makebox(0,0)[lb]{\smash{$y_n$}}}%
  \end{picture}%
\endgroup%
\hspace*{\fill}
        \def\svgwidth{.3\textwidth} 
\begingroup%
  \makeatletter%
  \providecommand\color[2][]{%
    \errmessage{(Inkscape) Color is used for the text in Inkscape, but the package 'color.sty' is not loaded}%
    \renewcommand\color[2][]{}%
  }%
  \providecommand\transparent[1]{%
    \errmessage{(Inkscape) Transparency is used (non-zero) for the text in Inkscape, but the package 'transparent.sty' is not loaded}%
    \renewcommand\transparent[1]{}%
  }%
  \providecommand\rotatebox[2]{#2}%
  \ifx\svgwidth\undefined%
    \setlength{\unitlength}{225.4bp}%
    \ifx\svgscale\undefined%
      \relax%
    \else%
      \setlength{\unitlength}{\unitlength * \real{\svgscale}}%
    \fi%
  \else%
    \setlength{\unitlength}{\svgwidth}%
  \fi%
  \global\let\svgwidth\undefined%
  \global\let\svgscale\undefined%
  \makeatother%
  \begin{picture}(1,0.83411711)%
    \put(0,0){\includegraphics[width=\unitlength]{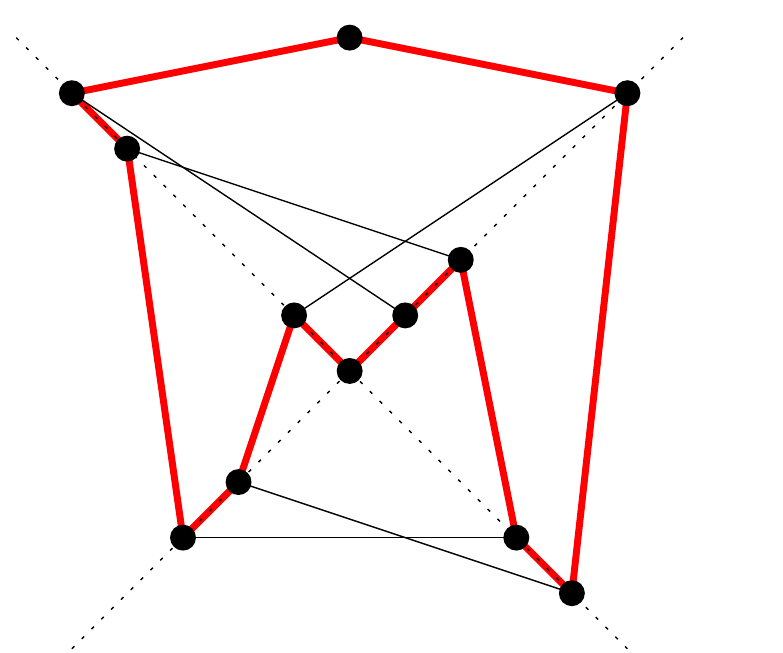}}%
    \put(0.44715707,0.80742211){\color[rgb]{0,0,0}\makebox(0,0)[lb]{\smash{$x_0$}}}%
    \put(-0.01313967,0.6786225){\color[rgb]{0,0,0}\makebox(0,0)[lb]{\smash{$x_1$}}}%
    \put(0.0640459,0.61743543){\color[rgb]{0,0,0}\makebox(0,0)[lb]{\smash{$x_2$}}}%
    \put(0.13365123,0.14074084){\color[rgb]{0,0,0}\makebox(0,0)[lb]{\smash{$x_3$}}}%
    \put(0.25678925,0.4078303){\color[rgb]{0,0,0}\makebox(0,0)[lb]{\smash{$x_n$}}}%
    \put(0.44350525,0.3008444){\color[rgb]{0,0,0}\makebox(0,0)[lb]{\smash{$y_0$}}}%
    \put(0.5340772,0.40884629){\color[rgb]{0,0,0}\makebox(0,0)[lb]{\smash{$y_1$}}}%
    \put(0.60781099,0.476881){\color[rgb]{0,0,0}\makebox(0,0)[lb]{\smash{$y_2$}}}%
    \put(0.66592288,0.18320116){\color[rgb]{0,0,0}\makebox(0,0)[lb]{\smash{$y_3$}}}%
    \put(0.81982747,0.69100608){\color[rgb]{0,0,0}\makebox(0,0)[lb]{\smash{$y_n$}}}%
  \end{picture}%
\endgroup%
\hspace*{\fill}
        \def\svgwidth{.3\textwidth} 
\begingroup%
  \makeatletter%
  \providecommand\color[2][]{%
    \errmessage{(Inkscape) Color is used for the text in Inkscape, but the package 'color.sty' is not loaded}%
    \renewcommand\color[2][]{}%
  }%
  \providecommand\transparent[1]{%
    \errmessage{(Inkscape) Transparency is used (non-zero) for the text in Inkscape, but the package 'transparent.sty' is not loaded}%
    \renewcommand\transparent[1]{}%
  }%
  \providecommand\rotatebox[2]{#2}%
  \ifx\svgwidth\undefined%
    \setlength{\unitlength}{225.4bp}%
    \ifx\svgscale\undefined%
      \relax%
    \else%
      \setlength{\unitlength}{\unitlength * \real{\svgscale}}%
    \fi%
  \else%
    \setlength{\unitlength}{\svgwidth}%
  \fi%
  \global\let\svgwidth\undefined%
  \global\let\svgscale\undefined%
  \makeatother%
  \begin{picture}(1,0.83411711)%
    \put(0,0){\includegraphics[width=\unitlength]{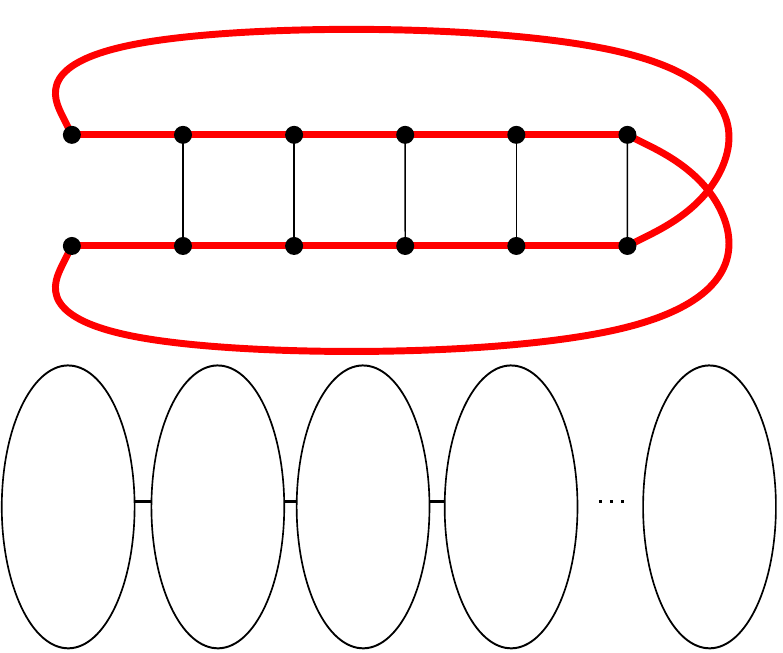}}%
    \put(0.09183673,0.68984793){\color[rgb]{0,0,0}\makebox(0,0)[lb]{\smash{$x_0$}}}%
    \put(0.23428572,0.69070415){\color[rgb]{0,0,0}\makebox(0,0)[lb]{\smash{$x_1$}}}%
    \put(0.38122448,0.68872552){\color[rgb]{0,0,0}\makebox(0,0)[lb]{\smash{$x_2$}}}%
    \put(0.52759539,0.68855709){\color[rgb]{0,0,0}\makebox(0,0)[lb]{\smash{$x_3$}}}%
    \put(0.81569649,0.68939978){\color[rgb]{0,0,0}\makebox(0,0)[lb]{\smash{$x_n$}}}%
    \put(0.09967172,0.46512347){\color[rgb]{0,0,0}\makebox(0,0)[lb]{\smash{$y_0$}}}%
    \put(0.24607804,0.46339315){\color[rgb]{0,0,0}\makebox(0,0)[lb]{\smash{$y_1$}}}%
    \put(0.38314557,0.45960891){\color[rgb]{0,0,0}\makebox(0,0)[lb]{\smash{$y_2$}}}%
    \put(0.52428124,0.45850411){\color[rgb]{0,0,0}\makebox(0,0)[lb]{\smash{$y_3$}}}%
    \put(0.81235581,0.46348197){\color[rgb]{0,0,0}\makebox(0,0)[lb]{\smash{$y_n$}}}%
    \put(0.05124071,0.29424025){\color[rgb]{0,0,0}\makebox(0,0)[lb]{\smash{$x_0$}}}%
    \put(0.05124071,0.23469523){\color[rgb]{0,0,0}\makebox(0,0)[lb]{\smash{$x_1$}}}%
    \put(0.05124071,0.17515048){\color[rgb]{0,0,0}\makebox(0,0)[lb]{\smash{$x_2$}}}%
    \put(0.05155265,0.11560546){\color[rgb]{0,0,0}\makebox(0,0)[lb]{\smash{$y_0$}}}%
    \put(0.05155265,0.05606057){\color[rgb]{0,0,0}\makebox(0,0)[lb]{\smash{$y_1$}}}%
    \put(0.86766628,0.23469523){\color[rgb]{0,0,0}\makebox(0,0)[lb]{\smash{$x_n$}}}%
    \put(0.86766628,0.11560546){\color[rgb]{0,0,0}\makebox(0,0)[lb]{\smash{$y_n$}}}%
    \put(0.86766628,0.29424025){\color[rgb]{0,0,0}\makebox(0,0)[lb]{\smash{$x_0$}}}%
    \put(0.86766628,0.05606057){\color[rgb]{0,0,0}\makebox(0,0)[lb]{\smash{$y_{n-1}$}}}%
    \put(0.86766628,0.17515048){\color[rgb]{0,0,0}\makebox(0,0)[lb]{\smash{$y_0$}}}%
    \put(0.23741533,0.29423993){\color[rgb]{0,0,0}\makebox(0,0)[lb]{\smash{$x_0$}}}%
    \put(0.23741533,0.2346949){\color[rgb]{0,0,0}\makebox(0,0)[lb]{\smash{$x_2$}}}%
    \put(0.23741533,0.05606065){\color[rgb]{0,0,0}\makebox(0,0)[lb]{\smash{$y_1$}}}%
    \put(0.23741533,0.11560554){\color[rgb]{0,0,0}\makebox(0,0)[lb]{\smash{$y_0$}}}%
    \put(0.23741533,0.17515015){\color[rgb]{0,0,0}\makebox(0,0)[lb]{\smash{$y_2$}}}%
    \put(0.42293388,0.05606057){\color[rgb]{0,0,0}\makebox(0,0)[lb]{\smash{$y_2$}}}%
    \put(0.42293388,0.17515048){\color[rgb]{0,0,0}\makebox(0,0)[lb]{\smash{$x_3$}}}%
    \put(0.42293388,0.11560546){\color[rgb]{0,0,0}\makebox(0,0)[lb]{\smash{$y_0$}}}%
    \put(0.42293388,0.23469523){\color[rgb]{0,0,0}\makebox(0,0)[lb]{\smash{$x_2$}}}%
    \put(0.42293388,0.29424025){\color[rgb]{0,0,0}\makebox(0,0)[lb]{\smash{$x_0$}}}%
    \put(0.61627111,0.23469523){\color[rgb]{0,0,0}\makebox(0,0)[lb]{\smash{$x_3$}}}%
    \put(0.61627111,0.05606057){\color[rgb]{0,0,0}\makebox(0,0)[lb]{\smash{$y_3$}}}%
    \put(0.61627111,0.29424025){\color[rgb]{0,0,0}\makebox(0,0)[lb]{\smash{$x_0$}}}%
    \put(0.61627111,0.17515048){\color[rgb]{0,0,0}\makebox(0,0)[lb]{\smash{$y_0$}}}%
    \put(0.61627111,0.11560546){\color[rgb]{0,0,0}\makebox(0,0)[lb]{\smash{$y_2$}}}%
  \end{picture}%
\endgroup%
\hspace*{\fill}
\caption{(left) A drawing of $G$ for $n = 4$.
Edges of $Q$ are bold red. (center) An equivalent straight-line drawing.
(right) $G$, viewed as M{\"o}bius-strip, with a path decomposition of width $4$.}
\label{fig:NPhardness}
\end{figure}

The weighted rectilinear crossing number problem asks:
Given a graph $G=(V,E)$, edge weights $w\colon E\rightarrow \Bbb{N}_0^+$,
and a threshold $K$, is there a straight-line drawing $\mathcal{D}$ of $G$
such that
$$wcr(\mathcal{D}):= \sum_{\substack{e_1,e_2\in E, \\ \text{$e_1$ and $e_2$ cross in $\mathcal{D}$}}} w(e_1)\cdot w(e_2) \leq K\quad ?$$
In this section, we prove the following:

\begin{theorem}
\label{thm:np}
The weighted and weighted rectilinear crossing number problems are weakly NP-hard
already for (maximal) pathwidth-$4$ graphs that have non-weighted crossing number~$1$. 
\end{theorem}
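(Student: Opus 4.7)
The plan is to establish weak NP-hardness via a polynomial-time reduction from the PARTITION problem. Given non-negative integers $a_1,\dots,a_n$ with $\sum_i a_i=2S$, I would build the graph $G$ depicted in Figure~\ref{fig:NPhardness} — two parallel tracks $x_0,\dots,x_n$ and $y_0,\dots,y_n$ joined in a Möbius-strip pattern — together with an edge weighting $w$ and a threshold $K=S^2$, such that $\mathit{wcr}(G,w)\le K$ iff the $a_i$'s split into two equal-sum subsets.

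I would first verify the structural facts that the figure suggests. The alternating path decomposition on the right, whose bags each have size $5$, certifies pathwidth~$4$; completing each bag into a clique yields a maximal pathwidth-$4$ graph, so the construction is within the claimed graph class. The drawings in the left and center panels witness $\ncr(G)=\rcr(G)=1$, with the unique crossing appearing among edges of the distinguished ``belt'' set $Q\subset E(G)$ (the bold red edges), indexed $q_1,\dots,q_n$. For the reduction itself, I assign $w(q_i):=a_i$ to the $i$-th belt edge and weight every other edge of $G$ with a large dummy value $M$ (say $M>(2S)^2$) so that no weighted-optimal drawing can afford a crossing incident to a non-$Q$ edge. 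In particular, the ``clever'' single-crossing drawing realizing $\ncr(G)=1$ uses a heavy crossing of weight $\ge M>S^2$, so it is not competitive in the weighted setting.

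The central structural claim is that the drawings of $G$ whose crossings all lie inside $Q$ correspond bijectively to partitions $\{1,\dots,n\}=I\uplus J$, interpreted as routing each belt edge on one or the other side of the Möbius central cycle; moreover, such a drawing has weighted cost exactly $\bigl(\sum_{i\in I}a_i\bigr)\bigl(\sum_{j\in J}a_j\bigr)$, coming from the $|I|\cdot|J|$ unit crossings forced by the Möbius topology between oppositely-routed belt edges, each of which contributes $a_i\cdot a_j$. By AM--GM, this product is at least $S^2$ with equality iff $\sum_{i\in I}a_i=\sum_{j\in J}a_j=S$, so $\mathit{wcr}(G,w)\le S^2$ iff the given PARTITION instance admits a balanced solution. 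Since the encoding is polynomial in the numerical values of the $a_i$'s, this yields weak NP-hardness.

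The main obstacle will be the structural step: a careful topological analysis showing that once the heavy subgraph is drawn planar, each valid drawing is governed by a partition of the belt indices, that same-side belt edges can be routed disjointly, and that opposite-side belt edges are forced to cross exactly once each by the Möbius topology. Once this is in place, the AM--GM argument is immediate. Because the single-crossing drawing in the middle panel of Figure~\ref{fig:NPhardness} is already straight-line and our construction only routes belt edges in the regions permitted by that drawing, the same reduction establishes hardness both for the general weighted and for the weighted rectilinear version, proving Theorem~\ref{thm:np}.
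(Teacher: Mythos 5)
Your overall blueprint matches the paper's: a reduction from \textsc{Partition} using a cycle on $x_0,\dots,x_n,y_0,\dots,y_n$ plus $n$ weighted chords $e_i=(x_i,y_i)$, with the cycle edges made too heavy to cross so that a drawing is determined by which chords go inside and which go outside. (As a side remark, in the paper $Q$ denotes the $(2n{+}2)$-cycle, whose edges are the heavy ones, while the $a_i$'s sit on the chords; your ``belt set $Q$'' of $n$ weighted edges conflates the two, but that is only a naming issue.)

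The substantive problem is that your crossing count is inverted, and this breaks the reduction. Because the chord endpoints appear on the cycle in the order $x_1,\dots,x_n,y_1,\dots,y_n$, any two chords $e_i,e_j$ are \emph{interleaved}, so two chords routed on the \emph{same} side of the cycle are forced to cross, while oppositely-routed chords need not cross at all. Hence the weighted cost of the drawing associated with $I\uplus J$ is $\sum_{i<j,\,i,j\in I}a_ia_j+\sum_{i<j,\,i,j\in J}a_ia_j=\tfrac12(A^2+B^2)-c$ with $A=\sum_{i\in I}a_i$, $B=\sum_{j\in J}a_j$, $c=\tfrac12\sum_i a_i^2$; this is \emph{minimized} at $S^2-c$ exactly when $A=B=S$ (since $A^2+B^2=2S^2+2d^2$ for $d=A-S$), which is what makes the threshold $K=S^2-c$ work. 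Under your model, where the cost is $\bigl(\sum_{i\in I}a_i\bigr)\bigl(\sum_{j\in J}a_j\bigr)$ from crossings between \emph{oppositely}-routed chords, the minimizer is the trivial partition $I=\{1,\dots,n\}$, $J=\emptyset$ with cost $0$, so every instance would satisfy $\mathit{wcr}\le S^2$ and the reduction distinguishes nothing. Your AM--GM step also points the wrong way: for $A+B=2S$ one has $AB\le S^2$ with equality iff $A=B=S$, not ``$\ge S^2$''. So the key lemma you defer (``opposite-side belt edges are forced to cross'') is false for this graph and must be replaced by the interleaving argument for same-side chords, together with the corrected threshold $S^2-c$.
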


Our reduction is from {\sc Partition}, defined as follows.
Given $n$ positive integers $a_1,\dots,a_n$ with $\sum_{i=1}^n = 2S$, 
does there exist a $J\subset \{1,\dots,n\}$ such
that $\sum_{i\in J} a_i=S$.  
Given a {\sc Partition } instance ${\mathcal I}$, define graph $G$
as described in the proof sketch as a $2n{+}2$-cycle $Q$ and $n$ chords $e_i=(x_i,y_i)$ 
with weight $a_i$ for $i=1,\dots,n$.
We must show that ${\mathcal I}$ is a yes-instance if and only if $G$
has a straight-line drawing $\mathcal{D}$ with $wcr(\mathcal{D})\leq S^2-c$, where
$c=\frac{1}{2} \sum_{i=1}^n a_i^2$ depends only on $\mathcal{I}$.

Assume first that there exists some $J\subset \{1,\dots,n\}$ with
$\sum_{i\in J} a_i =S$.  
Figure~\ref{fig:NPhardness} shows how to create a straight-line drawing
of $G$: Place vertices $x_1,\dots,x_n$ on the left legs of an $X$-shape, and
vertices $y_1,\dots,y_n$ on the right legs of the $X$, using the
upper/lower leg depending on whether $i\in J$.  With the help of
$x_0$ and $y_0$, the cycle can then be completed without crossing.

Consider a pair $i,j$ with $i\in J$ and $j\not\in J$.  Then $e_i$ is
drawn between the two upper legs of the $X$ (hence inside $Q$)
and while $e_j$ is drawn between the two lower legs of the $X$ (hence
outside $Q$), which means that they
cannot cross.  Also no edge of $Q$ has a crossing.  In consequence,
the number of crossings is at most
\begin{eqnarray*}
\sum_{i,j \in J} a_i\cdot a_j + \sum_{i,j\not \in J} a_i \cdot a_j  
& = & \frac{1}{2}\left( (\sum_{i\in J} a_i)^2 - (\sum_{i\in J} a_i^2)\right)
+ \frac{1}{2}\left( (\sum_{i\not\in J} a_i)^2 - (\sum_{i\not\in J} a_i^2)\right) \\
& = & \frac{1}{2}\left( S^2 - (\sum_{i\in J} a_i^2)\right)
+ \frac{1}{2}\left( S^2 - (\sum_{i\not\in J} a_i^2)\right) = S^2-c 
\end{eqnarray*}
as desired. 

For the other direction, assume that we have a straight-line
drawing $\mathcal{D}$ of $G$ with $wcr(\mathcal{D})\leq S^2-c$.  Since $c>0$, no edge
of $Q$ can have a crossing.  Define $J$ to be the indices of all those edges $e_i$
that are drawn inside $Q$.  Any two such edges must cross each other, since
the order of their endpoints is interleaved on $Q$.  Likewise, any two
edges $e_i,e_j$ with $i,j\not\in Q$ must cross each other.  In consequence,
we have
\begin{eqnarray*}
wcr(\mathcal{D}) & \geq & \sum_{i,j \in J} a_i\cdot a_j + \sum_{i,j\not \in J} a_i \cdot a_j 
 =  \frac{1}{2}(\sum_{i\in J} a_i)^2 
+ \frac{1}{2}(\sum_{i\not\in J} a_i)^2 - c 
\end{eqnarray*}
Define $d=\sum_{i\in J} a_i - S = S-\sum_{i\not\in J} a_i$ (note that $d$ could be positive or negative).
Then
$$wcr(\mathcal{D}) \geq \frac{1}{2}(S-d)^2 + \frac{1}{2}(S+d)^2-c = S^2+d^2-c.$$
But we assumed $wcr(\mathcal{D})\leq S^2-c$, which implies $d^2=0=d$ and hence $\sum_{i\in J} a_i=S$
as desired.

\section{Proof of Theorem~\ref{thm:resolution}}
\label{sec:resolution}

We explain how to place points for the algorithm in
Section~\ref{sec:max3PW} so that the resulting drawing has
linear coordinates.  This involves a paradigm-shift in explaining
how the drawing is created.  In Section~\ref{sec:max3PW}, we added
vertices from the point of view of adding cluster $C_i$.  This
added half of the singletons near $(\lost{i},\emerge{i-1})$, and 
the other half near
$(\lost{i+1},\emerge{i})$.  We now change this around, and describe the
algorithm in terms of all those singletons (coming from both
$C_i$ and $C_{i-1}$) that need to be added near one edge $(\lost{i},\emerge{i-1})$.
Let there be $s_i$ such singletons
(in terms of the notation of Section~\ref{sec:max3PW}, we have
$s_i=\ell_2(C_{i-1})-1+\ell_1(C_i)$).

We first explain how to place all anchor vertices and
$v_1,v_n$.
We first split the vertices except $v_1$ into three
groups.  We put $v_2$ in group $\mathcal{G}_T$ (``top''), $v_3$ in $\mathcal{G}_L$ (``(lower) left''), and $v_4$ in $\mathcal{G}_R$ (``(lower) right'').
For any edge $(\lost{i},\emerge{i-1})$, $i\geq 2$, its incident vertices are in the same group.
We now place the considered vertices as follows (see also Fig.~\ref{fig:resolution})\footnote{The coordinates 
are chosen to be easy to define and analyze;
		the constant factor could likely be improved by making more careful choices.}:
\begin{itemize}
\item $v_1$ is placed at the origin.
\item $v_2$ is placed at $(0,10n)$, i.e., on the vertical upward ray from $v_1$.
\item $v_3$ is placed at $(-10n,-10n)$, i.e., on the diagonal downward-left ray from $v_1$.
\item $v_4$ is placed at $(10n,-10n)$, i.e., on the diagonal downward-right ray from $v_1$.
\item Now, iteratively for $i=2,\ldots,\kappa$, consider  edge $(\lost{i},\emerge{i-1})$.  Vertex $\lost{i}$ has already
	been placed, while we do not have a placement for $\emerge{i-1}$ yet.
\begin{itemize}
\item If $\lost{i} \in \mathcal{G}_T$, then place $\emerge{i-1}$ on the vertical ray upward from $v_1$,
	and $s_i+5$ units higher than $\lost{i}$.
\item If $\lost{i} \in \mathcal{G}_L$, then place $\emerge{i-1}$ on the diagonal downward-left ray from $v_1$,
	and $s_i+4$ units farther left of and $s_i+4$ units further down from $\lost{i}$.
\item If $\lost{i} \in \mathcal{G}_R$, then place $\emerge{i-1}$ on the diagonal downward-right ray from $v_1$,
	and $s_i+4$ units farther right of and $s_i+4$ units further down from $\lost{i}$.
\end{itemize}
\end{itemize}

\begin{figure}[t]
\hspace*{\fill}
\includegraphics[width=100mm,page=1]{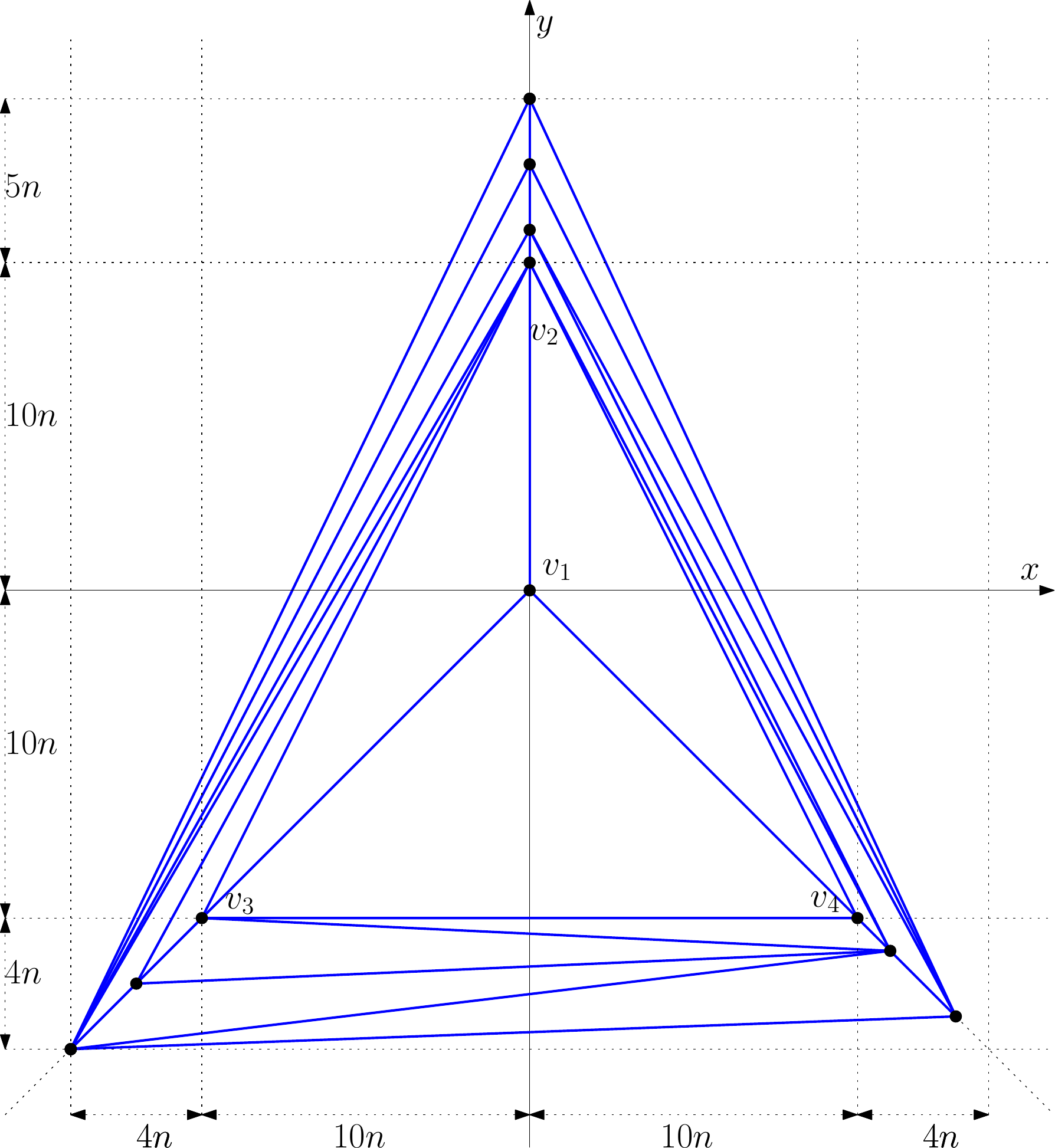}
\hspace*{\fill}
\caption{The overall layout (to scale).}
\label{fig:resolution}
\end{figure}

One immediately verifies that this placement gives a planar drawing of the
graph induced by the so-far considered vertices: Any edge either lies on a ray or connects two different
rays, and as we go along in age-order, the current anchor triangle always forms the outer-face and the next
vertex is placed outside of it.  We briefly analyze the size of this drawing:

\begin{claim}
The drawing uses only points in the range $(-14n,14n)\times (-14n,15n)$.
\end{claim}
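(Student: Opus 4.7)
The plan is to trace the cumulative displacement along each of the three rays emanating from $v_1=(0,0)$, and then check that singletons and $v_n$ do not escape the region bounded by the anchor vertices.

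\emph{Combinatorial estimate.}
Since $\ell_1(C_j)+\ell_2(C_j)$ equals the number of singletons of $C_j$ plus one, and since the four vertices of $X_1$ together with the $\kappa$ emerging vertices and all singletons account for every vertex of $G$, the total number of singletons is exactly $n-4-\kappa$. A telescoping summation then yields
\[
\sum_{i=2}^{\kappa} s_i \;=\; \sum_{j=1}^{\kappa}\bigl(\ell_1(C_j)+\ell_2(C_j)\bigr) - \ell_1(C_1) - \ell_2(C_\kappa) - (\kappa-1) \;\leq\; n-3-\kappa.
\]

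\emph{Top ray.}
Starting at $v_2=(0,10n)$, every iteration with $\lost{i}\in\mathcal{G}_T$ moves the next anchor vertex up by $s_i+5$ units. Summing over these (at most $\kappa-1$) iterations,
\[
\sum_{i=2}^{\kappa}(s_i+5) \;\leq\; (n-3-\kappa) + 5(\kappa-1) \;=\; n+4\kappa-8 \;\leq\; 5n,
\]
using $\kappa\leq n$. Hence all anchor vertices placed on the top ray have $y$-coordinate at most $10n+5n=15n$.

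\emph{Left and right rays.}
An entirely analogous argument, with step size $s_i+4$ in both coordinates, gives
\[
\sum_{i=2}^{\kappa}(s_i+4) \;\leq\; (n-3-\kappa) + 4(\kappa-1) \;\leq\; 4n,
\]
so anchor $x$-coordinates on the left (resp.\ right) ray stay at least $-14n$ (resp.\ at most $14n$), and anchor $y$-coordinates on either diagonal ray stay at least $-14n$.

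\emph{Singletons and $v_n$.}
Finally, each singleton is placed on a grid point in a small neighborhood of the segment between two consecutive anchor vertices on a ray; its coordinates therefore differ from those of the two adjacent anchors by at most the relevant $+5$ or $+4$ slack, which is already absorbed in the bounds above. The same argument handles $v_n$, placed symmetrically to $v_1$. Altogether, every drawn point lies in $(-14n,14n)\times(-14n,15n)$.

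The main obstacle is not the bounding-box estimate itself (a routine telescoping) but the underlying geometric construction that must fit the $s_i$ singletons together with all their incident edges into the slack left by the $+5$ and $+4$ offsets, while preserving goodness, straightness, and crossing-optimality of the drawing produced in Section~\ref{sec:max3PW}; once that placement is pinned down, the bounding-box claim is immediate from the per-ray summations above.
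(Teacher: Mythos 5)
Your proposal is correct and follows essentially the same route as the paper's own proof: bound the cumulative displacement along each ray by $\sum_i(s_i+5)$ (resp.\ $\sum_i(s_i+4)$), bound $\sum_i s_i$ by the total number of singletons, and use $\kappa\leq n$. Your telescoping count of the singletons is just a slightly sharper version of the paper's crude bound $n-\kappa$, and your remark that singletons stay inside inner faces near their edge matches the paper's closing observation.
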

\begin{proof}
Consider the topmost vertex above $v_1$.  In the worst case, all the vertices
are placed above $v_1$. Thus, the largest $y$-coordinate is at most $10n+(\kappa-1)5 + (n - \kappa) < 15n$,
where $n - \kappa$ is the upper bound on the number of all singletons.
Similarly, any vertex on the other two rays
has horizontal and vertical distance less than $10n+4n$ from $v_1$, and the claim follows. 
\end{proof}

\begin{claim}
Any edge $(u,v)$ from the left-down ray to the right-down ray has slope in $(-\frac{1}{5}, \frac{1}{5})$.
\end{claim}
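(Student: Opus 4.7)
The plan is to make the coordinates along the two rays explicit, bound how far along each ray a vertex can lie, and then compute the slope directly. The placement rule places $v_3$ at $(-10n,-10n)$ and, each time a new vertex $\emerge{i-1}$ is added to $\mathcal{G}_L$, shifts it by $s_i+4$ units further left and further down from its predecessor on the ray. By a trivial induction, every vertex on the left-down ray therefore has coordinates of the form $(-10n-t,-10n-t)$ for some $t\geq 0$, and symmetrically every vertex on the right-down ray has coordinates $(10n+t',-10n-t')$ for some $t'\geq 0$.

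Next I would invoke the previous claim, which already pins the whole drawing inside $(-14n,14n)\times(-14n,15n)$. Applied to a left-ray vertex $u=(-10n-t_u,-10n-t_u)$ this immediately forces $t_u\leq 4n$; analogously $t_v\leq 4n$ for a right-ray vertex $v=(10n+t_v,-10n-t_v)$. This is really just reading off the bound established by the previous claim, so no independent counting argument is needed.

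With these forms in hand the slope computation is mechanical. For $u$ on the left-down ray and $v$ on the right-down ray,
\[
\Delta x = (10n+t_v)-(-10n-t_u) = 20n+t_u+t_v,\qquad \Delta y = (-10n-t_v)-(-10n-t_u) = t_u-t_v,
\]
so the slope has absolute value $|t_u-t_v|/(20n+t_u+t_v)$. This is maximized (for fixed $\max(t_u,t_v)$) when the other is $0$, giving the bound $\max(t_u,t_v)/(20n+\max(t_u,t_v))\leq 4n/24n = 1/6$, which lies strictly inside $(-\tfrac15,\tfrac15)$.

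There is essentially no obstacle here beyond bookkeeping: the only substantive input is the parametrization of ray-vertices and the bounding-box inequality from the previous claim. The slight slack between $1/6$ and $1/5$ is actually useful, since it leaves room for the drawing to remain good even if the placement of singletons (handled in later claims) perturbs slopes slightly; that said, the statement as given is immediate from the monotone algebra above.
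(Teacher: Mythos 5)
Your proof is correct and follows essentially the same route as the paper: parametrize the two ray vertices as $(-10n-t_u,-10n-t_u)$ and $(10n+t_v,-10n-t_v)$ with $t_u,t_v<4n$ (the paper derives this bound directly from the placement rule rather than citing the bounding-box claim, but it is the same fact) and compute the slope $\frac{|t_u-t_v|}{20n+t_u+t_v}$. Your bound of $1/6$ is in fact slightly tighter than the paper's $<4n/20n=1/5$, since the paper simply drops the $t_u+t_v$ term from the denominator.
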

\begin{proof}
We know that $x(u)=y(u)= -10n-k$ for some $0\leq k < 4n$, and $y(v)=-x(v)=-10n-\ell$ for
some $0\leq \ell < 4n$.
Assume $\ell \leq k$, i.e., the slope is non-negative (the other case is symmetric).
The slope of the edge is hence 
$$ \frac{-10n-\ell-(-10n-k)}{10n+\ell-(-10n-k)}= \frac{k-\ell}{20n+\ell+k} < \frac{4n}{20n}=\frac{1}{5}.$$
\end{proof}

\begin{claim}
Any edge $(u,v)$ from the left-down ray to the vertical-up ray has slope in
$(\frac{10}{7}, 2.9)$.
\end{claim}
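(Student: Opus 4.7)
The plan is to parameterize the coordinates of the two endpoints using the explicit placement rules of the algorithm, express the slope as a simple rational function in those parameters, and then optimize over the allowed ranges of the parameters. The main obstacle is not any single calculation but rather to correctly identify the ranges of the two coordinate parameters; these come directly from the bounds already established in the previous claim.

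First I would write $u = (-10n-a,\,-10n-a)$ for some $a \geq 0$ and $v = (0,\,10n+b)$ for some $b\geq 0$, since these are the only positions allowed on the two rays by the construction. From the previous claim (and its proof), the horizontal/vertical distance of any vertex on the left-down ray from $v_1$ is less than $14n$, so $a < 4n$; and the largest $y$-coordinate is less than $15n$, so $b < 5n$. Then the slope of $(u,v)$ is
\[
\frac{y(v)-y(u)}{x(v)-x(u)} \;=\; \frac{(10n+b) - (-10n-a)}{0 - (-10n-a)} \;=\; \frac{20n+a+b}{10n+a} \;=\; 1 + \frac{10n+b}{10n+a}.
\]

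From this last form both bounds are immediate. For the upper bound, the expression is maximized by taking $a = 0$ and $b$ as large as possible, which yields slope strictly less than $1 + \frac{10n+5n}{10n} = \tfrac{5}{2} < 2.9$. For the lower bound, the expression is minimized by taking $b = 0$ and $a$ as large as possible, giving slope strictly greater than $1 + \frac{10n}{10n+4n} = \tfrac{12}{7} > \tfrac{10}{7}$. Both inequalities are strict because the bounds $a<4n$ and $b<5n$ are strict. This yields the claim.
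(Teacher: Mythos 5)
Your proof is correct and follows essentially the same route as the paper: parameterize the two endpoints by their offsets along the rays, write the slope as a rational function of those offsets, and bound it using the ranges $a<4n$, $b<5n$. Your rewriting as $1+\frac{10n+b}{10n+a}$ even yields slightly sharper constants ($\frac{12}{7}$ and $\frac{5}{2}$) than the paper's cruder numerator/denominator bounds, but the argument is the same.
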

\begin{proof}
We know that $x(u)=y(u)= -10n-k$ for some $0\leq k< 4n$, $x(v)=0$, and $y(v)=10n+\ell$ for
some $0\leq \ell< 5n$.
The slope of the edge is hence 
$$ \frac{10n+\ell-(-10n-k)}{-(-10n-k)}= \frac{20n+k+\ell}{10n+k}$$
and we observe
$$\frac{10}{7} = \frac{20n}{14n} < \frac{20n+k+\ell}{10n+k} < \frac{29n}{10n}=2.9.$$
\end{proof}

We must now add the points for singletons.  Observe that any such vertex
is placed ``near'' an edge $(\lost{i},\emerge{i-1})$ for some index $i \geq 2$, and is then connected
either to all of $T(C_{i-1})$, or to all of $T(C_i)$.  We must hence argue 
that near any edge $(\lost{i},\emerge{i-1})$, we can find $s_i$ grid points, each of which allows straight lines to
all of $T(C_i)\cup T(C_{i-1})$ while intersecting only edge $(\lost{i},\emerge{i-1})$.
We distinguish cases depending on to which 
of the groups $\mathcal{G}_T,\mathcal{G}_L,\mathcal{G}_R$ the two vertices
$\lost{i},\emerge{i-1}$ belong.

\begin{figure}[t]
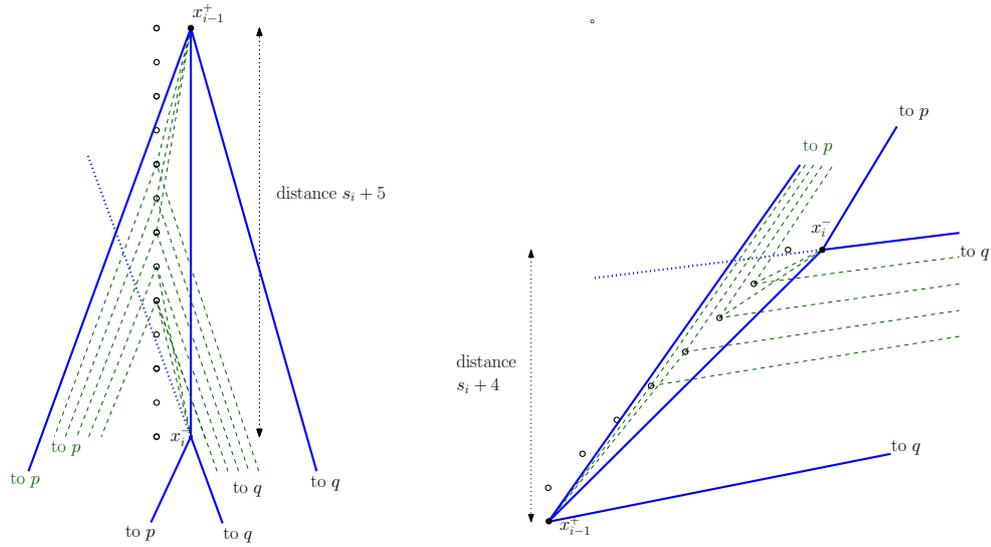

\hspace*{\fill}
\includegraphics[width=0.43\textwidth,page=2,trim=80 0 20 0,clip]{resolution.pdf}
\hspace*{\fill}
\includegraphics[width=0.53\textwidth,page=3,trim=0 0 0 0,clip]{resolution.pdf}
\hspace*{\fill}
\caption{Adding singletons near edge $(\emerge{i-1},\lost{i})$ if it is
(left) vertical with $s_i=7$, or (right) on the lower-left diagonal
with $s_i=4$. For clarity, not all singletons are shown.}
\label{fig:resolution}
\end{figure}

{\bf Case 1: } $\lost{i},\emerge{i-1} \in \mathcal{G}_T$:
Let $p,q$ be the two vertices in $T(C_i) \cap T(C_{i-1})$. They are not in $\mathcal{G}_T$,
and on different rays, say $p \in \mathcal{G}_L$ and $q \in \mathcal{G}_R$.  From a point $x$, we can see
the four vertices $\{p,q,\emerge{i-1},\lost{i}\}$ in the required way if $x$ is within the
triangle $\{p,\emerge{i-1},\lost{i}\}$ 
and above the extension of the edge $(q,\lost{i})$ into that triangle.

Let $P$ be the set of points that are one unit left of the drawing of $(\lost{i}, \emerge{i-1})$,
ends included.  
We have $|P|=s_i+6$ by our construction.  
Edge $(p,\emerge{i-1})$ has slope less than $2.9$, so at most 3 points of $P$ are above $(p,\emerge{i-1})$.
Edge $(p,\lost{i})$ has positive slope, so all points of $P$ are above $(p,\lost{i})$.
Edge $(q,\lost{i})$ has slope more than $-2.9$ (by a symmetric argument), so at most 3 points of $P$ are 
below the extension of $(q,\lost{i})$.  This leaves at least $s_i$ points. 
We use the top points for the singletons of $C_i$ (i.e., connecting to~$\emerge{i-1}$)
and the bottom points for the singletons of $C_{i-1}$ (i.e., connecting to~$\lost{i}$).
The total number of crossings created matches the number achieved in
Section~\ref{sec:max3PW}.

{\bf Case 2: }
$\lost{i},\emerge{i-1} \in \mathcal{G}_L$
(the case $\lost{i},\emerge{i-1} \in \mathcal{G}_R$ is symmetric):
Edge $(\lost{i},\emerge{i-1})$ is drawn with slope $1$.  
Let $p,q$ be the two vertices in $T(C_i) \cap T(C_{i-1})$, say 
$p \in \mathcal{G}_T$, and $q \in \mathcal{G}_R$. 

Let $P$ be the set of grid points that are one unit left of the drawing of $(\lost{i},\emerge{i-1})$
excluding the lowest such grid point.
We have $|P|=s_i+4$ by construction.  
Edge $(\emerge{i-1},p)$ has slope more than $\frac{10}{7}$, while the line from~$\emerge{i-1}$
to the fourth point from the left of $P$ has slope $\frac{4}{3}<\frac{10}{7}$, so
at most 3 points of $P$ are left of edge $(\emerge{i-1},p)$.
Edge $(\lost{i},p)$ has slope $ > 1$, so all points of $P$ are left of $(p,\lost{i})$.
Edge $(q,\lost{i})$ has slope less than $\frac{1}{5}$, while the line from~$\lost{i}$ to
the second point from the right of $P$ has slope $\frac{1}{2}$, 
so only 1 point of $P$ is above the
extension of $(q,\lost{i})$.  This leaves at least $s_i$ points in $P$ that are inside the
face and can see $q$ while only crossing $(\lost{i},\emerge{i-1})$.
We use the bottom points for single-cluster vertices of $C_i$ (i.e., connecting to~$\emerge{i-1}$)
and the top points for single-cluster vertices of $C_{i-1}$ (i.e., connecting to~$\lost{i}$).
The total number of crossings created again matches the number achieved in
Section~\ref{sec:max3PW}.

\medskip

All singletons are placed in an inner face of the drawing.  
The size of the drawing is thus determined by the
coordinates of the vertices placed on the rays in the first step of the algorithm. This
proves Theorem~\ref{thm:resolution}.

\section{Proof of Lemma~\ref{lem:bicomp2}}
\label{app:blocks}

Let $\mathcal{D}$ be a drawing achieving $\ncr(G)$, and
let $\mathcal{D}_i$ be the subdrawing of $\mathcal{D}$ corresponding to $H_i$.
Each of the latter gives rise to a planarly embedded graph $L_i$ of $H_i$, where crossings in $\mathcal{D}_i$ are substituted by degree-4 vertices.
We call edges in $L_i$ \emph{subedges}.
We call a $u$-$v$-path in $L_i$ an \emph{$i$-path}, and
for each $i=1,2$, we choose an $i$-path $P_i$.
Let $\mathcal{D}^+_i\supset\mathcal{D}_i$ be a drawing of $H_i^+$ where $(u,v)$ is drawn into $\mathcal{D}_i$ (without $(u,v)$ if it already existed) following the route of $P_{3-i}$; we have $cr(H_i^+)\leq cr(\mathcal{D}^+_i)$.
Clearly, any crossing in any $\mathcal{D}^+_i$ has a counterpart in $\mathcal{D}$.
Inversely, any crossing in $\mathcal{D}$ can show up in at most one of $\mathcal{D}^+_1$,$\mathcal{D}^+_2$, except for crossings between edges of $P_1$ and $P_2$---so-called \emph{path-crossings}.
We show that for each crossing that we count in both $\mathcal{D}_1^+$ and $\mathcal{D}_2^+$, there is at least one other crossing in $\mathcal{D}$
that is in neither $\mathcal{D}_1^+,\mathcal{D}_2^+$. 

We can assume that any choice of $P_1,P_2$ gives path-crossings, as otherwise we would be done. Furthermore,
for $i=1,2$, we can assume there are no two subedge-disjoint $i$-paths; otherwise, we can pick the one with fewer paths-crossings with 
$P_{3-i}$ as $P_i$ and be done. Similarly, we can account for crossings on a subpath $P'_i=(w\to w')\subseteq P_i$ if there
is another subedge-disjoint subpath connecting $w$ to $w'$ in $L_i$.
Let $F_i\subset P_i$, $i=1,2$, be the subedges that are in every $i$-path. We only have to account for crossings between $F_1$ and $F_2$.
\todo[inline]{Review 1: fewer or equal path crossing}
\todo[inline]{Review 1: Appendix C, line 14: While I see how you would account for one such subpath, what if there are multiple subpaths and their "parallel" alternatives overlap in some edge?  Then we might be using the same crossings over an over to supposedly account for [etc.]}

Assume there is a path-crossing $(e_1,e_2)$, $e_i\in F_i$, even though we choose a crossing minimal insertion route for $P_1$ in $L_2$.  Therefore the ends of $e_1$ lie in different faces of the planar graph $L_2$.
In consequence $e_2$ lies
on a cycle $Q\in L_2$ separating the ends of $e_1$ from each other.
Let $P'_2$ and $P''_2$ be the subpaths after deleting $e_2$ from $P_2$.
The subgraph $S=P'_2\cup (Q\setminus\{e_2\})\cup P''_2$ connects $u$ to $v$ in $L_2$ even though $e_2\not\in S$---a contradiction to $e_2\in F_2$.

\section{Details for Theorem~\ref{thm:approx}}
\label{sec:1conn}

It remains to argue how 2-connected components can be merged
while maintaining straight-line drawings.  For this, we show
that one vertex can be forced to appear at the outer-face.

\begin{lemma}
Let $G$ be a graph with a path decomposition ${\cal P}$ of width 3,
and let $p$ be a vertex in bag $X_1$.  Then there exists
a straight-line drawing of $G$ with at most $2\ncr(G)$ crossings
that has $p$ on the convex hull.
\end{lemma}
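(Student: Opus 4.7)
I would prove the lemma by induction on the number of blocks of $G$ (maximal $2$-connected subgraphs together with bridge edges), using the additivity $\ncr(G)=\sum_B \ncr(B)$ and reducing the base case to Lemma~\ref{lem:2connectedNEW}.

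For the base case, suppose $G$ is $2$-connected (or a bridge edge, which is trivial). Since $p\in X_1$, the flexibility in choosing the age-order noted in Section~\ref{sec:Preliminaries} lets me ensure $p\in\{v_1,v_2,v_3\}$. Lemma~\ref{lem:2connectedNEW} then produces a straight-line drawing with at most $2\ncr(G)$ crossings whose triangular convex hull is $\{v_1,v_2,v_3\}$, so $p$ lies on the convex hull.

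For the inductive step, I pick a leaf $B$ of the block-cut tree of $G$ that does not contain $p$ (one exists since the tree has at least two leaves), let $c$ be the cut vertex of $G$ lying in $B$, and set $G':=G-(V(B)\setminus\{c\})$. Then $G'$ is connected with strictly fewer blocks, and the restriction of $\mathcal{P}$ to $V(G')$ is a width-${\leq}\,3$ path decomposition whose first bag still contains $p$. Induction gives a drawing $\mathcal{D}'$ of $G'$ with at most $2\ncr(G')$ crossings and $p$ on its convex hull. Applying the lemma recursively to $B$ with designated vertex $c$ yields a drawing $\mathcal{D}_B$ of $B$ with at most $2\ncr(B)$ crossings and $c$ on its convex hull. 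I then glue $\mathcal{D}_B$ into $\mathcal{D}'$ at $c$: because $c$ lies on the convex hull of $\mathcal{D}_B$, the drawing $\mathcal{D}_B$ fits in a half-plane bounded by a line through $c$, so after an affine transformation with sufficiently small scale it can be placed inside an open wedge at $c$ disjoint from the rest of $\mathcal{D}'$. The merged drawing is straight-line, has at most $2(\ncr(G')+\ncr(B))=2\ncr(G)$ crossings, and---by choosing the wedge small and directed outside $\mathcal{D}'$---keeps $p$ on its convex hull.

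The main obstacle is the recursive call on $B$: I need a width-${\leq}\,3$ path decomposition of $B$ whose first bag contains $c$. In the restriction of $\mathcal{P}$ to $V(B)$, vertex $c$ appears in a consecutive range $[a,b]$ of bags. If $a=1$ we are done; if $b$ equals the last index we reverse the decomposition; otherwise we use the $2$-connectivity of $B$---which guarantees that the vertices of $B$ before and after $c$'s range interface only through vertices distinct from $c$---to split and reconcatenate the restriction so that $c$ lies in the first bag without increasing the width. This path-decomposition surgery is the most delicate step, but it is purely combinatorial and independent of the drawing construction.
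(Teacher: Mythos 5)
Your overall strategy---induction over the block structure, drawing each $2$-connected piece via Lemma~\ref{lem:2connectedNEW} with the relevant attachment vertex on the convex hull, and gluing scaled copies into small wedges at cut vertices using additivity of $\ncr$ over blocks---matches the paper's proof in spirit. The base case and the gluing step are fine. But the step you yourself flag as ``the most delicate'' is exactly where the proof lives, and what you offer there does not work. You need a width-$3$ path decomposition of the leaf block $B$ whose \emph{first} bag contains the cut vertex $c$, and you propose to ``split and reconcatenate'' the restriction of ${\cal P}$ to $V(B)$ around the interval $[a,b]$ of bags containing $c$. Any such reordering breaks the interval property for a vertex $u\in V(B)$ whose bag-range properly straddles $[a,b]$ on both sides: after moving $c$'s bags to the front, $u$'s occurrences are no longer consecutive, so the result is not a path decomposition. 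Appealing to $2$-connectivity of $B$ does not rescue this; no argument is given for why the surgery preserves validity or width, and the natural reading of it fails.

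The paper closes this gap without any reordering. It observes that if $X_j$ is the first bag of ${\cal P}$ containing the cut vertex, then every earlier bag $X_h$ that contains a vertex of the component to be glued must \emph{also} contain a vertex of a path from $p$ to the cut vertex inside the other side of the separation (every bag of a path decomposition separates what comes before from what comes after, so such a path meets every intermediate bag). That path vertex is not in the component being glued, so the restriction of $X_h$ to that component has at most $3$ vertices, and the cut vertex can simply be \emph{added} to all such bags $X_h$. This extends the cut vertex's interval leftwards to the first relevant bag, keeps the decomposition valid and of width $\le 3$, and requires no reconcatenation. Your proof would be complete if you replaced the surgery by this insertion argument (which adapts verbatim to your leaf-block formulation, using a $p$--$c$ path in $G'$); as written, the central claim is asserted rather than proved.
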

\begin{proof}
Convert ${\cal P}$ into an alternating path decomposition; this
can be done while keeping $p$ in the first bag.  We prove the claim 
by induction
on the number of 2-connected components; in the base case (no
cut-vertex) the claim holds by Lemma~\ref{lem:2connectedNEW}.
If $G$ has a cut-vertex $v$, then let $G_1,\dots,G_k$
be the cut-components of $v$, named such that $G_1$ contains $p$.
Recursively obtain a drawing $\mathcal{D}_1$ of $G_1$ that has $p$ 
on the convex hull, using the induced path decomposition. 

Consider $i\geq 2$ and the path decomposition ${\cal P}_i$ of $G_i$
induced by ${\cal P}$.  If $v$ happens to be in the first bag
of ${\cal P}_i$, then draw $G_i$ recursively with $v$ on
the convex hull, and merge (after an affine transformation) the
result in the vicinity of the drawing of $v$ in $\mathcal{D}_1$.

If ${\cal P}_i$ does not contain $v$ in its first bag, then we
modify it. 
Let $X_j$ be the first bag of ${\cal P}$ that does contain $v$, and let 
$X_h$ be any bag with $h<j$ that contains vertices of $G_i$.
Within $G_1$ there exists a path $P$ from $p$ to $v$,
hence from $X_1$ to $X_j$, hence $X_h$ contains at least one vertex
of $P$.  Since $v\not\in X_h$, $X_h$ must contain at
least one vertex of $G_1-\{v\}$, i.e., not in $G_i$.  Hence in
${\cal P}_i$ we have $|X_h|\leq 3$
and can add $v$ to this bag.  Doing this
for all $X_h$, we obtain a path decomposition of $G_i$ that has
$v$ in its first bag and that is still alternating.
\end{proof}

\section{Approximation Algorithm for Graphs of Higher Pathwidth}
\label{app:higherPW}

In this section, we provide the analysis of the number of crossings
achived by the algorithm presented in Section~\ref{sec:higherPW}.

\subsection{Upper-bounding the number of crossings}

With the routing as described, some edges cross twice for $\w\geq 5$
(e.g., edge $(p^i_2,v_i)$ crosses edge $(p^i_3,p^i_5)$ both near
$p^i_3$ and near $p^i_5$).  We can 
avoid such crossings  by local re-drawings, which can only improve
the overall number of crossings.  But in our counting of crossings
we will not take advantage of this.

We want to bound the number of crossings incurred when 
drawing vertex $v_i$, $i\geq \w+2$.
No new crossings occur in the vicinity of $p^i_1$ or $p^i_2$.
Consider the routing of edge $(p^i_j,v_i)$ in the vicinity of $p^i_k$ for 
some $3\leq j< k\leq \w$.  This edge crosses any edge incident to $p^i_k$
with two exceptions:  It does not cross $(p^i_k,v_j)$, since we ordered edges
within the bundle appropriatedly.  And it does not cross the edge $(p^i_{k-1},
p^i_k)$, since we re-routed that edge to be without crossings after the introduction of $v_i$.  Therefore
edge $(p^i_j,v_i)$ crosses at most $\deg_{G_i}(p^i_k)-2$ other edges
in the vicinity of~$p^i_k$.   Summing up over all $k$ and over the $\w-1$ edges 
added within the bundle of $v_i$ gives:

\begin{observation}
\label{obs:degrees}
Drawing vertex 
$v_i$ gives at most 
$\sum_{j = 3}^{\w}(j-2)(\deg_{G_i}(p^i_j)-2)$ new crossings.
\end{observation}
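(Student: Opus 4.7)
The plan is to localize the newly introduced crossings to the ``vicinities'' of the predecessors $p^i_1,\ldots,p^i_{\w}$ and account for each vicinity separately. By the three-step construction for drawing $v_i$, new crossings can arise only among (i) the edge from $v_i$ to $p^i_1$, (ii) the bundle of $\w-1$ lines from $v_i$ to $p^i_2,\ldots,p^i_{\w}$, and (iii) the re-routed diagonal edges $(p^i_{k-1},p^i_k)$ for $3\leq k\leq \w$. The algorithm explicitly guarantees that (i) and (iii) are crossing-free: the edge to $p^i_1$ is routed counterclockwise around the already-drawn $G_{i-1}$ exploiting bottom-visibility of $p^i_1$, and each diagonal $(p^i_{k-1},p^i_k)$ is re-drawn below the entire current figure, which is possible because both endpoints are bottom-visible after the bundle is placed. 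So I would first reduce to counting crossings caused purely by the bundle.

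Next I would analyze the bundle as it sweeps leftward from $v_i$. At each $p^i_k$ with $k\geq 2$, exactly one bundle line terminates (the current bottom line), and the remaining lines wrap counterclockwise around $p^i_k$ in its immediate vicinity before continuing leftward. Hence the bundle lines that wrap around $p^i_k$ (for $k\geq 3$) are exactly those destined for $p^i_j$ with $2\leq j<k$, giving $k-2$ lines; for $k=2$ the last remaining line terminates, so no line wraps around $p^i_2$ and no crossings are missed.

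The main step is to bound the crossings contributed by each wrap-around line in the vicinity of $p^i_k$. A line $(p^i_j,v_i)$ wrapping around $p^i_k$ crosses every edge of $G_i$ incident to $p^i_k$, with exactly two exceptions already isolated in the paragraph preceding the observation: the edge $(p^i_k,v_i)$, which shares the endpoint $v_i$ and does not cross $(p^i_j,v_i)$ by the ordering of lines within the bundle and the good-drawing convention, and the edge $(p^i_{k-1},p^i_k)$, which has just been re-drawn below everything. Thus each wrap-around line contributes at most $\deg_{G_i}(p^i_k)-2$ crossings at $p^i_k$, for a total of $(k-2)(\deg_{G_i}(p^i_k)-2)$ crossings in the vicinity of $p^i_k$. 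Summing over $k=3,\ldots,\w$ and renaming the index yields the claimed bound.

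The delicate part is less the arithmetic than the verification that the accounting is \emph{complete}: that is, that no pair of new edges crosses away from the vicinity of some $p^i_k$ with $k\geq 3$. This relies on the placement of $v_i$ strictly to the right of the current drawing (so the only new crossings happen where the bundle or the $p^i_1$-edge re-enters the bounding region of $G_{i-1}$), on bottom-visibility being preserved for the current bag (so the re-routed diagonals can truly go below without crossings), and on noting that two wrap-around lines of the same bundle never cross each other because they share the endpoint $v_i$. Once these three facts are established, the observation follows immediately from the per-vicinity bound.
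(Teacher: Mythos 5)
Your proof is correct and follows essentially the same route as the paper's: localize new crossings to the vicinities of $p^i_k$ for $k\geq 3$, note that exactly the $k-2$ bundle lines destined for $p^i_2,\dots,p^i_{k-1}$ wrap around $p^i_k$, and charge each at most $\deg_{G_i}(p^i_k)-2$ crossings after excluding $(p^i_k,v_i)$ and the re-routed $(p^i_{k-1},p^i_k)$. Your additional completeness checks (the $p^i_1$-edge and the re-drawn diagonals are crossing-free, and bundle lines share the endpoint $v_i$) are exactly what the paper's preceding construction guarantees, just spelled out more explicitly.
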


To simplify this bound, we upper-bound the degrees.

\begin{observation}
\label{obs:new-crossings}
For all $k \geq 4$, $\deg_{G_i}(p^i_3)\geq \deg_{G_i}(p^i_k)$. Thus,
drawing vertex $v_i$ adds at most $\frac{(\w-1)(\w-2)}{2}(\deg_{G_i}(p^i_3)-2)$ new crossings.
\end{observation}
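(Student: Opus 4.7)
The plan is to establish the degree inequality $\deg_{G_i}(p^i_3) \geq \deg_{G_i}(p^i_k)$ from the structure of the alternating path decomposition plus maximality, and then the bound on the number of new crossings will fall out by a direct arithmetic combination with Observation~\ref{obs:degrees}.

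First I would pin down the \emph{lifetime} of each predecessor. Let $X^{(b)}$ denote the bag introducing $v_i$, and let $X^{(a_j)}$ be the bag introducing $p^i_j$ for $3\leq j\leq \w$. Since all $p^i_j$ belong to $X^{(b)}$ and are listed in age-order, their introduction indices satisfy $a_3 < a_4 < \cdots < a_\w < b$; moreover, because intervals of bags containing a fixed vertex are contiguous, each $p^i_j$ appears in every bag of the range $[X^{(a_j)},\ldots,X^{(b)}]$. This is the step that will pay off.

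Next I would invoke maximality of $G$: every bag is a clique, so $p^i_j$ is adjacent in $G$ to each vertex sharing any bag with it. In particular, the neighbors of $p^i_j$ that are already present in $G_i$ are \emph{exactly} the vertices appearing in some bag of $[X^{(a_j)},\ldots,X^{(b)}]$ other than $p^i_j$ itself (all such vertices have been introduced by time $v_i$ arrives, and there are no older bags contributing neighbors because $X^{(a_j)}$ is the first bag containing $p^i_j$). Writing $V_j$ for the set of vertices appearing in bags $X^{(a_j)},\ldots,X^{(b)}$, we have $\deg_{G_i}(p^i_j)=|V_j|-1$. Since $a_3 < a_k$ for any $k\geq 4$, the bag-range for $p^i_3$ is a superset of that for $p^i_k$, hence $V_3 \supseteq V_k$ and the desired inequality $\deg_{G_i}(p^i_3)\geq \deg_{G_i}(p^i_k)$ follows.

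Finally, plugging the inequality into Observation~\ref{obs:degrees} yields the crossing bound: the number of new crossings created by drawing $v_i$ is at most
\[
\sum_{j=3}^{\w}(j-2)\bigl(\deg_{G_i}(p^i_j)-2\bigr)\;\leq\;\bigl(\deg_{G_i}(p^i_3)-2\bigr)\sum_{j=3}^{\w}(j-2)\;=\;\frac{(\w-1)(\w-2)}{2}\bigl(\deg_{G_i}(p^i_3)-2\bigr),
\]
using $\sum_{j=3}^{\w}(j-2)=\sum_{\ell=1}^{\w-2}\ell=\tfrac{(\w-1)(\w-2)}{2}$. There is no real obstacle here; the only subtle point is to cleanly argue that the relevant neighbors of $p^i_j$ really are the vertices appearing in the contiguous bag-range $[X^{(a_j)},X^{(b)}]$ (in particular that there are no ``hidden'' neighbors from outside this range), but this is exactly what the interval property of path decompositions combined with maximality provides.
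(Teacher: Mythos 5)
Your proof is correct and follows essentially the same route as the paper's: both establish $\deg_{G_i}(p^i_3)\geq\deg_{G_i}(p^i_k)$ by combining the interval property of the path decomposition with maximality (every bag is a clique) to show that the $G_i$-neighbourhood of $p^i_k$ is contained, up to swapping the roles of $p^i_3$ and $p^i_k$, in that of $p^i_3$, and then plug the inequality into Observation~\ref{obs:degrees} using $\sum_{j=3}^{\w}(j-2)=\tfrac{(\w-1)(\w-2)}{2}$. Two cosmetic remarks: the introduction indices need only satisfy $a_3\leq a_4\leq\cdots$ (predecessors lying in the first bag $X_1$ share their introduction bag, so strictness can fail), which is all your argument actually uses; and your characterization of $N_{G_i}(p^i_j)$ should also observe that a vertex of $G_i$ cannot share a bag with $p^i_j$ only \emph{after} $X^{(b)}$, since contiguity would then force it to appear in $X^{(b)}$ as well.
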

\begin{proof}
Vertices $p^i_k$ and $p^i_3$ are adjacent. Besides this,
any predecessor $u$ of $p^i_k$ is a predecessor of $p^i_3$, or
it was introduced after $p^i_3$. In both cases, $u$ is adjacent to $p^i_3$ as well.
Since we are looking at $G_i$ (and not full $G$), any vertex so far introduced after $p^i_k$ is adjacent
to both $p^i_k$ and $p^i_3$. This proves the first part of the claim and the second
follows from Observation~\ref{obs:degrees}.
\end{proof}

Define again (and compatible to before) an {\em anchor-triplet} $T$
to be three vertices that are the oldest vertices of some bag $X\neq X_1$.
Note that, again, $T$ forms a triangle by maximality.  Also, $T$ again
defines a {\em cluster} consisting of all bags that contain all of $T$.
Clearly, the bags of a cluster are again consecutive.
However, in contrast to before, clusters may overlap in more than one bag.
Figure~\ref{fig:clusterCrossingEx} gives an example.

%
We say a vertex $u$ is {\em introduced by cluster $C$}
if $u$ appears in $C$, but not in $G_{\w+1}$ or in any cluster that ends 
at an earlier bag.
(This is quite similar to the concept of singletons used earlier, except
that a vertex that belongs to only one bag may now belong to multiple clusters,
and is considered to be introduced only by the cluster that ends earliest.)
Let $i(C)$ be the number of vertices introduced by a cluster $C$.

\begin{observation}
\label{obs:clusterIntroducedNumber}
\label{cor:introduced-vertex}
Let $C$ be a cluster with $T(C)=\{p_1,p_2,p_3\}$ in age-order.
Then the first bag of $C$ introduces
$p_3$, $i(C)\leq n(C)-(\w+1)$, and for any vertex $v_i$
introduced by $C$ we have $\deg_{G_i}(p_3)\leq n(C)-1$.
\end{observation}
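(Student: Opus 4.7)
The three claims will be proved in sequence, using the alternating structure of $\mathcal{P}$. Let $X_a, X_{a+1}, \ldots, X_k$ denote the consecutive bags of $C$. For Part~1, assume $a\geq 2$ (if $a=1$, then $X_a\subseteq G_{\w+1}$ and the claim is handled by convention). By alternation $X_{a-1}=X_a\setminus\{v^*\}$, where $v^*$ is the unique vertex introduced at $X_a$. Since $X_a$ is the first bag containing $T(C)$ we must have $v^*\in T(C)$; and because $v^*$ is the youngest vertex of $X_a$ while $p_1,p_2$ are older than $p_3$ within $T(C)$, we conclude $v^*=p_3$.

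For Part~2, I will exhibit an earlier cluster whose vertex set already covers all of $X_a$. Let $r^*$ be the oldest vertex of $X_a\setminus T(C)$ and define $T_{\mathrm{prev}}:=\{p_1,p_2,r^*\}$; these are the three oldest of $X_a$, so $T_{\mathrm{prev}}$ is an anchor-triplet defining a cluster $C_{\mathrm{prev}}$. Since $T_{\mathrm{prev}}\subseteq X_{a-1}\cap X_a$, both bags belong to $C_{\mathrm{prev}}$, so $X_a\subseteq V(C_{\mathrm{prev}})$. The critical step is to show that the last bag of $C_{\mathrm{prev}}$ lies strictly before $X_k$. Fix any bag $X_b\in C$ in which $T(C)$ is the three oldest (this exists by definition of $T(C)$). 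Every vertex of $X_b\setminus T(C)$ is then younger than $p_3$, so the older vertex $r^*$ cannot lie in $X_b$, meaning $r^*$ is forgotten before $X_b$, which in turn satisfies $b\leq k$. Since $p_1,p_2$ persist through all of $C$, the last bag of $C_{\mathrm{prev}}$ coincides with the last bag containing $r^*$ and is thus strictly before $X_k$. Hence $C_{\mathrm{prev}}$ ends earlier than $C$ while containing $X_a$, so none of the $\w+1$ vertices of $X_a$ is introduced by $C$, yielding $i(C)\leq n(C)-(\w+1)$.

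For Part~3, suppose $v_i$ is introduced by $C$. Because $X_a\subseteq V(C_{\mathrm{prev}})$ and $C_{\mathrm{prev}}$ ends earlier, $v_i\notin X_a$; together with $v_i\in V(C)$ this forces the introducing bag $X_{j_i}$ of $v_i$ to satisfy $a<j_i\leq k$. I then show that every neighbor $u$ of $p_3$ of age at most $i$ lies in $V(C)$. By maximality, $u$ and $p_3$ share some bag $X_j$ with $j\geq a$. If $j\leq k$, then $u\in X_j\subseteq V(C)$. If $j>k$ but $u$'s introducing bag has index at most $k$, then the consecutive range of bags containing $u$ meets $[a,k]$, so again $u\in V(C)$. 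Otherwise $u$'s introducing bag has index at least $k+2$ (as $k$ is odd), forcing the age of $u$ to be at least $(k+2\w+3)/2>(j_i+2\w+1)/2=i$, contradicting age~$\leq i$. Therefore $\deg_{G_i}(p_3)\leq |V(C)\setminus\{p_3\}|=n(C)-1$.

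The main obstacle is Part~2, namely establishing that $C_{\mathrm{prev}}$ ends \emph{strictly} before $X_k$ rather than only before the witness bag $X_b$; this relies precisely on the age argument $r^*\notin X_b$. Parts~1 and~3 then reduce to bag-index bookkeeping using the maximality of $G$.
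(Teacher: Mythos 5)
Your Parts~1 and~3 are sound and follow essentially the reasoning the paper compresses into two sentences. The gap is in Part~2: the set $T_{\mathrm{prev}}=\{p_1,p_2,r^*\}$ need not be an anchor-triplet, so $C_{\mathrm{prev}}$ need not be a cluster and the whole covering argument collapses. In the higher-pathwidth setting an anchor-triplet must be the three \emph{oldest} vertices of some bag, and $T(C)$ is only guaranteed to be the three oldest of some \emph{later} bag $X_b$ of $C$; nothing forces $p_1,p_2$ to be among the three oldest of the first bag $X_a$. Concretely, with $\w=4$ one can build an alternating decomposition in which $X_a=\{v_2,v_3,v_7,v_8,v_9\}$ introduces $v_9$, and $T(C)=\{v_7,v_8,v_9\}$ only becomes the three-oldest set at a later bag $\{v_7,v_8,v_9,v_{10}\}$, after $v_2,v_3$ are forgotten. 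Then $r^*=v_2$ and your $T_{\mathrm{prev}}=\{v_2,v_7,v_8\}$, whereas the three oldest of $X_a$ are $\{v_2,v_3,v_7\}$; moreover every bag containing both $v_2$ and $v_8$ also contains the older vertex $v_3$, so $\{v_2,v_7,v_8\}$ is never the three oldest of \emph{any} bag and anchors no cluster. Your sentence ``these are the three oldest of $X_a$'' is exactly where the proof breaks.

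The repair is small and recovers what the paper intends: let $T_0$ be the three oldest vertices of $X_a$ itself (legitimate since $X_a\neq X_1$ when $a>1$; if $a=1$ all of $X_a$ lies in $G_{\w+1}$ and you are done). Then $T_0$ is an anchor-triplet by definition, its cluster $C_0$ contains the bag $X_a$ and hence all $\w+1$ of its vertices including $p_3$, and $C_0$ ends strictly before $C$: since $|X_a|=\w+1\geq 5$ and $p_3$ is the youngest vertex of $X_a$, we have $p_3\notin T_0$, so some $t\in T_0\setminus\{p_1,p_2\}$ exists; $t$ is older than $p_3$ and therefore cannot appear in the witness bag $X_b$ whose three oldest vertices are $T(C)$, so $t$ is forgotten before $X_b$ and $C_0$ ends before $X_b\leq X_k$. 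This yields $i(C)\leq n(C)-(\w+1)$. Your ``strictly before $X_k$'' age argument via $r^*\notin X_b$ is the right instinct --- it just has to be applied to a triplet that actually anchors a cluster. Part~3 uses Part~2 only through the fact that an introduced $v_i$ has its introducing bag among the bags of $C$, so it survives this repair unchanged.
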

\begin{proof}
Vertex $p_3$ is adjacent to $\{p_1,p_2\}$ and
so the bag $X$ introducing $p_3$ contains $T(C)$.    But no earlier bag
contains $p_3$, so $X$ is the first bag of $C$.
Any vertex in $X$ appears in some earlier cluster (or in $G_{\w+1}$)
and so was not introduced by $C$.  Finally $G_i$ considers only bags
of $C$ or earlier clusters, and so any neighbour of $p_3$ in $G_i$
belongs to $C$.
\end{proof}

We can now restate the number of crossings
achieved as follows:

\begin{lemma}
\label{lem:highPWUpper}
The above drawing algorithm for a maximal graph of pathwidth $\w \geq 4$ 
produces at most the following number of crossings:
$${\w+1 \choose 4} + 
\sum_{C \in \mathcal{C}} 2(\w-1)(\w-2) \left\lfloor \frac{n(C)-3}{2}\right\rfloor  
	\left\lfloor \frac{n(C)-4}{2}\right\rfloor.$$
\end{lemma}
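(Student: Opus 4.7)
The plan is to combine the two preceding observations with a bookkeeping step that reorganizes the per-vertex crossing count into a per-cluster sum, followed by a short algebraic inequality.

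First I would account for the crossings of the initial drawing. Since $G$ is maximal, $G_{\w+1}$ is $K_{\w+1}$; drawing it with all edges as half-circles above the $x$-axis creates exactly one crossing for every 4-element subset of $\{v_1,\dots,v_{\w+1}\}$, contributing $\binom{\w+1}{4}$ crossings. For each later vertex $v_i$ with $i\geq \w+2$, Observation~\ref{obs:new-crossings} bounds the number of newly created crossings by $\tfrac{(\w-1)(\w-2)}{2}(\deg_{G_i}(p^i_3)-2)$. Hence the total number of crossings is at most
\begin{align*}
\binom{\w+1}{4} \;+\; \sum_{i=\w+2}^{n} \frac{(\w-1)(\w-2)}{2}\bigl(\deg_{G_i}(p^i_3)-2\bigr).
\end{align*}

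Next I would reorganize this sum by cluster. Each vertex $v_i$ with $i\geq \w+2$ is introduced by a cluster $C$; since by maximality the $\w$ predecessors of $v_i$ are exactly the non-$v_i$ vertices of its introducing bag, the three oldest predecessors $p^i_1,p^i_2,p^i_3$ are precisely the three oldest vertices of that bag, which by Observation~\ref{obs:clusterIntroducedNumber} is the first bag of $C$. In particular $p^i_3$ coincides with the third-oldest vertex of $T(C)$, so Observation~\ref{obs:clusterIntroducedNumber} yields $\deg_{G_i}(p^i_3)\leq n(C)-1$ and the contribution of $v_i$ is at most $\tfrac{(\w-1)(\w-2)}{2}(n(C)-3)$. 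Using $i(C)\leq n(C)-(\w+1)$ from the same observation and summing over clusters gives the bound
\begin{align*}
\binom{\w+1}{4}\;+\;\sum_{C\in\mathcal{C}}\frac{(\w-1)(\w-2)}{2}\bigl(n(C)-\w-1\bigr)\bigl(n(C)-3\bigr).
\end{align*}

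Finally, I would reduce the per-cluster summand to the target form. Since $\w\geq 4$, we have $n(C)-\w-1\leq n(C)-5$, so it suffices to show
\begin{align*}
\bigl(n(C)-5\bigr)\bigl(n(C)-3\bigr) \;\leq\; 4\left\lfloor \tfrac{n(C)-3}{2}\right\rfloor\!\left\lfloor \tfrac{n(C)-4}{2}\right\rfloor,
\end{align*}
which I would check by a case split on the parity of $n(C)$: the two sides are equal when $n(C)$ is odd, and the right-hand side equals $(n(C)-4)^2=(n(C)-5)(n(C)-3)+1$ when $n(C)$ is even (the inequality being trivial for $n(C)\leq 4$, where the right side is zero). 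Plugging this back and dropping the factor of $\tfrac12$ yields exactly the stated bound.

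The main obstacle will be the second step: one must justify that the bound from Observation~\ref{obs:clusterIntroducedNumber} applies with the ``right'' vertex, i.e., that the third-oldest predecessor $p^i_3$ used in the per-vertex crossing count is actually the third-oldest member of the anchor-triplet of the cluster that introduces $v_i$. Once one is careful that each $v_i$ is attributed to the canonical cluster whose anchor-triplet is $\{p^i_1,p^i_2,p^i_3\}$ and that this cluster indeed introduces $v_i$, the remaining arguments are immediate substitutions and a one-line parity check.
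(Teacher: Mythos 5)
Your proof follows the paper's argument essentially verbatim: the initial $K_{\w+1}$ contributes ${\w+1 \choose 4}$, Observation~\ref{obs:new-crossings} bounds each later vertex's contribution by $\frac{(\w-1)(\w-2)}{2}(\deg_{G_i}(p^i_3)-2)$, Observation~\ref{obs:clusterIntroducedNumber} converts this into a per-cluster bound via $\deg_{G_i}(p_3)\leq n(C)-1$ and $i(C)\leq n(C)-5$, and the last step is the inequality $\frac12(n(C)-3)(n(C)-5)\leq 2\lfloor\frac{n(C)-3}{2}\rfloor\lfloor\frac{n(C)-4}{2}\rfloor$, which you verify by an explicit (and correct) parity check that the paper omits. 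One local correction: your justification that the bag introducing $v_i$ ``is the first bag of $C$'' is backwards --- Observation~\ref{obs:clusterIntroducedNumber} states that the first bag of $C$ introduces $p_3$, the youngest anchor vertex, and that no vertex of that first bag is introduced by $C$, so $v_i$'s introducing bag necessarily lies strictly later within $C$. What you actually need (and what the paper also asserts without further argument) is only that the three oldest vertices of $v_i$'s introducing bag coincide with $T(C)$ for the cluster $C$ that introduces $v_i$, so that $p^i_3=p_3$ and the degree bound of the observation applies; with that identification your remaining steps go through.
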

\begin{proof}
Graph $G_{\w+1}$ contributes ${\w+1 \choose 4}$ crossings.  
Each vertex $v_i$ introduced by some cluster $C$ 
adds at most 
$\frac{(\w-1)(\w-2)}{2} (\deg_{G_i}(p^i_3)-2)$ crossings from Observation~\ref{obs:new-crossings};
observe that $p^i_3$ is the youngest vertex $p_3 \in T(C)$.  
Applying Observation~\ref{cor:introduced-vertex} and summing
over the 
$i(C) \leq n(C)-5$
vertices introduced by $C$ (Observation~\ref{obs:clusterIntroducedNumber}
and $\w\geq 4$), the number
of crossings added by $C$ is
at most
$$\frac{(\w-1)(\w-2)}{2} (n(C)-3)(n(C)-5)
\leq 2(\w-1)(\w-2) \left\lfloor \frac{n(C)-3}{2}\right\rfloor 
	\left\lfloor \frac{n(C)-4}{2}\right\rfloor.$$
\end{proof}

\subsection{Lower-bounding the crossing number}

We know that our initial graph $G_{\w+1} = K_{\w+1}$ requires at least $\Theta(\w^4)$ crossings,
see~\cite{deKlerk} for the currently best bounds. For us, the rather trivial
$\ncr(G_{\w+1}) \geq \frac{1}{5} {w+1 \choose 4}$ will suffice.

Every cluster $C$ contains $\biclique(C) := K_{3,n(C)-3}$, its \emph{cluster biclique} with $T(C)$ as one partition set, and thus needs at least
$\lfloor \frac{n(C)-3}{2}\rfloor \, 
	\lfloor \frac{n(C)-4}{2}\rfloor$
crossings in any drawing by Zarankiewicz' formula.  
However, any one crossing may belong to multiple cluster bicliques,
and so may be counted repeatedly.   


\begin{lemma}
\label{lem:clusterCrossing}
Consider a good drawing of a maximal graph $G$ of pathwidth $\w \geq 4$.  Any crossing
belongs to at most $\mu = 2\w-5$ cluster-bicliques.
\end{lemma}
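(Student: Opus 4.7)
Fix a crossing of edges $e_1=(a_1,b_1)$ and $e_2=(a_2,b_2)$; by the good-drawing hypothesis the four endpoints are distinct. For every cluster biclique $\biclique(C)=K_{3,n(C)-3}$ containing both edges, every biclique-edge runs between $T(C)$ and $V(C)\setminus T(C)$, so each of $e_1,e_2$ contributes exactly one endpoint to $T(C)$; moreover $T(C)$ cannot contain both endpoints of either edge, as that edge would then have both ends on the same side. Hence $\tau(C):=T(C)\cap\{a_1,b_1,a_2,b_2\}$ is a $2$-set with exactly one vertex from each edge, taking one of four values $\{a_1,a_2\},\{a_1,b_2\},\{b_1,a_2\},\{b_1,b_2\}$, and $T(C)=\tau(C)\cup\{z\}$ with $z\notin\{a_1,b_1,a_2,b_2\}$.

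Next I would fix a pair $\tau=\{p,q\}$ with complementary endpoints $r_1,r_2$. Writing $J_v$ for the interval of bags containing a vertex $v$, the condition $r_i\in V(C)$ becomes $J_z\cap J_p\cap J_q\cap J_{r_i}\neq\emptyset$ for $i=1,2$, and additionally $\{p,q,z\}$ must be the three oldest vertices of some bag. The central tool is the one-dimensional Helly property of intervals: whenever $r_1,r_2$ lie in a common bag of $G$ (e.g.\ whenever $(r_1,r_2)\in E(G)$), the five intervals $J_p,J_q,J_z,J_{r_1},J_{r_2}$ pairwise intersect, so they share a bag $X^*\supseteq\{p,q,z,r_1,r_2\}$. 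Since $|X^*|\leq \w+1$, this bag has only $\leq \w-3$ non-endpoint slots, constraining $z$ to few candidates. In the complementary case where $r_1,r_2$ are not adjacent, $J_z$ must bridge the two disjoint subintervals $I_{pqr_i}:=J_p\cap J_q\cap J_{r_i}$, and a parallel bag-size argument bounds the valid $z$.

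Summing per-pair bounds over the four $\tau$ naively gives $\leq 4(\w-3)$, which is too weak for $\w\geq 5$. The refinement to $2\w-5$ comes from the age-order $\alpha_1<\alpha_2<\alpha_3<\alpha_4$ of the four endpoints combined with the strict ``three-oldest'' requirement: $\{p,q,z\}$ being the three oldest of some bag ties the age-index of $z$ to the chosen pair $\{p,q\}$, so a single vertex $z$ can complete a valid anchor-triplet for only a limited subset of the four $\tau$'s. A case analysis on the position of $p,q,z,r_1,r_2$ in the age-order (which of these are among the three oldest of the witness bag, and which play the role of $z$ versus an ordinary ``extra'' vertex) collapses the four per-pair contributions into an additive rather than multiplicative total, yielding the bound $\mu=2\w-5$.

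\textbf{Main obstacle.} The principal technical difficulty lies in the final aggregation across the four $\tau$-cases. Each $\tau$ in isolation contributes up to $\Theta(\w)$ candidate third-vertices $z$, so a naive bound is $\Theta(\w^2)$; to shrink this to $2\w-5$ one must show that any single vertex $z$ (sitting in a bag with all four endpoints) can play the role of the third anchor for only a few of the four pairs $\tau$, since the age-ordering of $\{p,q,z\}$ within that bag is highly constrained by the ``three oldest'' requirement. Converting this intuition into the clean linear bound $2\w-5$ is where I expect the bulk of the case analysis to go.
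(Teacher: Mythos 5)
Your setup is sound and matches the paper's starting point: the four endpoints of a crossing are distinct, and since each cluster biclique is bipartite with $T(C)$ as the $3$-side, any cluster biclique containing the crossing satisfies $|T(C)\cap\chi|=2$ (one endpoint per edge), leaving a single third anchor vertex $z$ to control. But the proof stops exactly where it needs to start: you concede that the per-pair Helly-type count gives only $4(\w-3)$, and the step that is supposed to collapse this to $2\w-5$ is described only as ``a case analysis on the position of $p,q,z,r_1,r_2$ in the age-order'' whose content you defer to the ``main obstacle.'' That deferred step \emph{is} the lemma; without it nothing beyond $O(\w)$-per-pair has been established. There is also a technical soft spot in the per-pair count itself: the bag $X^*$ produced by the Helly argument depends on $z$, so ``$\leq \w-3$ non-endpoint slots in $X^*$'' does not by itself bound the number of admissible $z$ unless you first exhibit one fixed bag through which all the relevant clusters must pass.

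For comparison, the paper's argument hinges on two concrete mechanisms that your sketch does not identify. First, a counting fact about the width-$\w$ cluster definition: a bag $X$ can belong to at most $|X|-2$ clusters, because successive anchor-triplets contained in $X$ are obtained by repeatedly dropping the oldest anchor and adding one further vertex of $X$. Letting $X_i$ be the bag introducing the youngest endpoint $x_4$ and $X_k$ the first size-$\w$ bag forgetting some endpoint, every cluster containing the crossing whose first bag is not sandwiched strictly between them must contain a specific size-$\w$ bag ($X_{i-1}$ or $X_k$), so there are at most $\w-2$ such clusters. Second, for a cluster whose first bag $X_h$ (with $i\leq h<k$) contains all four endpoints, the rule that $T(C)$ consists of the \emph{three oldest} vertices of $X_h$ forces $T(C)\cap\chi=\{x_1,x_2\}$ — the two oldest endpoints, with no freedom in the choice of pair — and forces the third anchor $z$ to be older than $x_4$, hence to lie in $X_i\setminus\chi$, a set of size $\w-3$. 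The total $(\w-2)+(\w-3)=2\w-5$ is additive because the two cases are disjoint by the position of the cluster's first bag, not because the four pairs $\tau$ interact. Your intuition that the three-oldest rule ties $z$ to the pair is pointing in the right direction, but the correct conclusion is stronger (it eliminates three of your four $\tau$-values outright in the second case), and the first case requires the bag-to-cluster bound $|X|-2$, which does not appear in your proposal at all.
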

\begin{proof}
We want to show that in any good drawing of a maximum pathwidth-$\w$-graph
any crossing belongs to at most $\mu:=
2\w-5$
cluster bicliques.  
Let $\chi:=\{x_1,x_2,x_3,x_4\}$ in age order be the four distinct endpoints of edges involved in
a specific crossing.  For any cluster $C$ whose biclique $K(C)$ may contain this crossing, we have $\chi \subseteq V(C)$ and $|T(C)\cap \chi|=2$, 
since $K(C)$ is bipartite.  
Let $X_{i}$ be
the bag where $x_4$ (the youngest of $\chi$) is introduced.  Let $X_{k}$
be the first size-$\w$ bag where one of $\chi$ (say $x'$) has been forgotten.
We have two cases:

{\bf Case 1:} ${k<i}$, i.e., vertex $x'$ 
is forgotten before $x_4$ is introduced.   All bags containing $x'$ 
are $X_{i-2}$ or before, and all bags containing $x_4$ 
are $X_i$ or after.  Any cluster $C$ that uses $\chi$ must hence
contain $X_{i-1}$, a $\w$-sized
bag.   Observe that any bag $X$ belongs to at most
$|X|-2$ clusters since, starting with the oldest three vertices of $X$
as anchor-triplet, each next cluster containing $X$ forgets one of the 
anchor vertices and adds one other vertex of $X$ to obtain its anchor-triplet.
Hence there are at most $|X_i|-2=\w-2\leq 2\w-5$ clusters containing $\chi$.

{\bf Case 2:} $i\leq k$. All 
bags between $X_i$ and $X_{k-1}$ contain all vertices $\chi$. 
Consider a cluster $C$ that uses the crossing,
and let $X_{h}$ be the oldest bag of $C$.
Since $x_4$ must belong to $C$, we have $h\geq i$.  
%
%
%
We have two subcases:
\begin{itemize}
\item Assume first that $h\geq  k$.
Then the size-$\w$ bag $X_{k}$ belongs to $C$.  As argued above
bag $X_k$ belongs to at most $|X_k|-2$ clusters, so
there are at most $|X_{k}|-2 = \w-2$ cluster using the crossing with $h\geq k$.
\item Now assume that $h< k$, which by $h\geq i$ means that $X_{h}$ contains all
of $\chi$.  
Recall that the anchor-triangle $T(C)$ is defined to be the three
oldest vertices in $X_{h}$.  Since $\chi \subseteq X_{h}$ 
and $|T(C)\cap \chi|=2$,
it follows that neither $x_3$ nor $x_4$ can be in $T(C)$.  Therefore at
least one anchor-vertex of $C$ is older than $x_4$, which means
that $C$ starts to the left of $X_i$.    
Also, the
anchor-triangle of cluster $C$ uses one of the $\w-1$ vertices in $X_i-\{x_3,x_4\}$.
We hence have at most $\w-3$ clusters $C$ that fall into this case.
\end{itemize}
Putting the two bounds together, we have at most $2\w-5$ cliques that
use a crossing.
\end{proof}

\bigskip
We can show that this bound is tight.  Figure~\ref{fig:clusterCrossingEx}
shows an example of a path decomposition of width $\w=5$ for which the
vertex set $\chi=\{4,5,8,9\}$ belongs to $5=2\w-5$ clusters, all of which
have exactly two vertices of $\chi$ in their anchor triangle.

\begin{figure}[ht]
\hspace*{\fill}
\includegraphics[width=\linewidth]{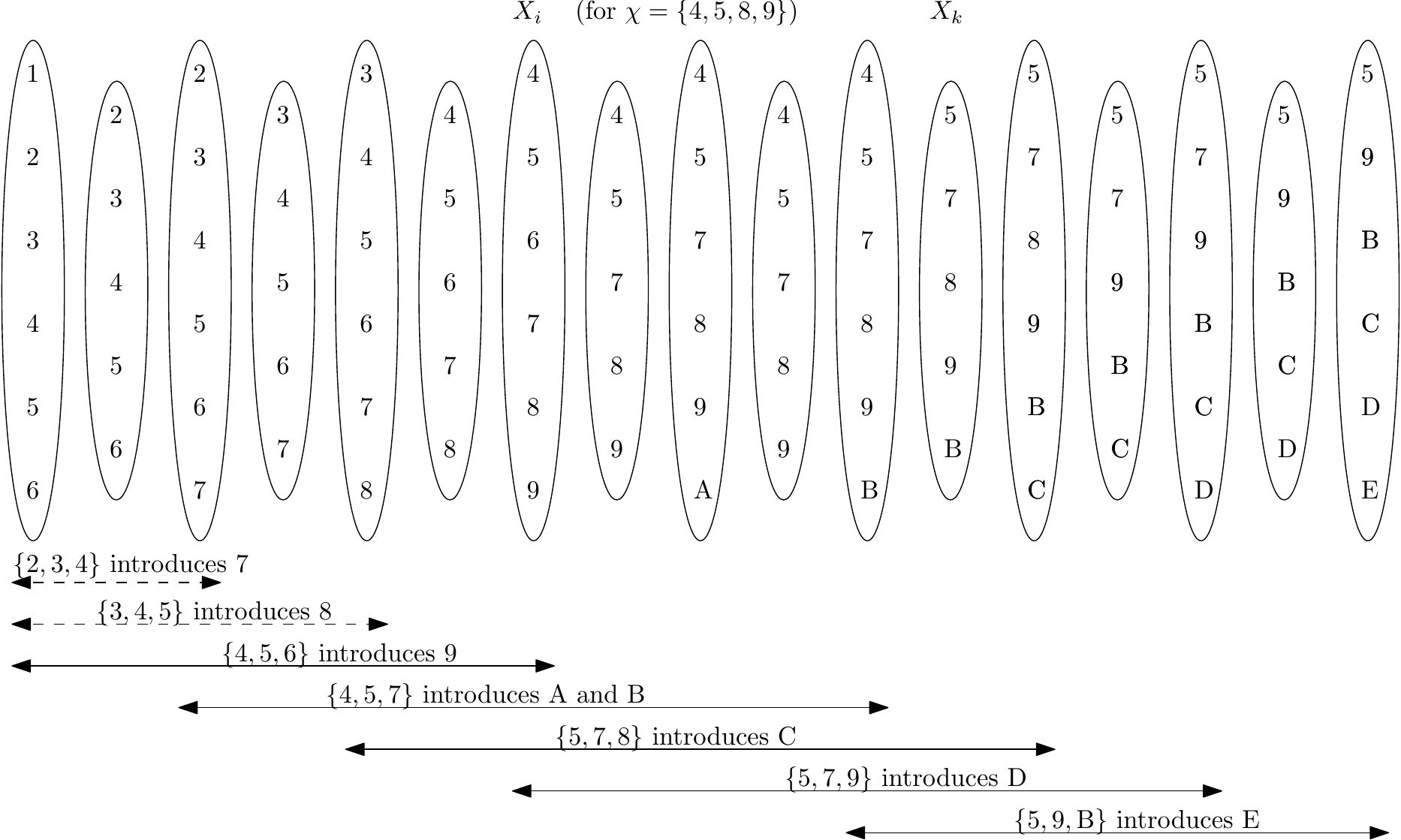}
\hspace*{\fill}
\caption{A path decomposition of width $5$ with clusters.}
\label{fig:clusterCrossingEx}
\end{figure}

\begin{corollary}
\label{cor:highPWLower}
Any good drawing of $G$ has at least the following number of crossings:
$$
\frac{1}{\mu + 1} \left(\frac{1}{5} {\w+1 \choose 4} +
\sum_{C \in \mathcal{C}} \left\lfloor \frac{n(C)-3}{2}\right\rfloor  \left\lfloor \frac{n(C)-4}{2}\right\rfloor\right)$$
\end{corollary}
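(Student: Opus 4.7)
The plan is a short double-counting argument over crossings and the subgraphs that witness them. I would use two families of witnesses: the initial clique $K_{\w+1}$ induced on $\{v_1,\dots,v_{\w+1}\}$ (which is a subgraph of $G$ by maximality), and the cluster biclique $\biclique(C)=K_{3,n(C)-3}$ of each cluster $C\in\mathcal{C}$. Any subdrawing of a good drawing of $G$ is a drawing of the corresponding subgraph, hence has at least its crossing number many crossings; using the bound $\ncr(K_{\w+1})\geq \tfrac{1}{5}\binom{\w+1}{4}$ recalled at the top of this subsection and Zarankiewicz' formula $\ncr(K_{3,m})=\lfloor m/2\rfloor\lfloor(m-1)/2\rfloor$ applied with $m=n(C)-3$, the induced drawings of these witnesses together contain (counted with multiplicity) at least
$$\tfrac{1}{5}\binom{\w+1}{4}+\sum_{C\in\mathcal{C}}\lfloor (n(C)-3)/2\rfloor\lfloor (n(C)-4)/2\rfloor$$
crossings of $G$.

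The next step is to bound the per-crossing multiplicity. In a good drawing, every crossing is determined by its four distinct endpoints (two non-adjacent edges). Such a crossing belongs to the single witness $K_{\w+1}$ at most once, and by Lemma~\ref{lem:clusterCrossing} it belongs to at most $\mu=2\w-5$ different cluster bicliques. Hence any fixed crossing of $G$ is counted at most $\mu+1$ times in the sum over all witnesses.

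Now fix a crossing-minimum good drawing $\mathcal{D}$ of $G$ with crossing set $X$, so $|X|=\ncr(G)$. For each witness $H$ let $X_H\subseteq X$ collect the crossings both of whose edges lie in $H$; then $|X_H|\ge \ncr(H)$, while the multiplicity bound gives
$$\sum_H |X_H|\;=\;\sum_{x\in X}\bigl|\{H:x\in X_H\}\bigr|\;\le\;(\mu+1)\,\ncr(G).$$
Combining the two inequalities and dividing by $\mu+1$ yields the stated lower bound.

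The only nontrivial ingredient is Lemma~\ref{lem:clusterCrossing}, which I treat as already established; once the per-crossing multiplicity $\mu+1$ is known, the corollary drops out of this double count, so I anticipate no further obstacle. The only minor care needed is to ensure that "crossing of $H$'' really means both crossing edges lie in $H$ (so that the induced subdrawing of $H$ witnesses it), which is exactly the convention under which Lemma~\ref{lem:clusterCrossing} counts cluster bicliques.
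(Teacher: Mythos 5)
Your proposal is correct and follows essentially the same route as the paper: lower-bound the crossings of the witnesses $K_{\w+1}$ and each cluster biclique $\biclique(C)$, observe via Lemma~\ref{lem:clusterCrossing} that any single crossing is counted at most $\mu$ times among the bicliques plus once for the clique, and divide by $\mu+1$. The paper's proof is just a terser statement of the same double-counting argument.
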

\begin{proof}
Graph $G_{\w+1}$ needs at least $\frac{1}{5}{\w+1 \choose 4}$ crossings.
Any cluster biclique $\biclique(C)$ needs at least 
$\lfloor \frac{n(C)-3}{2}\rfloor \lfloor \frac{n(C)-4}{2}\rfloor$
crossings.  In any drawing of $G$, crossings are
counted in at most $\mu$ bicliques, and also in $G_{\w+1}$.
\end{proof}

\todo{TB: note about using $\mu$ instead of $\mu+1$ in comments.  Ignore for SoCG,
use for journal-version.}

Combining the upper and lower bound
immediately gives the main result:

\begin{proof}[Proof of Theorem~\ref{thm:main-higherPW}]
The approximation ratio comes from combining the upper bound of Lemma~\ref{lem:highPWUpper} 
with the lower bound of Corollary~\ref{cor:highPWLower}, and the observation that $5<2(\w{-}1)(\w{-}2)$ for $\w\geq4$. 
The runtime for the decomposition
has already been argued in Theorem~\ref{thm:max3PW}; all the remaining algorithmic steps can be done in linear time as well.

It remains to argue the complexity of the grid. 
For each vertex, we add one extra (vertex-free) column just before and one just after it.
Whenever we need to route ``around'' some vertex $p^i_j$, we use
its three columns to place all necessary bends (cf. Fig.~\ref{fig:higherpw}).  Furthermore, we use one additional column for each edge
from $v_i$ to its oldest predecessor $p^i_1$.  Therefore, we
need no more than $4n$ columns for all vertices and bends.

Now subdivide each edge with a dummy-node whenever it crosses a column
without having a bend- or endpoint there.  What results is a so-called
hierarchical drawing (turned sideways). We can rearrange this easily,
column by column, so
that the height of the drawing is dominated by the column with the maximum
number of vertices, bends, or dummy-nodes.
Any of the columns used for routings to first predecessors
is crossed by at most $n$ edges, each
edge crossing twice or having two bends.  Thus these columns require a height of
at most $2n$.  Any of the other columns could be
crossed by almost all edges, but all edges are routed $x$-monotonically
within there, and hence cross any column at most once.  A
graph of pathwidth $\w$ has at most $\w n$ edges, and so the bound follows.
\end{proof}


\end{document}